\newcommand{\NP}{{\sf NP}}
\newcommand{\cP}{{\sf P}}
\newcommand{\FPT}{{\sf FPT}}
\newcommand{\XP}{{\sf XP}}
\newcommand{\W}{{\sf W}}
\newcommand{\ssi}{\subseteq_i}
\newcommand{\si}{\supseteq_i}
\newtheorem{observation}{Observation}
\newtheorem{open}{Open Problem}
\begin{document}

\title{A Survey on the Computational Complexity of\\ Colouring Graphs with Forbidden Subgraphs
\thanks{The research leading to these results has received funding from EPSRC (EP/G043434/1) and the European Research Council under the European Union's Seventh Framework Programme (FP/2007-2013) / ERC Grant Agreement no. 267959.}}
\author{Petr A. Golovach\inst{1} \and Matthew Johnson\inst{2} \and Dani\"el Paulusma\inst{2} \and Jian Song\inst{2} 
}
\institute{
Department of Informatics, Bergen University,\\ PB 7803, 5020 Bergen, Norway\\
\texttt{petr.golovach@ii.uib.no}
\and
School of Engineering and  Computing Sciences, Durham University,\\
Science Laboratories, South Road,
Durham DH1 3LE, United Kingdom
\texttt{\{matthew.johnson2,daniel.paulusma,jian.song\}@durham.ac.uk}}
\maketitle

\begin{abstract}
For a positive integer $k$, a {\it $k$-colouring} of a graph $G=(V,E)$ is a mapping $c: V\rightarrow\{1,2,\ldots,k\}$ such that $c(u)\neq c(v)$ whenever $uv\in E$.  
The {\sc Colouring} problem is to decide, for a given $G$ and $k$, whether a $k$-colouring of $G$ exists.  If $k$ is fixed (that is, it is not part of the input), we have the decision problem $k$-{\sc Colouring} instead. 
We survey known results on the computational complexity of {\sc Colouring} and $k$-{\sc Colouring} for graph classes that are characterized by one or two forbidden induced subgraphs.  We also consider a number of variants: for example, where the problem is to extend a partial colouring, or where lists of permissible colours are given for each vertex.
Finally, we also survey results for graph classes defined by some other forbidden pattern.
\end{abstract}

\section{Introduction}\label{sec:intro}
To colour a graph is to label its vertices so that no two adjacent vertices have the same label.  We call the labels {\it colours}.  In a graph colouring problem one typically seeks to colour a graph using as few colours as possible, or perhaps simply to decide whether a given number of colours is sufficient.
Graph colouring problems are central to the study of both structural and algorithmic graph theory and have very many theoretical and practical applications.   Many variants and generalizations 
of the concept have been investigated, and there are some excellent 
surveys~\cite{Al93,KTV99,RS04b,Tu97} and a book~\cite{JT95} on the subject. 
We survey {\it computational complexity} results 
of graph colouring problems 
(for a short survey see~\cite{C14}).

As we will note in the following subsection, the complexity of many graph colouring problems is fully understood when the possible input is any graph, and it is therefore natural to study the complexity of problems where the input is restricted.  For example, one well-known result for graph colouring is due to Gr\"otschel, Lov\'asz, and Schrijver~\cite{GLS84} who have shown
that the problem of whether a \emph{perfect} graph can be coloured with at most $k$ colours for a given integer $k$ is polynomial-time solvable; in contrast, the problem for general graphs is \NP-complete~\cite{Ka72}.   

Perfect graphs are an example of a graph class that is closed under vertex deletion, and, like all such graph classes, can be characterized by a family of forbidden induced subgraphs (an infinite family in the case of perfect graphs).  In recent years, colouring problems for classes with forbidden-induced-subgraph characterizations have been extensively studied, and this survey is a response to the need for these results to be collected together.  In fact, such a task is beyond the scope of a single paper and so our aim here is to report on the computational complexity of graph colouring problems for graph classes characterized by the absence of {\it one or two} forbidden induced subgraphs 
(for a survey on computational complexity  results and open problems for colouring graphs characterized by more than two forbidden induced subgraphs or for which some graph parameter is bounded, see~\cite{Pa15}).

\subsection{Graph Colouring Problems}

We consider finite undirected graphs with no multiple edges and no self-loops.  That is, a graph
$G$ is an ordered pair $(V,E)$ that consists of a finite set $V$ of elements called {\it vertices} and a finite set $E$ of unordered pairs of members of $V$ called {\it edges}.    
 The sets $V$ and $E$ are called the {\it vertex set} and {\it edge set} of $G$, respectively, and an edge containing $u$ and $v$ is denoted $uv$.
The vertex and edge sets of a graph $G$ can also always be referred to as $V(G)$ and $E(G)$, and, when there is no possible ambiguity, we shall not always be careful in distinguishing between a graph and its vertex or edge set; that is, for example, we will write that a vertex belongs to a graph (rather than to the vertex set of the graph).
A graph $G'=(V',E')$ is a {\em subgraph} of $G$ (and $G$ is a {\em supergraph} of $G'$)
if $V'\subseteq V$ and $E'\subseteq E$; 
we say that $G'$ is a {\em proper} subgraph of $G$ if
$G'$ is a subgraph of $G$ and $G'\neq G$.

A {\em colouring} of a graph $G=(V,E)$ is a mapping $c: V\rightarrow\{1,2,\ldots \}$ such that $c(u)\neq c(v)$ whenever
$uv\in E$. We call $c(u)$  the {\it colour} of $u$. 
We let $c(U)=\{c(u)\; |\; u\in U\}$ for $U\subseteq V$.
If $c(V)\subseteq \{1,\ldots,k\}$, then $c$ is also called a {\em $k$-colouring} of $G$.
For a colour $c$, the set of all vertices of~$G$ with colour $c$ forms a 
\emph{colour class}.
We say that $G$ is $k$-colourable if a $k$-colouring exists, and the {\it chromatic number} of $G$ is the  smallest integer $k$ for which $G$ is $k$-colourable and is denoted $\chi(G)$.
A graph $G$ is \emph{$k$-vertex-critical} if $\chi(G)=k$ and $\chi(G')\leq k-1$ for any subgraph $G'$ of $G$ obtained by deleting a vertex.

We shall define a number of decision problems.

\bigskip
\noindent {\bf Colouring Problems}

\bigskip
\noindent \textsc{Colouring}\\
\mbox{}\rlap{\textit{Instance}\,: }\hphantom{\textit{Question}\,: }A graph $G$ and a positive integer $k$.\\
\textit{Question}\,: Is~$G$ $k$-colourable?

\bigskip\noindent
If $k$ is {\it fixed}, that is, not part of the input, then we have the following problem.

\bigskip
\noindent \textsc{$k$-Colouring}\\
\mbox{}\rlap{\textit{Instance}\,: }\hphantom{\textit{Question}\,: }A graph $G$.\\
\textit{Question}\,: Is~$G$ $k$-colourable?

\bigskip

\noindent {\bf Precolouring Extension Problems}

\bigskip

\noindent
A {\it $k$-precolouring} of a graph $G=(V,E)$ is a mapping $c_W :W\rightarrow\{1,2,\ldots k\}$ for some subset $W\subseteq V$. 
A $k$-colouring $c$ of $G$ is an extension of a $k$-precolouring $c_W$ of $G$ if $c(v)=c_W(v)$ for each $v \in W$.

\bigskip
\noindent \textsc{Precolouring Extension}\\
\mbox{}\rlap{\textit{Instance}\,: }\hphantom{\textit{Question}\,: }A graph $G$, a positive integer $k$ and a $k$-precolouring $c_W$ of $G$.\\
\textit{Question}\,: Can $c_W$ be extended to a $k$-colouring of $G$?

\bigskip\noindent
\noindent \textsc{$k$-Precolouring Extension}\\
\mbox{}\rlap{\textit{Instance}\,: }\hphantom{\textit{Question}\,: }A graph $G$ and a $k$-precolouring $c_W$ of $G$.\\
\textit{Question}\,: Can $c_W$ be extended to a $k$-colouring of $G$?

\bigskip

\noindent {\bf List Colouring Problems}

\bigskip
 
 \noindent
A {\it list assignment} of a graph $G=(V,E)$ is a function $L$ with domain $V$ such that  
for each vertex $u\in V$, $L(u)$ is a subset of $\{1, 2, \dots \}$. We refer to this set as the {\it list} of {\it admissible} colours for $u$.
If $L(u)\subseteq \{1,\ldots,k\}$ for each $u\in V$, then ${L}$ is also called a \emph{$k$-list assignment}.
 The {\it size} of a list assignment ${L}$ is the maximum list size $|L(u)|$ over all vertices $u\in V$.
A colouring $c$ 
{\it respects} ${L}$ if  $c(u)\in L(u)$ for all $u\in V$.
There are three decision problems as we can fix either the number of colours or the size of the list assignment. 

\bigskip
\noindent \textsc{List Colouring}\\
\mbox{}\rlap{\textit{Instance}\,: }\hphantom{\textit{Question}\,: }A graph $G$ and a list assignment $L$ for $G$.\\
\textit{Question}\,: Is there a colouring of $G$ that respects $L$?

\bigskip\noindent
\noindent \textsc{$\ell$-List Colouring}\\
\mbox{}\rlap{\textit{Instance}\,: }\hphantom{\textit{Question}\,: }A graph $G$ and a list assignment $L$ for $G$ of size at most $\ell$.\\
\textit{Question}\,: Is there a colouring of $G$ that respects $L$?

\bigskip\noindent
\noindent \textsc{List $k$-Colouring}\\
\mbox{}\rlap{\textit{Instance}\,: }\hphantom{\textit{Question}\,: }A graph $G$ and a $k$-list assignment $L$ for $G$.\\
\textit{Question}\,: Is there a colouring of $G$ that respects $L$?

\bigskip

Note that $k$-{\sc Colouring} can be viewed as a special case of $k$-{\sc Precolouring Extension} by choosing $W=\emptyset$, and that 
$k$-{\sc Precolouring Extension} can be viewed
as a special case of {\sc List $k$-Colouring} by choosing $L(u)=\{c_W(u)\}$ if $u\in W$ and $L(u)=\{1,\ldots,k\}$ if $u\in V\setminus W$.
Also {\sc List $k$-Colouring} can be readily seen to be a special case of {\sc $k$-List Colouring}, since if each list is a subset of $\{1, \ldots, k\}$, then the size of the list assignment is certainly at most $k$. 
Similarly, from our definitions, we see that it follows that, whenever $\ell_1\leq \ell_2$, $\ell_1$-{\sc List Colouring} is a special case of $\ell_2$-{\sc List Colouring}, 
and that whenever $k_1\leq k_2$,
{\sc List $k_1$-Colouring} is a special case of {\sc List $k_2$-Colouring}. 
 In Figure~\ref{f-col} we display all these relationships, which are implicitly assumed throughout the survey.
Having this figure in mind we can say that \NP-completeness results propagate upwards and polynomial time solvability results propagate downwards.
 Note that the relationships displayed in Figure~\ref{f-col} 
 remain valid even if we restrict
our attention  to special graph classes --- that is, if each of the problems accepts as input only certain graphs. 

Contrary to the list colouring variants,  when $\ell\geq k$, {\sc $k$-colouring} is not a special case of {\sc $\ell$-colouring}.
This is not only clear from its definition (the input consists of the graph only) but can also be illustrated by considering special graph classes. 
For example, {\sc 3-Colouring} is \NP-complete for planar graphs~\cite{GJS74}, whereas {\sc 4-Colouring} is polynomial time solvable for these graphs (since, of course, they are all 4-colourable)~\cite{AH89}.  Similarly,  {\sc $k$-Precolouring Extension} is not a special case of {\sc $\ell$-Precolouring Extension}.

\begin{figure}[ht]
\centering\scalebox{0.7}{\input{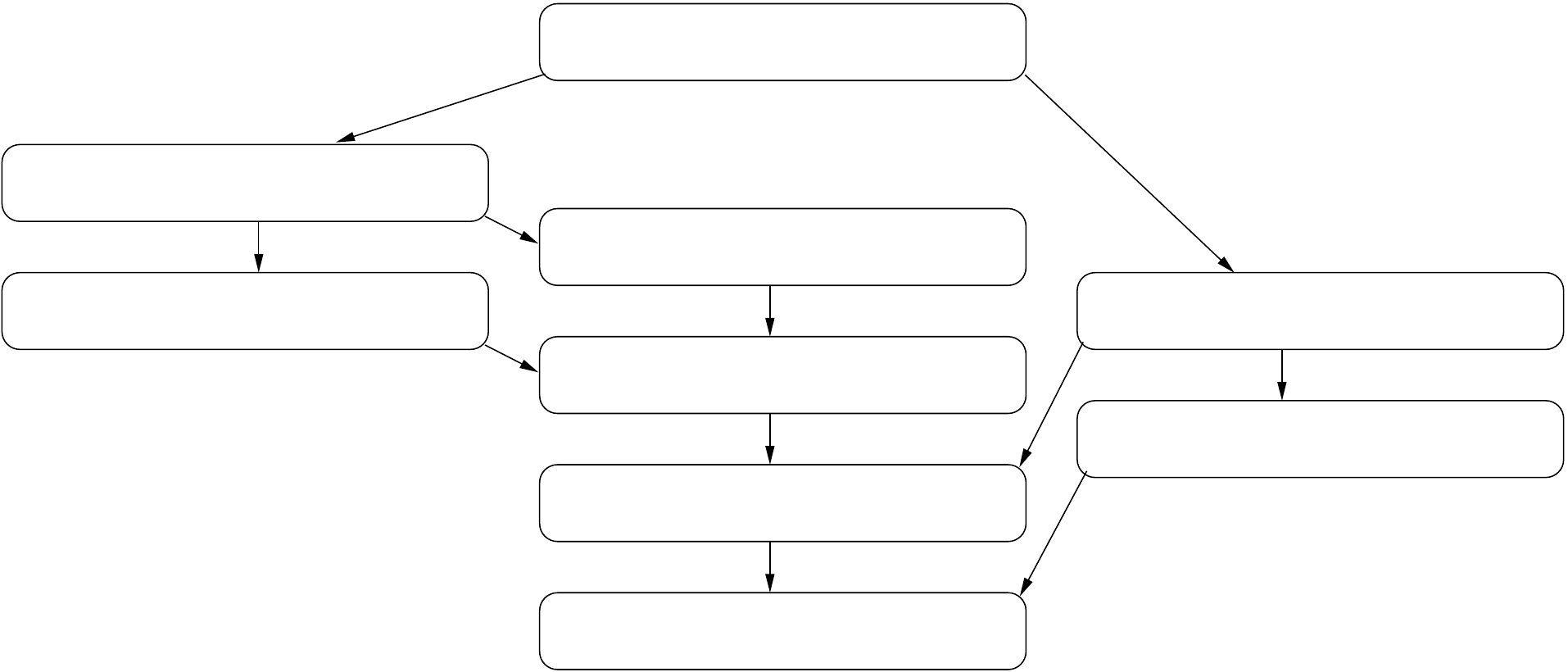_t}}
\caption{Relationships between {\sc Colouring} and its variants. An arrow from one problem to another indicates that the latter is a special case of the former;  $k$ and $\ell$ are any two integers for which $\ell\geq k$. 
\label{f-col}}
\end{figure}

There is one further type of problem.

\medskip
\noindent {\bf Choosability Problems}

\bigskip

\noindent A graph $G=(V,E)$ is $\ell$-{\em choosable} if, for every list assignment ${L}$ of $G$ with $|L(u)|=\ell$ for all $u\in V$,
there exists a colouring that respects ${L}$.  

\bigskip\noindent
\noindent \textsc{Choosability}\\
\mbox{}\rlap{\textit{Instance}\,: }\hphantom{\textit{Question}\,: }A graph $G=(V,E)$ and a positive integer $\ell$.\\
\textit{Question}\,:  Is $G$ $\ell$-choosable?

\bigskip\noindent
\noindent \textsc{$\ell$-Choosability}\\
\mbox{}\rlap{\textit{Instance}\,: }\hphantom{\textit{Question}\,: }A graph $G=(V,E)$.\\
\textit{Question}\,:  Is $G$ $\ell$-choosable?

\bigskip\noindent
Theorem~\ref{t-general} describes the computational complexity of the  problems we have introduced on general graphs.
Here, $\Uppi_2^p$ is a complexity class in the polynomial hierarchy containing both \NP\ and co\NP; see for example the book of Garey and Johnson~\cite{GJ79} for its exact definition. 

\begin{theorem}\label{t-general}
The following two statements hold for general graphs.
\begin{itemize}
\item [(i)]  The problems $k$-{\sc Colouring}, $k$-{\sc Precolouring Extension}, {\sc List $k$-Colouring} and $k$-{\sc List Colouring} are
polynomial-time solvable if $k\leq 2$ and \NP-complete
if $k\geq 3$.\\[-8pt]
\item [(ii)] $\ell$-{\sc Choosability} is  polynomial-time solvable if $\ell\leq 2$ and $\Uppi_2^p$-complete if $\ell\geq 3$.
\end{itemize}
\end{theorem}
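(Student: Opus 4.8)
The plan is to treat parts (i) and (ii) separately and, within each, to use the special-case relationships of Figure~\ref{f-col}: it then suffices to prove polynomial-time solvability only for the most general problem of a group and hardness only for the most restricted one. For part~(i) with $k\leq 2$, the case $k=1$ is trivial, so let $k=2$; by Figure~\ref{f-col} it is enough to place $2$-{\sc List Colouring} in \cP, and I would reduce it to $2$-{\sc Satisfiability} by introducing a Boolean variable $x_v$ for each vertex $v$ that selects one of the (at most two) colours of $L(v)$, outputting a fixed no-instance whenever some $L(v)=\emptyset$, adding a unit clause when $|L(v)|=1$, and adding for every edge $uv$ and every common colour $a\in L(u)\cap L(v)$ the binary clause expressing that $u$ and $v$ do not both receive $a$; the resulting $2$-CNF formula is satisfiable exactly when $G$ has a colouring respecting $L$. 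For $k\geq 3$, membership in \NP\ is routine (guess and check a colouring, the list versions needing only the polynomially many colours occurring in the lists), and for hardness I would start from the classical \NP-completeness of $3$-{\sc Colouring}~\cite{Ka72}, reduce it to $k$-{\sc Colouring} by adding $k-3$ pairwise-adjacent vertices each joined to all original vertices, and then propagate \NP-hardness upwards along Figure~\ref{f-col} to $k$-{\sc Precolouring Extension}, {\sc List $k$-Colouring} and $k$-{\sc List Colouring}.

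For the positive side of part~(ii): $G$ is $1$-choosable if and only if it has no edges, and for $\ell=2$ I would invoke the characterization of Erd\H{o}s, Rubin and Taylor, namely that a connected graph is $2$-choosable if and only if the graph obtained from it by repeatedly deleting vertices of degree one is $K_1$, an even cycle, or a theta graph $\Theta_{2,2,2m}$ for some $m\geq 1$; this is checkable in polynomial time per component. Membership of $\ell$-{\sc Choosability} in $\Uppi_2^p$ comes from its $\forall\exists$ shape: $G$ is $\ell$-choosable precisely when for every list assignment $L$ with all lists of size $\ell$ there is a colouring respecting it, and since renaming colours changes nothing we may restrict the universal quantifier to assignments whose colours lie in a fixed set of size $\ell|V(G)|$, so that it ranges over polynomial-size objects while the inner part is an \NP-predicate.

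The main obstacle is the $\Uppi_2^p$-hardness of $\ell$-{\sc Choosability} for $\ell\geq 3$. The crucial case is $\ell=3$: here the plan is to reduce from a canonical $\Uppi_2^p$-complete problem such as $\forall\exists$-{\sc 3-Satisfiability}, constructing from a quantified formula a graph together with a fixed pattern of three-element lists so that the universal quantifier over list assignments simulates the universal Boolean quantifier and the freedom to choose a respecting colouring simulates the existential one; this is an intricate gadget construction, and in a survey I would cite it rather than reproduce it. To obtain hardness for every $\ell\geq 3$, the plan is to reduce $3$-{\sc Choosability} to $\ell$-{\sc Choosability} by adjoining to the graph a clique on $\ell-3$ new vertices joined to everything, equipped with a small gadget that forces these vertices to use up $\ell-3$ distinguished colours in any list-colouring, so that the original graph effectively only ever sees lists of size three; arranging this gadget so that the reduction is correct in both directions is the one genuinely delicate point of the argument.
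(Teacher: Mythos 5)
Your overall strategy is the same as the paper's: prove tractability only for the topmost problem in Figure~\ref{f-col} ($2$-{\sc List Colouring}, which the paper attributes to Erd\H{o}s, Rubin and Taylor and to Vizing, and which your $2$-{\sc Satisfiability} encoding correctly implements), prove hardness only for the bottommost one ($k$-{\sc Colouring}, via the classical \NP-completeness of $3$-{\sc Colouring} plus a dominating clique of size $k-3$), and let the special-case relationships propagate the results in both directions; part~(ii) is then handled by citation to Erd\H{o}s, Rubin and Taylor, exactly as in the paper. All of that is fine.

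The one place you genuinely depart from the paper is also the one place with a real problem: the proposed lifting of $\Uppi_2^p$-hardness of {\sc Choosability} from $\ell=3$ to $\ell\geq 4$ by adjoining a dominating clique on $\ell-3$ vertices ``equipped with a small gadget that forces these vertices to use up $\ell-3$ distinguished colours in any list-colouring.'' In the choosability setting the adversary chooses every list, so there are no distinguished colours to force: a clique vertex with a list of size $\ell$ can never be pinned to a single colour by local structure alone, the lists of the clique vertices may overlap arbitrarily with those of the original vertices, and in the backward direction a bad $\ell$-list assignment of the padded graph does not obviously yield a bad $3$-list assignment of the original graph (after removing the colours used on the clique, the residual lists can have any size between $3$ and $\ell$, and the clique's colouring is not unique). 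This padding trick works for {\sc Colouring} and for {\sc Precolouring Extension}, but for {\sc Choosability} it is not a routine corollary and you have not supplied the gadget that would make it one. The gap is avoidable rather than fatal: the result of Erd\H{o}s, Rubin and Taylor that the paper cites establishes $\Uppi_2^p$-completeness directly for every $\ell\geq 3$, so you should cite that (or the later constructions of Gutner and of Gutner and Tarsi) for all $\ell\geq 3$ at once instead of attempting to bootstrap from the case $\ell=3$.
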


\begin{proof}
Lov\'asz~\cite{Lo73} showed that $3$-{\sc Colouring} 
is \NP-complete;
a straightforward reduction from $3$-{\sc Colouring} shows that $k$-{\sc Colouring} is \NP-complete for all $k\geq 4$.
Erd\"os,  Rubin and Taylor~\cite{ERT79} and Vizing~\cite{Vi79} observed that $2$-{\sc List Colouring} is polynomial-time solvable on general 
graphs. Then (i) follows from the relationships displayed in Figure~\ref{f-col}.
Erd\"os, Rubin and Taylor~\cite{ERT79} proved~(ii).\qed
\end{proof}
When considering Theorem~\ref{t-general}, a natural question to ask is whether further tractable cases can be found if restrictions are placed on the input graphs.  This survey reports progress on finding answers to this question.

\subsection{Notation and Terminology}\label{s-known}

We define the graph classes considered in this survey and other notation and terminology. We refer to the textbook of Diestel~\cite{Di05} for any undefined terms.

Let $G=(V,E)$ be a graph.
For a subset $S\subseteq V$, let $G[S]$ denote the {\it induced} subgraph of $G$ that has vertex set~$S$ and edge set $\{uv\in E(G)\; |\; u,v\in S\}$. 
For a subset $S\subseteq V$, we write $G-S=G[V\setminus S]$, and for a vertex $v\in V$, we use $G-v=G-\{v\}$.
For a graph $F$, we write $F\subseteq G$ and $F\ssi G$ to denote that $F$ is a subgraph or an  induced subgraph of $G$, respectively.
For two graphs $G$ and $H$, a vertex mapping $f:V(G)\to V(H)$ is called a ({\it graph}) {\it isomorphism} when $uv\in E(G)$ if and only if $f(u)f(v)\in E(H)$, and we say that $G$ and $H$ are {\it isomorphic} whenever such a mapping exists.
Let $G$ be a graph and $\{H_1,\ldots,H_p\}$ be a set of graphs.  Then
$G$ is {\it $(H_1,\ldots,H_p)$-free} if $G$ has no \emph{induced} subgraph isomorphic to a graph in $\{H_1,\ldots,H_p\}$.
And $G$ is {\it strongly $(H_1,\ldots,H_p)$-free} if $G$ has no subgraph isomorphic to a graph in $\{H_1,\ldots,H_p\}$.   If $p=1$, we can simply write that $G$ is (strongly) $H_1$-free (rather than (strongly) $(H_1)$-free).
 
\begin{observation}\label{o-in}
If a graph $H'$ is an induced subgraph of a graph $H$, then  every $H'$-free graph is $H$-free.
If $H'$ is a subgraph of $H$, then every strongly $H'$-free graph is strongly $H$-free.
\end{observation}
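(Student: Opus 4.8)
The plan is to prove each of the two statements by contraposition, in each case reducing the claim to the (essentially trivial) transitivity of the relevant containment relation up to isomorphism.

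For the first statement, I would assume $H'\ssi H$ and take a graph $G$ that is \emph{not} $H$-free, aiming to show that $G$ is not $H'$-free either. By definition there is a set $S\subseteq V(G)$ and an isomorphism $f$ from $H$ to $G[S]$. Since $H'\ssi H$, there is a set $T\subseteq V(H)$ with $H[T]$ isomorphic to $H'$. Restricting $f$ to $T$ gives a bijection onto $f(T)\subseteq S$; because $f$ is an isomorphism between $H$ and $G[S]$ it preserves both adjacency and non-adjacency, so $G[f(T)]$ is isomorphic to $H[T]$, and hence to $H'$. Thus $H'\ssi G$, so $G$ is not $H'$-free, and the contrapositive is exactly the desired implication.

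For the second statement I would run the same argument with ``induced subgraph'' replaced by ``subgraph'' everywhere; this version is in fact a shade easier because non-edges need not be tracked. Assuming $H'\subseteq H$, let $G$ have a subgraph isomorphic to $H$, witnessed by an isomorphism $g$ from $H$ onto a subgraph $G'$ of $G$. Since $H'\subseteq H$ there is an injection $\iota\colon V(H')\to V(H)$ carrying every edge of $H'$ to an edge of $H$; then $g\circ\iota$ is an injection from $V(H')$ into $V(G)$ carrying every edge of $H'$ to an edge of $G'$, hence of $G$. So $G$ has a subgraph isomorphic to $H'$ and is not strongly $H'$-free; contraposition gives that every strongly $H'$-free graph is strongly $H$-free.

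I do not expect any real obstacle: both parts amount to the observation that a composition of two (induced-)subgraph embeddings is again a (induced-)subgraph embedding. The only point needing even momentary attention is, in the first part, verifying that composing two \emph{induced}-subgraph embeddings preserves non-adjacency as well as adjacency, so that the resulting copy of $H'$ inside $G$ is genuinely induced rather than merely a subgraph; this is immediate since each of the two embeddings has that property.
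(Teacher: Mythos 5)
Your proof is correct, and it spells out exactly the transitivity-of-embeddings argument that the paper treats as immediate (the observation is stated without proof there). Both parts are handled properly, including the check that the composed embedding in the induced case preserves non-adjacency.
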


The {\it complement} of a graph $G$ is denoted $\overline{G}$ and has the same vertex set as $G$ and an edge between two distinct vertices 
if and only if these vertices are not adjacent in $G$.
The {\it union} of two graphs $G$ and $H$ is the graph with vertex set $V(G)\cup V(H)$ and edge set $E(G)\cup E(H)$. If $V(G)\cap V(H)=\emptyset$, then we call the union of $G$ and $H$ the {\it disjoint union} of $G$ and $H$ and denote it $G+H$.
We denote the disjoint union of $r$ copies of $G$ by $rG$.

For a graph $G$, the {\em degree} $\deg_G(u)$ of a vertex $u$ in $G$ is the number
of edges incident with it, or equivalently the size of its {\it neighbourhood} $N_G(u)=\{v\in V\; |\; uv\}$.
A vertex $u$ that is adjacent to all other vertices of $G$ is called a \emph{dominating} vertex of $G$.
The {\it minimum degree} of $G$  is the smallest  degree of a vertex in $G$, and the {\it maximum degree} of $G$, denoted by $\Delta(G)$,  is the largest degree of a vertex in $G$. 
If every vertex in $G$ has degree $p$,  then $G$ is said to be {\it $p$-regular} (or sometimes just regular).

\begin{figure}[ht]
\centering\scalebox{0.65}{\input{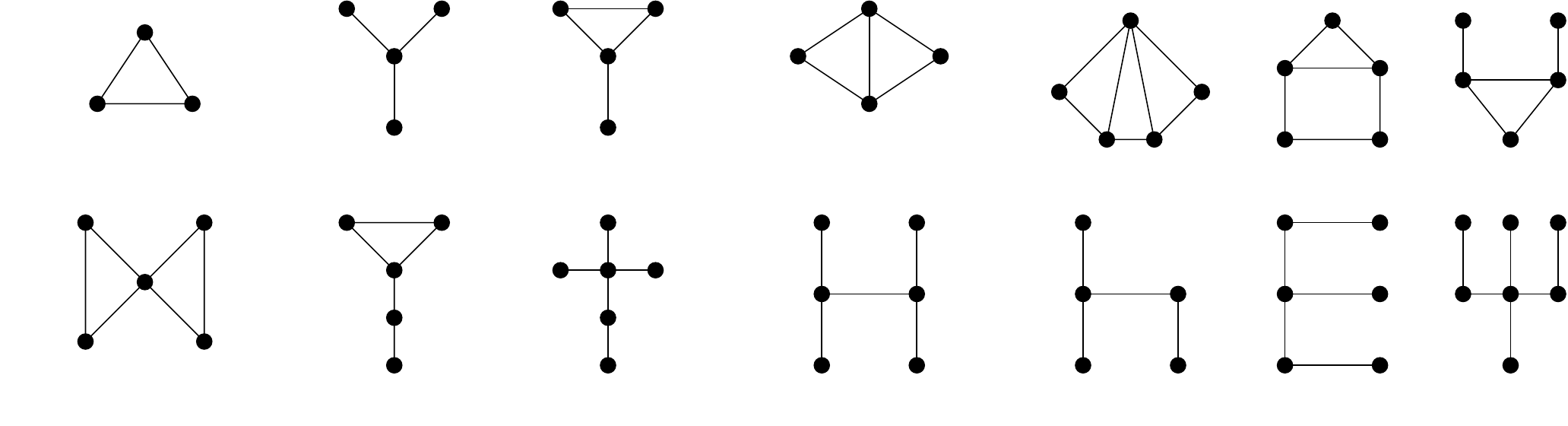_t}}
\caption{A number of small graphs with special names that we use throughout the survey.  Also indicated are notations that will be defined in later sections. 
\label{f-small}}
\end{figure}

For $n\geq 1$,  the {\it complete graph} $K_n$ is a graph on $n$ vertices in which each pair of distinct vertices is joined by an edge.  For a graph $G$, a subgraph isomorphic to a complete graph is called a \emph{clique}, and the {\it clique number} of $G$ is the size of its largest clique and is denoted~$\omega(G)$.

For $n\geq 1$, the graph with vertices $\{u_1,\ldots,u_n\}$ and edges $\{u_1u_2, u_2u_3, \ldots, u_{n-1}u_n\}$ is called a \emph{path} and is denoted $P_n$.  For $n\geq3$, the graph obtained from $P_n$ by adding the edge $u_1u_n$, is called a cycle and is denoted $C_n$. 
The {\it length} of a path or cycle is its number of edges. 
The {\em end-vertices} of a path are the vertices of degree 1 (we will also refer to the vertices that comprise an edge as its end-vertices).
The graph $C_3=K_3$ is also called a {\it triangle} (see Figure~\ref{f-small}), 
and a $C_3$-free graph is also called {\it triangle-free}.   A $P_4$-free graph is also called a {\it cograph}.
Notice that $rP_1$ denotes an {\it independent set} on $r$ vertices. 

Let $G=(V,E)$ be a graph.
The {\it girth} of  $G$ is the length of a shortest cycle in~$G$ or infinite if~$G$ has no cycle. 
Note that a graph has girth at least~$g$ for some integer $g\geq 4$ if and only if it is $(C_3,\ldots,C_{g-1})$-free.
We say that $G$ is  {\em connected} if there is a path between every pair of distinct vertices; otherwise it is called \emph{disconnected}.
A vertex $u\in V$ is a \emph{cut vertex} if $G$ is connected and~$G-u$ is disconnected.
If $G$ is connected and has no cut vertices, it is \emph{$2$-connected}. 
A maximal connected subgraph of $G$ is called a {\em connected component}.
A graph is a {\it tree} if it is connected and $(C_3,C_4,\ldots)$-free. A graph is a {\it forest} if each of its connected components is a tree.
A graph is a {\it linear forest} if each of its connected components is a path.

A graph is {\it bipartite} if its vertex set can be partitioned into two sets
such that every edge has one end-vertex in each set.
For $r\geq 1$, $s\geq 1$, the {\it complete bipartite graph} $K_{r,s}$ is a bipartite graph whose vertex set can be partitioned into two sets of sizes $r$ and $s$ such that there is an edge joining each pair of vertices from distinct sets.
For $r\geq 1$, the graph $K_{1,r}$ is also called a {\it star}.
The graph~$K_{1,3}$ is also called a {\it claw}
(see Figure~\ref{f-small}), 
and a $K_{1,3}$-free graph is called {\it claw-free}.
A graph is a {\it complete multipartite} graph if  the vertex set can be partitioned so that there is an edge joining every pair of vertices from distinct sets of the partition and no edge joining vertices in the same set.

A graph~$G$ is {\em perfect} if, for every induced subgraph $H\ssi G$, the chromatic number of~$H$ equals its clique number.
The {\it line graph} of a graph $G=(V,E)$ has vertex set $E$ and $x, y \in E$ are adjacent as vertices in the line graph 
if and only if
they are adjacent as edges in $G$; that is, 
if they share an end-vertex in $G$.
A graph is \emph{planar} if it can be drawn in the plane so that its edges intersect only at their end-vertices. 
A graph is a {\it split} graph if its vertices can be partitioned into two sets that induce a clique and an independent set; if every vertex in the independent set is adjacent to every vertex in the clique, then it is a {\it complete} split graph.
A number of small graphs that have special names are shown in Figure~\ref{f-small}. 

A \emph{tree decomposition} of a graph $G$ is a tree $T$
where the elements of $V(T)$ (called \emph{nodes}) are subsets of $V(G)$ such that the following three conditions are satisfied:
\begin{itemize} 
\item for each vertex $v\in V(G)$, there is 
at least one node $X\in V(T)$ with $v\in X$; \item for each edge $uv\in E(G)$, there is a node $X\in V(T)$ with $\{u,v\}\subseteq X$; 
\item for each vertex $v\in V(G)$, the set of nodes $\{X \mid v\in X\}$ 
induces a connected subtree of $T$.  
\end{itemize}
If $X$ is the largest node in a tree decomposition, then the \emph{width} of the decomposition is $|X|-1$. The {\it treewidth} of $G$ is the minimum width over all possible tree decompositions of $G$.
If a tree decomposition $T$ is a path, then it is a \emph{path decomposition}. The {\it pathwidth} of $G$ is the minimum width over all possible path decompositions of~$G$.

The graph parameter \emph{clique-width} is defined by considering how to construct graphs in which each vertex has a label.  Four operations are permitted:
\begin{itemize}
\item create a graph with one (labelled) vertex;
\item combine two labelled graphs by taking their disjoint union;
\item in a labelled graph, for two labels $i$ and 
$j$ with $i\neq j$, join by an edge each vertex with label $i$ to each vertex with label~$j$; 
\item in a labelled graph, for two labels $i$ and $j$, change every instance of label $i$ to $j$.
\end{itemize}
The {\em clique-width} of  $G$ is the minimum  number of labels needed to
construct $G$ (with some labelling) using these operations.
A description of how $G$ is constructed using these operations is called a \emph{$q$-expression} if $q$ is the number of labels used
(so the clique-width of $G$ is the minimum $q$ for which $G$ has a $q$-expression).
We say that  a class of graphs ${\cal G}$ has \emph{bounded} clique-width (or bounded treewidth) if 
there is a constant $p$ such that the clique-width (or treewidth) of every graph in ${\cal G}$ is at most $p$. 

Let $G=(V,E)$ be a graph.
The \emph{contraction} of an edge $uv\in E$ removes $u$ and $v$ from $G$, and adds a new vertex $w$ and edges such that the neighbourhood of $w$ is the union of the neighbourhoods of $u$ and $v$. 
Note that, by definition, edge contractions create neither self-loops nor multiple edges.
Let $u\in V$ be a vertex of degree 2 whose neighbours $v$ and $w$  are not adjacent.
The \emph{vertex dissolution} of $u$ removes $u$ and adds the edge $vw$.
The ``dual'' operation of a vertex dissolution is  \emph{edge subdivision}, which replaces an edge $vw$  by a new vertex $u$ and edges $uv$ and $uw$. 
We say that $G$ contains another graph $H$ as a {\it minor}  if $G$ can be modified into $H$ by 
a sequence that consists of edge contractions, edge deletions  and vertex deletions.
And $G$ contains $H$ as a {\it topological minor} if $G$ can be modified into $H$ by 
a sequence that consists of vertex dissolutions, edge deletions and vertex deletions.

\section{Results and Open Problems for $H$-Free Graphs}\label{s-hfree}

In this section we consider graph classes characterized by one forbidden induced subgraph; we refer to the collection of all such graph classes as $H$-free graphs.
In Section~\ref{ss-col} we consider Colouring, Precolouring Extension and List Colouring Problems,
and in Section~\ref{ss-choos} we consider Choosability Problems.

\subsection{Colouring, Precolouring Extension and List Colouring Problems}\label{ss-col}

Theorem~\ref{t-kktw} below describes what is known about the complexity of problems where the number of colours is not fixed.  We first briefly describe the origin of these results.
 
Kr\'al', Kratochv\'{\i}l, Tuza, and Woeginger~\cite{KKTW01} completely classified the computational complexity of {\sc Colouring}  by showing that 
it is polynomial-time solvable for $H$-free graphs if $H$ is an induced subgraph of 
$P_4$ or of $P_1+P_3$, and \NP-complete otherwise.
Both Hujter and Tuza~\cite{HT96} and
Jansen and Scheffler~\cite{JS97} showed that {\sc Precolouring Extension} is polynomial-time solvable for $P_4$-free graphs.
This result was used by Golovach, Paulusma and Song~\cite{GPS12} in order to obtain a dichotomy for {\sc Precolouring Extension} analogous to the one
of Kr\'al' et al. 
Jansen and Schefller~\cite{JS97} also showed the following result which we 
state as a Theorem as we will use it later in the paper.

\begin{theorem}
\label{t-cb}
$3$-{\sc List Colouring}  is \NP-complete for complete bipartite graphs.
\end{theorem}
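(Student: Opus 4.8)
The plan is to observe first that $3$-{\sc List Colouring} lies in \NP, since a colouring is a polynomial-size certificate whose validity (respecting $L$, properness) is checked in polynomial time; the real work is to prove \NP-hardness. I would reduce from {\sc $3$-Satisfiability}, exploiting the following reformulation of list colouring on a complete bipartite graph $G$ with parts $A$ and $B$: because there are no edges inside $A$ or inside $B$ and every $A$--$B$ pair is an edge, a colouring respecting $L$ exists if and only if one can choose $c(x)\in L(x)$ for all vertices $x$ so that the colour set $S_A=\{c(a):a\in A\}$ is disjoint from $S_B=\{c(b):b\in B\}$. In particular, each $b\in B$ must get a colour of $L(b)$ that is avoided on $A$, and symmetrically.

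Given a 3-CNF formula with variables $x_1,\dots,x_n$ and clauses $C_1,\dots,C_m$, I would introduce two fresh colours $t_i,f_i$ per variable and, for a literal $\ell$, set $\mathrm{col}(\ell)=t_i$ and $\overline{\mathrm{col}}(\ell)=f_i$ if $\ell=x_i$, and $\mathrm{col}(\ell)=f_i$, $\overline{\mathrm{col}}(\ell)=t_i$ if $\ell=\neg x_i$. The complete bipartite graph has one vertex $v_i$ on side $A$ for each variable, with $L(v_i)=\{t_i,f_i\}$ (its colour is meant to encode the truth value of $x_i$), and one vertex $w_j$ on side $B$ for each clause $C_j=\ell_{j1}\vee\ell_{j2}\vee\ell_{j3}$, with $L(w_j)=\{\overline{\mathrm{col}}(\ell_{j1}),\overline{\mathrm{col}}(\ell_{j2}),\overline{\mathrm{col}}(\ell_{j3})\}$. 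Every list has size at most $3$, and the construction is clearly polynomial.

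For correctness, note that the colours $t_i,f_i$ appear on side $A$ only in $L(v_i)$, so $S_A$ consists of exactly one of $t_i,f_i$ for each $i$ and thus encodes a truth assignment (say $x_i$ is true iff $t_i\in S_A$). A vertex $w_j$ is colourable iff $L(w_j)\not\subseteq S_A$, i.e.\ iff $\overline{\mathrm{col}}(\ell_{jk})\notin S_A$ for some $k$; and $\overline{\mathrm{col}}(\ell_{jk})\notin S_A$ holds precisely when $\ell_{jk}$ is true under the assignment, so $w_j$ is colourable iff $C_j$ is satisfied. Conversely, from a satisfying assignment take the corresponding $S_A$ and colour each $w_j$ with $\overline{\mathrm{col}}(\ell_{jk})$ for a true literal $\ell_{jk}$ of $C_j$; these colours lie outside $S_A$, so $S_A\cap S_B=\emptyset$ and the colouring respects $L$. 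Hence the formula is satisfiable iff $G$ admits a colouring respecting $L$.

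I do not expect a genuine obstacle; the only delicate point is getting the orientation of the literal-to-colour map right, so that an \emph{unsatisfied} literal is the one whose ``falsifying'' colour is forced onto $A$, together with checking the harmless edge cases (clauses containing a variable and its negation, or the degenerate complete bipartite graphs $K_{1,s}$, all decided directly). The same idea can be run from {\sc Not-All-Equal $3$-Satisfiability} instead, but the reduction above from {\sc $3$-Satisfiability} seems the most transparent.
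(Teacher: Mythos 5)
Your reduction is correct: the key observation that a list colouring of a complete bipartite graph exists precisely when the colour sets chosen on the two sides can be made disjoint is exactly what makes the variable-vertex/clause-vertex construction work, and your verification of both directions (including the edge cases of repeated or complementary literals in a clause) is sound. Note that the paper gives no proof of this theorem at all — it is quoted as a known result of Jansen and Scheffler~\cite{JS97} — and your argument is essentially the standard reduction from {\sc 3-Satisfiability} underlying that citation, so there is nothing to compare beyond saying your write-up supplies a self-contained proof the survey omits.
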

As a consequence, $3$-{\sc List Colouring} is \NP-complete for $(P_1+P_2)$-free graphs. 
Jansen~\cite{Ja96} implicitly showed that 3-{\sc List Colouring} is \NP-complete for
(not necessarily vertex-disjoint) unions of two complete graphs, and thus for $3P_1$-free graphs.
By combining these results, together with Theorem~\ref{t-general}~(i),
Golovach et al.~\cite{GPS12} obtained dichotomies for  {\sc List Colouring} and $\ell$-{\sc List Colouring}. We summarize all these results:

\begin{theorem}\label{t-kktw}
Let $H$ be a graph. Then the following four statements hold for $H$-free graphs.
\begin{itemize}
\item [(i)] {\sc Colouring} is polynomial-time solvable if $H$ is an induced subgraph of 
$P_4$ or of $P_1+P_3$; otherwise it is \NP-complete.\\[-9pt]
\item [(ii)] {\sc Precolouring Extension} is polynomial-time solvable if $H$ is an induced subgraph of 
$P_4$ or of $P_1+P_3$; otherwise it is \NP-complete.\\[-9pt]
\item [(iii)] {\sc List Colouring} is polynomial-time solvable if $H$ is an induced subgraph of $P_3$; otherwise
it is \NP-complete.\\[-9pt]
\item [(iv)]  
For $\ell\geq 3$, {\sc $\ell$-List Colouring} is polynomial-time solvable if $H$ is an induced subgraph of $P_3$;
otherwise it is \NP-complete.
[Recall that for  $\ell\leq 2$, {\sc $\ell$-List Colouring} is polynomial-time solvable on general graphs.]\\[-9pt]
\end{itemize}
\end{theorem}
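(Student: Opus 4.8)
\medskip\noindent\textbf{Proof proposal.}
All four statements are dichotomies of the same shape, so the plan is to treat the polynomial-time side and the \NP-completeness side separately, and on the hard side to reduce everything --- via Observation~\ref{o-in} --- to finitely many ``minimal'' forbidden induced subgraphs. Since $H'\ssi H$ implies that every $H'$-free graph is $H$-free, \NP-completeness of a colouring problem on $H'$-free graphs automatically carries over to $H$-free graphs. Hence for each statement it suffices (a) to list the graphs $H$ that are not induced subgraphs of the relevant template ($P_4$ or $P_1+P_3$ in (i) and (ii), and $P_3$ in (iii) and (iv)) but all of whose proper induced subgraphs are, and (b) to prove \NP-completeness on $H$-free graphs for each such minimal $H$. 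A short case analysis delivers these lists. For (i) and (ii): if $H$ has a cycle then its shortest cycle is an induced $C_3$, $C_4$ or $C_5$, or (if the girth is at least~$6$) $H$ has an induced $2P_2$; if $H$ is a forest with a vertex of degree at least three then, the neighbourhood of that vertex being independent, $H$ has an induced $K_{1,3}$; and if $H$ is a linear forest that is not an induced subgraph of $P_4$ or $P_1+P_3$ then it has an induced $2P_2$, $2P_1+P_2$ or $4P_1$. This leaves the minimal set $\{K_3,C_4,C_5,K_{1,3},2P_2,2P_1+P_2,4P_1\}$. The analogous bookkeeping for (iii) and (iv) leaves the shorter set $\{K_3,C_4,3P_1,P_1+P_2\}$: any $H$ not an induced subgraph of $P_3$ has a three-vertex induced subgraph other than $P_3$ --- hence one of $K_3$, $3P_1$, $P_1+P_2$ --- unless all its three-vertex induced subgraphs are $P_3$, and then the Ramsey bound $|V(H)|\le 5$ forces an induced $C_4$.

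For the polynomial side of (i) and (ii): if $H\ssi P_4$ then every $H$-free graph is $P_4$-free, i.e.\ a cograph; cographs are perfect, so {\sc Colouring} is polynomial by Gr\"otschel, Lov\'asz and Schrijver~\cite{GLS84} (or by a direct cotree dynamic program), and {\sc Precolouring Extension} on cographs is polynomial by Hujter and Tuza~\cite{HT96} and by Jansen and Scheffler~\cite{JS97}. If $H\ssi P_1+P_3$ then every $H$-free graph is $(P_1+P_3)$-free, equivalently its complement is paw-free; by the structure theorem for paw-free graphs (each component is triangle-free or complete multipartite) this reduces {\sc Colouring} to matching and to colouring complete multipartite graphs, yielding the polynomial algorithm of Kr\'al' et al.~\cite{KKTW01}, and Golovach, Paulusma and Song~\cite{GPS12} extend the tractability to {\sc Precolouring Extension}. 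For the polynomial side of (iii) and (iv): if $H\ssi P_3$ then every $H$-free graph is $P_3$-free, hence a disjoint union of complete graphs, and list-colouring a single clique is a bipartite matching (system of distinct representatives) problem; so {\sc List Colouring}, and therefore $\ell$-{\sc List Colouring}, is polynomial (the case $\ell\le 2$ of (iv) already being Theorem~\ref{t-general}(i)).

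For the \NP-completeness side I would match known hardness results to the minimal graphs. For $K_{1,3}$: {\sc Colouring} on line graphs is {\sc Chromatic Index}, which is \NP-complete by Holyer, and line graphs are claw-free. For $K_3$, $C_4$ and $C_5$: {\sc Colouring}, indeed $3$-{\sc Colouring}, is \NP-complete for graphs of girth at least~$6$, which are $(K_3,C_4,C_5)$-free. For the remaining graphs $2P_2$, $2P_1+P_2$ and $4P_1$ one needs separate reductions showing that {\sc Colouring} is \NP-complete for $2P_2$-free graphs, for $(2P_1+P_2)$-free graphs, and for $4P_1$-free graphs (graphs of independence number at most~$3$) --- the last two being hardness of minimum clique cover on diamond-free, respectively $K_4$-free, graphs, with the $4P_1$ case reducing essentially from {\sc Partition into Triangles}. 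This settles the hard half of (i), and then (ii) is immediate because {\sc Colouring} is the $W=\emptyset$ special case of {\sc Precolouring Extension}. For (iii) and (iv): Theorem~\ref{t-cb} gives $3$-{\sc List Colouring} \NP-complete for complete bipartite graphs, which are $(P_1+P_2)$-free (and $K_3$-free); Jansen~\cite{Ja96} gives it \NP-complete for unions of two complete graphs, which are $3P_1$-free; and $3$-{\sc Colouring} for graphs of girth at least~$5$, a special case of $3$-{\sc List Colouring}, handles $C_4$-free graphs. Since $3$-{\sc List Colouring} is a special case of {\sc List Colouring} and of $\ell$-{\sc List Colouring} for every $\ell\ge 3$ (Figure~\ref{f-col}), hardness propagates upwards, completing (iii) and (iv).

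The step I expect to be the main obstacle is the hard half of (i) for $2P_2$, $2P_1+P_2$ and $4P_1$: these graphs contain neither a cycle nor a claw, so neither the edge-colouring reduction nor the large-girth reduction applies, and each needs a fresh gadget construction whose delicate point is verifying that the produced instances are genuinely $2P_2$-free (respectively $(2P_1+P_2)$-free, $4P_1$-free). A secondary difficulty is that the polynomial-time algorithm for {\sc Precolouring Extension} on $(P_1+P_3)$-free graphs does not follow formally from the {\sc Colouring} algorithm and must be built from the structure of these graphs together with the cograph precolouring-extension routine as a subroutine.
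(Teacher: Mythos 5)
Your proposal is correct and follows essentially the same route as the paper, which for this theorem simply assembles the cited results of Kr\'al' et al., Hujter--Tuza, Jansen--Scheffler, Jansen, and Golovach--Paulusma--Song; your explicit reduction to the minimal forbidden graphs $\{K_3,C_4,C_5,K_{1,3},2P_2,2P_1+P_2,4P_1\}$ (respectively $\{K_3,C_4,3P_1,P_1+P_2\}$) and your assignment of hardness sources to each of them accurately reconstructs the case analysis underlying those citations. You also correctly identify the genuinely nontrivial ingredients that the survey leaves to the references, namely the bespoke hardness reductions for $2P_2$-, $(2P_1+P_2)$- and $4P_1$-free graphs and the {\sc Precolouring Extension} algorithm for $(P_1+P_3)$-free graphs.
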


Theorem~\ref{t-kktw} gives a complete complexity classification for problems where the number of colours is not fixed; that is, it is part of the input.   
Once such a classification was found, the natural direction for further research was to impose 
an upper bound on the number of available colours, and there is now an extensive  literature on such problems.  We survey the known results. 
We start, in Theorems~\ref{t-kl07} and~\ref{t-line}, with 
more general results; we will soon see why they are useful.

Kr\'al' et al.~\cite{KKTW01} 
showed (in order to prove that {\sc Coloring} is \NP-complete for $H$-free graphs whenever $H$ has a cycle)
that $3$-{\sc Colouring} is \NP-complete for graphs of girth at least $g$ for any fixed $g\geq 3$.
Using a similar reduction, Kami\'nski and Lozin~\cite{KL07} extended this result to all $k\geq 3$ 
though in fact a stronger result had been previously obtained by Emden-Weinert, Hougardy and Kreuter~\cite{EHK98}:

 \begin{theorem}\label{t-kl07}
For all $k\geq 3$ and all $g\geq 3$, {\sc $k$-Colouring}  is \NP-complete for graphs with girth at least~$g$ and with maximum degree at most~$6k^{13}$ . 
\end{theorem}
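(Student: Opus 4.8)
The plan is to reduce from {\sc $k$-Colouring} on general graphs, which is \NP-complete for every $k\ge 3$ by Theorem~\ref{t-general}~(i). Given an arbitrary graph $G$ I would construct, in polynomial time, a graph $G'$ of girth at least~$g$ and maximum degree at most~$6k^{13}$ that is $k$-colourable if and only if $G$ is; since membership in \NP\ is immediate, this suffices. The construction has two stages. First I would replace every vertex $v$ of large degree by a \emph{splitter gadget}: a chain of blocks, each block a copy of $K_{k-1}$ to which two exit vertices are made fully adjacent, so that in any $k$-colouring the two exit vertices of a block are forced to take the unique colour missing on that block. Concatenating $\deg_G(v)$ such blocks produces a gadget with $\deg_G(v)$ terminals that are all forced to a common colour, with every common colour realizable, while every vertex of the gadget has degree $O(k)$. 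Rerouting the edges at $v$ to the terminals, for all $v$, yields an equivalent instance $G_1$ with $\Delta(G_1)=O(k)$.

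In the second stage I would replace every edge of $G_1$ by a \emph{connector gadget}~$J$ with two terminals $a,b$ such that: (i)~$c(a)\ne c(b)$ in every $k$-colouring of~$J$; (ii)~for every ordered pair of distinct colours $(\alpha,\beta)$ some $k$-colouring of~$J$ has $c(a)=\alpha$ and $c(b)=\beta$; (iii)~$J$ has girth at least~$g$ and $\dist_J(a,b)\ge g$; (iv)~$\Delta(J)$ is bounded by a polynomial in~$k$. The building block is an \emph{equality gadget}. Fix a graph~$F$ with $\chi(F)\ge k+1$ and girth at least $g+2$; by deleting vertices we may assume $\chi(F)=k+1$, and then we may take $F$ to be edge-minimal subject to $\chi(F)=k+1$, so that $\chi(F-e)=k$ for every edge $e$ of~$F$ while the girth is still at least $g+2$. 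Pick an edge $e=xy$ and put $F^-=F-e$. Then every $k$-colouring of $F^-$ satisfies $c(x)=c(y)$ --- otherwise it would properly $k$-colour~$F$ --- and, permuting colour names from one fixed $k$-colouring of $F^-$, every assignment $c(x)=c(y)=\alpha$ extends; moreover any $x$--$y$ path in $F^-$ closes with $e$ into a cycle of~$F$, so $\dist_{F^-}(x,y)\ge g+1$. Taking two vertex-disjoint copies of $(F^-,x,y)$ and adding a single bridge edge between a terminal of one and a terminal of the other gives~$J$, with $a,b$ the two remaining terminals: the bridge forces $c(a)\ne c(b)$; flexibility of the two copies gives~(ii); every cycle of~$J$ lies inside one copy while the bridge lies on no cycle, so the girth of~$J$ is at least $g+2$; and any $a$--$b$ path must cross the bridge and hence has length at least $2(g+1)+1$. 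Finally, substituting a copy of~$J$ for each edge $uv$ of $G_1$ (identifying $a$ with $u$ and $b$ with $v$) yields $G'$: property~(ii) lifts a $k$-colouring of $G_1$ to $G'$ and property~(i) projects one back; every cycle of $G'$ lies inside a single connector or runs terminal-to-terminal through one or more connectors and so has length at least~$g$; and $\Delta(G')\le\Delta(G_1)\cdot\Delta(J)$.

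The genuinely delicate point is the existence, for every~$g$, of a base graph~$F$ with $\chi(F)\ge k+1$, girth at least $g+2$, and maximum degree bounded by a polynomial in~$k$ that does \emph{not} depend on~$g$. The classical probabilistic argument of Erd\"os for graphs of large girth and large chromatic number gives no control on the degree, and simply deleting the short cycles from a random (near-)regular graph would in general collapse its chromatic number. What is needed is a more careful construction --- for instance (near-)Ramanujan expanders, whose small second eigenvalue forces an independence ratio of order $d^{-1/2}$ and hence chromatic number $\Omega(\sqrt{d})$ for $d$-regular graphs, so that degree $\Theta(k^2)$ already suffices while the girth can be pushed past~$g$ with~$d$ fixed; or an explicit combinatorial amplification of girth --- together with a careful accounting of all the polynomial factors involved: the $O(k)$ blow-up from the splitter stage, the internal degree of~$F$, and the $\deg_{G_1}(u)$-fold multiplicity created at a vertex~$u$ when its incident edges become connectors. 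It is this quantitative bookkeeping, and not any single conceptual step, that is the main obstacle, and the exponent~$13$ in the statement merely reflects the particular elementary construction used rather than any claim of optimality.
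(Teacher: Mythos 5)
The survey gives no proof of this theorem; it is quoted from Emden-Weinert, Hougardy and Kreuter~\cite{EHK98}, so your attempt has to be measured against that source. Your outer reduction is sound and essentially standard: the splitter gadget correctly forces all terminals of a vertex to a common colour while keeping the degree at $O(k)$, and the connector gadget built from two copies of $F-e$ for an edge-critical $(k+1)$-chromatic graph $F$ of girth at least $g+2$ does enforce exactly $c(a)\neq c(b)$, has girth at least $g+2$, and keeps the terminals at distance at least $g$; your bookkeeping of cycles and degrees in $G'$ is correct. But the entire substance of the theorem is the lemma you explicitly leave unproven: the existence, for every $g$, of a graph with chromatic number greater than $k$, girth greater than $g$, and maximum degree bounded by a polynomial in $k$ \emph{uniformly in $g$}. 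That existence statement is precisely what \cite{EHK98} establish (via uniquely colourable graphs of large girth and degree polynomial in $k$, whence the constant $6k^{13}$), and gesturing at Ramanujan expanders without fixing a construction or verifying the degree bound does not close the gap. A proof whose only nontrivial ingredient is asserted is not a proof of the theorem.

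A secondary point: your stated reason for abandoning the classical probabilistic route is incorrect, and in fact that route works. For a random $d$-regular graph the expected number of cycles of length less than $g$ is a constant $C(d,g)$ independent of $n$, while the independence number is at most $O(n\ln d/d)$ with high probability; deleting one vertex per short cycle removes only $O_{d,g}(1)=o(n)$ vertices, so the lower bound $\chi\geq n'/\alpha=\Omega(d/\ln d)$ survives. Thus $d=\Theta(k\log k)$, independent of $g$, already yields the required base graph $F$ --- the chromatic number does not ``collapse'' because it is controlled through the independence number rather than directly. Carrying this out (or citing it) would complete your argument and would in fact give a degree bound far below $6k^{13}$; as you note, the exponent $13$ is an artifact of the particular construction in \cite{EHK98} rather than anything your reduction needs.
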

Theorem~\ref{t-kl07} implies that for any $k\geq 3$,  $k$-{\sc Colouring} is \NP-complete for the class of $H$-free graphs 
whenever $H$ contains a cycle.  Let us remind the reader once more
that Figure~\ref{f-col} tells us that \NP-completeness results propagate upwards, 
which, combined with Theorems~\ref{t-kktw} and~\ref{t-kl07}, allows us to say that the complexity of Colouring, Precolouring Extension and List Colouring problems for $H$-free graphs is classified except when $H$ is a forest.

The following theorem is due to Holyer~\cite{Ho81}, who settled the case $k=3$,  and Leven and Galil~\cite{LG83} who settled the case $k\geq 4$.

\begin{theorem}\label{t-line}
For all $k\geq 3$, $k$-{\sc Colouring} is \NP-complete for line graphs of $k$-regular graphs.
\end{theorem}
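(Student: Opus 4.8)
The plan is to recast the problem as edge colouring and then invoke the classical gadget constructions behind the cited results.

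First I would observe that, by the definition of the line graph, a $k$-colouring of $L(G)$ is exactly a proper $k$-edge-colouring of $G$: two edges of $G$ become adjacent vertices of $L(G)$ precisely when they share an end-vertex. Hence the problem of the theorem is polynomially equivalent to deciding, for a $k$-regular graph $G$, whether $G$ admits a proper $k$-edge-colouring (the reduction in one direction simply outputs $L(G)$, computable in polynomial time; in the other direction the input carries the promise that it is such a line graph) -- equivalently, by Vizing's theorem, whether $\chi'(G)=k$ rather than $k+1$. Membership in \NP\ is immediate, a colouring being a polynomial-size certificate that is verified in polynomial time. So it suffices to prove that $k$-edge-colourability of $k$-regular graphs is \NP-hard.

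For $k=3$ (Holyer's case) I would give a polynomial reduction from a satisfiability problem such as $3$-{\sc Satisfiability}. Given a formula $\phi$, one builds a cubic graph $G_\phi$ out of three kinds of gadgets joined by \emph{wires}, i.e.\ edges whose colour encodes a Boolean value (say colour $1$ for true and colours $2,3$ for false). The engine of every gadget is the parity lemma for cubic graphs: in a proper $3$-edge-colouring every vertex is incident to exactly one edge of each colour, so for any vertex set $S$ and any colour $i$, writing $E_i$ for the set of edges coloured $i$ and $\partial S$ for the edges with exactly one end-vertex in $S$, one has $|S| = 2\,|E_i\cap E(G[S])| + |E_i\cap\partial S|$, hence $|E_i\cap\partial S|\equiv|S|\pmod 2$ for each of the three colours. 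From this one designs (i) an \emph{inverter} that flips the value carried along a wire; (ii) a \emph{variable gadget} with one output wire per occurrence of the variable, all forced to the same value (with inverters inserted on wires of negated literals); and (iii) a \emph{clause gadget} with three input wires that admits a proper $3$-edge-colouring if and only if at least one input wire carries true. One then shows $G_\phi$ is $3$-edge-colourable iff $\phi$ is satisfiable and checks that every vertex of $G_\phi$ has degree exactly $3$, padding the gadgets so that no loose half-edges remain; since $|G_\phi|$ is polynomial in $|\phi|$, this is the required reduction (and outputting $L(G_\phi)$ gives a line-graph instance).

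For $k\ge 4$ (Leven and Galil) I would run the analogous construction with gadgets adapted to $k$-regular graphs; alternatively, one can reduce from the $k=3$ case by replacing each vertex and each edge of a cubic instance $G$ by a small gadget that internally consumes the $k-3$ surplus colours, producing a $k$-regular graph $G_k$ whose proper $k$-edge-colourings, restricted to the wires inherited from $E(G)$, are exactly the proper $3$-edge-colourings of $G$; correctness again comes down to a ($k$-coloured) parity and counting argument inside each gadget. In both cases the main obstacle is the design and verification of the gadgets -- above all the clause gadget, where the parity constraints must make a proper colouring possible exactly under the ``at least one true'' condition -- together with the bookkeeping needed to make the global graph \emph{exactly} $k$-regular, since gadget interfaces naturally leave vertices of the wrong degree and extra padding must be attached without disturbing any colouring argument; for $k\ge 4$ there is the further difficulty of ensuring the $k-3$ extra colours grant the colourer no unintended freedom.
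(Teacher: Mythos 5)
The paper does not prove this statement itself but simply attributes it to Holyer~\cite{Ho81} for $k=3$ and Leven and Galil~\cite{LG83} for $k\geq 4$, and your proposal --- the observation that $k$-colouring $L(G)$ is exactly proper $k$-edge-colouring of $G$, followed by a reduction from satisfiability built on the parity lemma with inverting, variable and clause gadgets, and the regularisation/extension to $k\geq 4$ --- is precisely the structure of those cited proofs, so you are taking essentially the same route. Be aware only that the entire technical content of Holyer's and Leven--Galil's arguments lives inside the gadgets you leave unconstructed, so what you have is an accurate roadmap to the literature rather than a self-contained proof.
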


Because line graphs are easily seen to be claw-free, Theorem~\ref{t-line} implies that  for all $k\geq 3$, $k$-{\sc Colouring} is \NP-complete on $H$-free graphs whenever $H$ is a forest with a vertex of degree at least $3$. This leaves only the case in which $H$ is a linear forest. 
 
Combining a result from Balas and Yu~\cite{BY89}
on the number of maximal independent sets in an $sP_2$-free graph 
and a result from Tsukiyama, Ide, Ariyoshi and Shirakawa~\cite{TIAS77} on the enumeration of such sets leads to the result that {\sc  $k$-Colouring} is polynomial-time solvable
on $sP_2$-free graphs for any two integers $k$ and~$s$; 
see, for example, the paper of Dabrowski, Lozin, Raman and Ries~\cite{DLRR12} for a proof of this result.
By a few additional arguments, it is possible to obtain the following 
new result, which is stronger (notice that polynomial-time results propagate downwards in Figure~\ref{f-col}). 

\begin{theorem}\label{t-sp2}
For all $k\geq 1$, $s\geq 1$,  {\sc List $k$-Colouring}  is polynomial-time solvable on $sP_2$-free graphs.
\end{theorem}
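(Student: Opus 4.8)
The plan is to proceed by induction on $k$, treating $s$ as a fixed constant, and to use two ingredients mentioned above: an $n$-vertex $sP_2$-free graph has only polynomially many maximal independent sets (Balas and Yu~\cite{BY89}), and all of them can be listed in polynomial time (Tsukiyama et al.~\cite{TIAS77}). The base cases $k\leq 2$ are immediate from Theorem~\ref{t-general}(i), which already gives that {\sc List $2$-Colouring} is polynomial-time solvable on arbitrary graphs. So assume $k\geq 3$ and that {\sc List $(k-1)$-Colouring} is polynomial-time solvable on $sP_2$-free graphs.

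For the inductive step, let $(G,L)$ be an instance with $G$ an $sP_2$-free graph and every list a subset of $\{1,\dots,k\}$, and write $V_k=\{v\in V(G) : k\in L(v)\}$ for the set of vertices that can receive colour $k$. The crucial observation is that we may restrict attention to colourings $c$ in which $c^{-1}(k)$ is a \emph{maximal} independent set of the ($sP_2$-free) graph $G[V_k]$: if $v\in V_k$ has no neighbour coloured $k$ under $c$, then recolouring $v$ with colour $k$ again respects $L$ and again properly colours $G$, so any solution can be modified until its colour-$k$ class is maximal in $G[V_k]$. On the other hand, for an arbitrary independent set $S$ of $G[V_k]$ there is a solution $c$ with $c^{-1}(k)=S$ if and only if the instance $(G-S,L')$ has a solution, where $L'(v)=L(v)\setminus\{k\}$ for all $v\in V(G)\setminus S$: given such a $c$, its restriction to $G-S$ uses only colours $1,\dots,k-1$ and respects $L'$; conversely any colouring of $G-S$ respecting $L'$ extends to $G$ by giving every vertex of $S$ colour $k$. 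Since $G-S$ is an induced subgraph of $G$ it is again $sP_2$-free, and all lists of $L'$ lie in $\{1,\dots,k-1\}$. Therefore the algorithm enumerates every maximal independent set $S$ of $G[V_k]$ and, for each, recursively decides {\sc List $(k-1)$-Colouring} on $(G-S,L')$; it answers ``yes'' exactly when one of these recursive calls does.

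Correctness follows from the two directions of the equivalence above together with the maximality reduction. For the running time, unfolding the recursion yields a search tree of depth at most $k$ in which every node has at most $n^{O(s)}$ children --- the bound on the number of maximal independent sets of an $sP_2$-free graph --- and does only polynomial work (enumerating maximal independent sets and updating lists). As $k$ and $s$ are constants, the tree has $n^{O(sk)}$ leaves and the whole computation runs in polynomial time. I expect the one genuinely delicate point to be the reduction itself: one must observe that it is enough to branch over the maximal independent sets of $G[V_k]$ --- not over those of $G$, and not over all independent sets --- so that the recolouring argument is what makes the restriction sound while the Balas--Yu bound is what keeps the branching polynomial; the remaining steps are routine bookkeeping. (Note that since the exponent of the polynomial depends on both $k$ and $s$, this argument says nothing about fixed-parameter tractability in $k+s$.)
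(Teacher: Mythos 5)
Your proof is correct and takes essentially the same approach as the paper's: branch over maximal independent sets as candidate colour classes (polynomially many by Balas--Yu, enumerable by Tsukiyama et al.), justify the restriction to maximal sets by a recolouring argument, and recurse on the remaining colours. The only cosmetic difference is that you enumerate maximal independent sets of $G[V_k]$ for a single colour $k$ rather than intersecting maximal independent sets of $G$ with the relevant vertices for each colour $i$, which if anything makes the exchange argument slightly cleaner.
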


\begin{proof}
Let $k\geq 1$ and $s\geq 1$. Let $G$ be an $sP_2$-free graph with a $k$-list assignment $L$.
By the results of Balas and Yu~\cite{BY89} and Tsukiyama et al.~\cite{TIAS77},
we can enumerate all maximal independent sets of $G$ in 
polynomial time.
For each maximal independent set $I$ and each colour $i\in\{1,\ldots,k\}$, 
we colour each vertex of $W=\{u \in I : i \in L(u) \}$ with $i$, and then, recursively, attempt to colour $G-W$ with the remaining colours. 
The running time of this algorithm is $(kn)^{O(k)}$.
The algorithm can fail: it might not colour every vertex. However, if it succeeds then the resulting colouring will respect $L$. 

It remains to show that the algorithm will find a colouring if one exists. 
Consider the set of vertices~$W$ coloured $i$ in some colouring.  They belong to a maximal independent set $I$, and we can assume that $W=\{u \in I : i \in L(u) \}$ (by changing the colours of some vertices if necessary; the colouring will still be proper).  So at some point the algorithm will consider $i$ and $I$ and colour $W$ with $i$.  By applying the same argument to $G-W$ (which we know can be coloured with the remaining colours), we can see that the algorithm will obtain a colouring.\qed
\end{proof}

The following theorem summarizes what is known for colouring problems on $H$-free graphs when the number of colours is fixed.

\begin{theorem}\label{t-polysum}
Let $H$ be a graph. Then the following five statements hold:
\begin{itemize}
\item[(i)] {\sc $k$-Colouring} is \NP-complete for $H$-free graphs if 
\begin{enumerate}[{\em 1.}]
\item  $k\geq3$ and $H\si C_r$ for $r\geq 3$
\item  $k\geq 3$ and $H\si K_{1,3}$
\item  $k\geq 4$ and $H\si P_7$
\item  $k\geq 5$ and $H\si P_6$.\\[-8pt]
\end{enumerate}
\item[(ii)] {\sc List $k$-Colouring} is \NP-complete for $H$-free graphs if 
\begin{enumerate}[{\em 1.}]
\item $k\geq 4$ and $H\si P_6$
\item  $k\geq 5$ and $H\si P_2+P_4$.\\[-8pt]
\end{enumerate}
\item [(iii)] {\sc List $k$-Colouring} is polynomial-time solvable for $H$-free graphs if $k\leq 2$ or
\begin{enumerate}[{\em 1.}]
\item  $k\leq 3$ and $H\ssi sP_1+P_7$ for $s\geq 0$ 
\item $k\leq 3$ and $H\ssi sP_3$ for $s\geq 1$
\item  $k\geq 1$ and $H\ssi sP_1+P_5$ for  $s\geq 0$ 
\item $k\geq 1$ and $H\ssi sP_2$ for  $s\geq 1$.\\[-8pt]
\end{enumerate}
\item [(iv)] {\sc $4$-Precolouring Extension} is polynomial-time solvable for $H$-free graphs if 
$H\ssi P_2+P_3$.\\[-8pt]
\end{itemize}
\end{theorem}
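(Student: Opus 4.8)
Since $H\ssi P_2+P_3$, Observation~\ref{o-in} shows that every $H$-free graph is $(P_2+P_3)$-free, so it suffices to prove the statement for $(P_2+P_3)$-free graphs. Let $G$ be such a graph with a $4$-precolouring $c_W$; as the problem splits over connected components, assume $G$ is connected. I would reformulate the instance as a {\sc List $4$-Colouring} instance by putting $L(v)=\{c_W(v)\}$ for $v\in W$ and $L(v)=\{1,2,3,4\}$ otherwise, and then \emph{propagate}: while some uncoloured vertex $v$ has a neighbour $u$ with $L(u)=\{i\}$ and $i\in L(v)$, delete $i$ from $L(v)$; if a list becomes empty, reject. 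Afterwards every list has size $1$, $2$, $3$ or $4$, and vertices with a $1$-list may be regarded as precoloured. The plan is to branch on a bounded number of choices so that every residual instance is of one of two polynomial-time-solvable types: (I) all lists have size at most $2$, giving a {\sc $2$-List Colouring} instance, which is polynomial-time solvable on general graphs (via a reduction to {\sc $2$-Satisfiability}); or (II) all lists have size at most $3$ and the constraint graph is still $(P_2+P_3)$-free, which is an instance of {\sc List $3$-Colouring} and hence polynomial-time solvable by Theorem~\ref{t-polysum}(iii), since $P_2+P_3\ssi P_7$ and therefore every $(P_2+P_3)$-free graph is $P_7$-free.

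Two structural consequences of $(P_2+P_3)$-freeness drive the branching. First, for every edge $uv$ of $G$ the graph $G[V\setminus(N[u]\cup N[v])]$ is $P_3$-free, hence a disjoint union of cliques: an induced $P_3$ avoiding $N[u]\cup N[v]$ together with the edge $uv$ would form an induced $P_2+P_3$. Second, for every \emph{induced} $P_3$ on vertices $a,b,c$ the set $X:=V\setminus(N[a]\cup N[b]\cup N[c])$ is an independent set: an edge inside $X$ together with $a,b,c$ would again form an induced $P_2+P_3$. If $G$ contains no induced $P_3$ then $G$ is itself a disjoint union of cliques and the instance is trivial, so we may assume $G$ has an induced $P_3$, and in particular an edge.

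Fix an induced $P_3$ on $a,b,c$ and branch over the (at most $12$) admissible colourings of $\{a,b,c\}$; in each branch, fix these colours and re-propagate. Now any vertex whose list still has size $4$ has no neighbour with a $1$-list and is therefore non-adjacent to $a,b,c$, so it lies in $X$, which is independent. Because $X$ is independent, its vertices may be coloured greedily after everything else, so the branch is a yes-instance if and only if $G[N[\{a,b,c\}]]$ admits a list colouring (with the propagated lists) under which, for every $x\in X$, some colour of $L(x)$ is missed by $N(x)$. I would then recurse on $G[N[\{a,b,c\}]]$: either its lists already have size at most $3$ and, after deleting any vertices still forced to colour $4$, we land in case~(II); or it contains an edge $uv$, and the first structural fact reduces the set of remaining large-list vertices to a disjoint union of cliques, each of which has at most four vertices and hence only $O(1)$ proper colourings.

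The genuine difficulty is to make this recursion terminate in polynomially many steps and to incorporate the side conditions coming from $X$ and from the cliques, whose members may share neighbours outside them; a naive independent branching over all cliques is exponential. Closing the case therefore requires a sharper, $(P_2+P_3)$-specific analysis — for example showing that after fixing $\{a,b,c\}$ the graph $G[N[\{a,b,c\}]]$ is dominated by a bounded set, or that the large-list cliques together with the conditions imposed by $X$ force all remaining lists down to size at most $2$ — so that what is left is always an instance of type~(I) or type~(II). Given such a structural claim, polynomial-time solvability of {\sc $4$-Precolouring Extension} on $(P_2+P_3)$-free graphs follows by combining it with the reduction to {\sc $2$-Satisfiability} and with Theorem~\ref{t-polysum}(iii).
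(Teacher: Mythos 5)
Your proposal addresses only statement~(iv) of the theorem; statements~(i)--(iii) are not touched at all. The paper, being a survey, proves the entire theorem by assembling results from the literature, and for~(iv) it simply invokes the result of Golovach, Paulusma and Song~\cite{GPS13}; it offers no self-contained argument. So you are attempting something genuinely harder than what the paper does, and the attempt does not close. You say so yourself in the final paragraph: the recursion on $G[N[\{a,b,c\}]]$ is not shown to terminate in polynomially many steps, and the interaction between the independent set $X$ and the residual cliques is left as a ``sharper, $(P_2+P_3)$-specific analysis'' that is never supplied. An argument whose key structural claim is explicitly deferred is a gap, not a proof.

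There is also a concrete error in the one place where you do claim a polynomial subroutine. Your case~(II) asserts that once all lists have size at most~$3$ on a $(P_2+P_3)$-free graph, the instance is an instance of {\sc List $3$-Colouring} and hence tractable by Theorem~\ref{t-polysum}(iii). But after fixing the colours of the induced $P_3$ on $a,b,c$ and propagating, the neighbours of $\{a,b,c\}$ have lists of size at most~$3$ drawn from the full palette $\{1,2,3,4\}$, and different vertices may retain different triples. That is an instance of {\sc $3$-List Colouring}, not of {\sc List $3$-Colouring} --- exactly the distinction Figure~\ref{f-col} is built around. By Theorem~\ref{t-cb}, {\sc $3$-List Colouring} is \NP-complete for complete bipartite graphs, and complete bipartite graphs are $(P_2+P_3)$-free (any induced $P_2+P_3$ would need three vertices with no neighbour among the two end-vertices of an edge, which is impossible in $K_{r,s}$). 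So the ``type~(II)'' black box you want to call is \NP-complete on precisely the graph class at hand, unless you first prove that all residual lists are contained in a \emph{common} $3$-element subset of $\{1,2,3,4\}$ --- and nothing in your branching guarantees this. The structural observations themselves (that $V\setminus(N[u]\cup N[v])$ induces a disjoint union of cliques for an edge $uv$, and that $V\setminus(N[a]\cup N[b]\cup N[c])$ is independent for an induced $P_3$) are correct and are indeed the kind of facts such proofs rest on, but as it stands the argument both relies on a false reduction and leaves its main combinatorial claim unproved.
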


\begin{proof}
For each case, we refer to the literature or to a result stated above.  In some cases we will make additional comments referring to earlier (weaker) results that provided techniques or suggested approaches that were important in obtaining the final result.
\begin{itemize}
\item [(i)] We first consider the \NP-completeness results for {\sc $k$-Colouring}.  
\begin{enumerate}
\item This follows immediately from Theorem~\ref{t-kl07}.
\item This is a direct consequence of Theorem~\ref{t-line} and the fact that every line graph is claw-free.

\item Woeginger and Sgall~\cite{WS01} showed that $4$-{\sc Colouring} is \NP-complete for $P_{12}$-free graphs. 
This bound was improved in a number of other papers.
First, Le, Randerath and Schiermeyer~\cite{LRS07} showed that 4-{\sc Colouring} is \NP-complete for $P_9$-free graphs.
Then, Broersma, Golovach, Paulusma and Song~\cite{BGPS12b} showed that $4$-{\sc Colouring} is \NP-complete for $P_8$-free graphs.
Finally, the strongest  \NP-completeness result for $4$-{\sc Colouring} is due to Huang~\cite{Hu13}, who showed that it is \NP-complete for $P_7$-free graphs (we note also that Broersma et al.~\cite{BGPS12b} had already shown that $4$-{\sc Precolouring Extension} is \NP-complete
for $P_7$-free graphs).

\item  Broersma et al.~\cite{BFGP13} had shown that $5$-{\sc Precolouring Extension} is \NP-complete for $P_6$-free graphs. Huang~\cite{Hu13} improved this (and also a result of Woeginger and Sgall~\cite{WS01} who showed
that $5$-{\sc Colouring} is \NP-complete for $P_8$-free graphs) by proving that $5$-{\sc Colouring} is \NP-complete for $P_6$-free graphs.
\end{enumerate}

\item [(ii)] Next we look at the \NP-completeness results for {\sc List-$k$-Colouring}.
\begin{enumerate}
\item  This is a result of
 Golovach, Paulusma and Song~\cite{GPS12}.

\item  Couturier, Golovach, Kratsch and Paulusma~\cite{CGKP11} showed that  {\sc List $k$-Colouring} is \NP-complete for some integer $k$ on $H$-free graphs, whenever $H$ is a supergraph of $P_1 + P_5$ with at least five edges. In particular, 
they proved that {\sc List 5-Colouring} is \NP-complete on $(P_2+P_4)$-free graphs.
\end{enumerate}
\item [(iii)] We now turn to the polynomial-time results for
{\sc List-$k$-Colouring}.\footnote{Two
of these results were only formulated in the literature for {\sc $k$-Precolouring Extension} instead of for {\sc List $k$-Colouring}. In the Appendix we give proofs for {\sc List $k$-Colouring} or explain how the known proofs 
for {\sc $k$-Precolouring Extension} can be modified accordingly.}
Before we consider the individual cases, we discuss an observation of  Broersma et al.~\cite{BGPS12b} that we will use twice.
They noticed that $3$-{\sc Precolouring Extension}  is polynomial-time solvable
for $(P_1+H)$-free graphs whenever it is polynomial-time solvable for $H$-free graphs (and by repeated application the problem is, in fact, solvable for $(sP_1+H)$-free graphs for any $s \geq 0$). We note that
analogous statements can be made about 3-{\sc Colouring} and {\sc List 3-Colouring}.
\begin{enumerate}
\item Randerath and Schiermeyer~\cite{RS04a} showed that $3$-\textsc{Colouring} is polynomial-time solvable on $P_6$-free graphs.
This was generalized by Broersma, Fomin, Golovach and Paulusma~\cite{BFGP13} who showed that 
{\sc 3-Precolouring Extension} is polynomial-time solvable for $P_6$-free graphs. In fact, their proof shows polynomial-time solvability of {\sc List 3-Colouring}
for $P_6$-free graphs. 
Broersma et al.~\cite{BGPS12b}  showed that {\sc 3-Precolouring Extension} can be solved in polynomial time  on $(P_2+P_4)$-free graphs.
Their proof can be used to show that {\sc List $3$-Colouring} is polynomial-time solvable on $(P_2+P_4)$-free graphs.
Recently, Bonomo, Chudnovsky, Maceli, Schaudt, Stein and Zhong~\cite{BCMSSZ} gave an $O(|V|^{23}|)$ time algorithm for {\sc List 3-Colouring} on $P_7$-free graphs (thereby solving Problem~17 in~\cite{RST02} and Problem~56 in~\cite{RS04b}).  The same authors showed that {\sc 3-Colouring} can be solved in $O(|V|^7|)$ time on $(C_3,P_7)$-free graphs.
\item A further result of Broersma et al.~\cite{BGPS12b} showed that
{\sc 3-Precolouring Extension}  is polynomial-time solvable on $sP_3$-free graphs for all $s\geq 1$. 
In fact, 
though they did not state it explicitly,
the result
holds for {\sc List $3$-Colouring} on $sP_3$-free graphs.
\item This is a result of Couturier et al.~\cite{CGKP11}. It generalizes an earlier result of  Ho\`ang, Kami\'nski, Lozin, Sawada, and Shu~\cite{HKLSS10} who proved that for every integer $k\geq 1$, {\sc List $k$-Colouring} is polynomial-time solvable on $P_5$-free graphs.
Previously, (different) proofs for the case $k\leq 3$ were given by Woeginger and Sgall~\cite{WS01} and Randerath, Schiermeyer and Tewes~\cite{RST02}. 
\item This is Theorem~\ref{t-sp2}.  
\end{enumerate}

\item [(iv)] This is a result of
Golovach, Paulusma and Song~\cite{GPS13}.\qed
\end{itemize}
\end{proof}
As a consequence of Theorem~\ref{t-polysum}
we obtain dichotomies for {\sc $k$-Colouring}, {\sc $k$-Precolouring Extension} and {\sc List $k$-Colouring} when $H$ is small. 
These are stated in Theorem~\ref{t-dicho}.  

\begin{theorem}\label{t-dicho}
Let $H$ be a graph and $k$ an integer.
Then the following three statements hold:
\begin{itemize}
\item [(i)] If $|V(H)|\leq 6$,
then $3$-{\sc Colouring}, $3$-{\sc Precolouring Extension}, {\sc List $3$-Colouring}
are polynomial-time solvable on $H$-free graphs if $H$ is a linear forest, and \NP-complete otherwise.\\[-8pt]
\item [(ii)]
If $|V(H)|\leq 5$,
then $4$-{\sc Colouring}, $4$-{\sc Precolouring Extension}
are polynomial-time solvable on $H$-free graphs if $H$ is a linear forest, and \NP-complete otherwise.\\ [-8pt]
\item [(iii)]
If $|V(H)|\leq 4$ and $k\geq 5$,
then $k$-{\sc Colouring}, $k$-{\sc Precolouring Extension}, {\sc List $k$-Colouring}
are polynomial-time solvable on $H$-free graphs if $H$ is a linear forest, and \NP-complete otherwise.
\end{itemize}
\end{theorem}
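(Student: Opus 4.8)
The plan is to derive all three statements from Theorem~\ref{t-polysum} and the problem relationships of Figure~\ref{f-col}, treating the hardness (``otherwise'') direction and the tractability (``linear forest'') direction separately; the only structural input needed is that in each of~(i),~(ii) and~(iii) the number of colours is at least three.

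For the hardness direction, suppose $H$ is not a linear forest. Then either $H$ contains a cycle, and a shortest cycle of $H$ is an induced $C_r$ with $r\geq 3$, or $H$ is a forest containing a vertex $v$ of degree at least three, whose neighbours are pairwise non-adjacent (else $H$ would contain a triangle) and hence induce with $v$ a copy of $K_{1,3}$; so $C_r\ssi H$ or $K_{1,3}\ssi H$. By Theorem~\ref{t-polysum}(i).1 or~(i).2, {\sc $k$-Colouring} is then \NP-complete on $H$-free graphs for every $k\geq 3$, in particular for $k=3$ in~(i), for $k=4$ in~(ii), and for all $k\geq 5$ in~(iii). Since {\sc $k$-Colouring} is a special case of {\sc $k$-Precolouring Extension}, which is a special case of {\sc List $k$-Colouring} (Figure~\ref{f-col}), and all of these problems lie in \NP, the \NP-completeness transfers to the remaining problems named in each part. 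Note that the bounds on $|V(H)|$ are not used here.

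For the tractability direction, let $H$ be a linear forest obeying the stated bound, and write $H=pP_1+H'$, where $H'$ is the disjoint union of the components of $H$ on at least two vertices. The plan is, by a short case analysis on the component sizes of $H'$, to exhibit $H$ as an induced subgraph of a linear forest for which Theorem~\ref{t-polysum} supplies a polynomial-time algorithm, and then to propagate downwards through Figure~\ref{f-col}. For~(iii) we have $|V(H')|\leq 4$, so $H'\in\{\emptyset,P_2,P_3,P_4,P_2+P_2\}$, each an induced subgraph of $P_5$; hence $H\ssi pP_1+P_5$, Theorem~\ref{t-polysum}(iii).3 applies, and {\sc List $k$-Colouring} --- hence {\sc $k$-Precolouring Extension} and {\sc $k$-Colouring} --- is polynomial-time solvable on $H$-free graphs for all $k\geq 1$. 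For~(i) we have $|V(H')|\leq 6$: if $H$ has a component on at least four vertices then $H'\in\{P_4,P_5,P_6,P_4+P_2\}$, each an induced subgraph of $P_7$ (the tight case $P_4+P_2\ssi P_7$ being obtained by deleting a suitable internal vertex of $P_7$), so $H\ssi pP_1+P_7$ and Theorem~\ref{t-polysum}(iii).1 applies; otherwise every component of $H$ has at most three vertices, so $H$ is an induced subgraph of $sP_3$ with $s$ the number of components of $H$, and Theorem~\ref{t-polysum}(iii).2 applies; either way {\sc List $3$-Colouring}, and hence {\sc $3$-Precolouring Extension} and {\sc $3$-Colouring}, is polynomial-time solvable on $H$-free graphs. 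For~(ii) we have $|V(H')|\leq 5$: if $H'\in\{\emptyset,P_2,P_3,P_4,P_5,P_2+P_2\}$ then $H'\ssi P_5$, so $H\ssi pP_1+P_5$ and Theorem~\ref{t-polysum}(iii).3 applies; the only remaining case is $H=H'=P_2+P_3$, and then Theorem~\ref{t-polysum}(iv) gives a polynomial-time algorithm for {\sc $4$-Precolouring Extension}, hence for {\sc $4$-Colouring}.

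There is no genuine obstacle, since the statement is a corollary of Theorem~\ref{t-polysum}; the only point needing care is the bookkeeping. Parts~1 and~2 of Theorem~\ref{t-polysum}(iii) require $k\leq 3$, so for~(ii) and~(iii) one may invoke only parts~3 and~4 of Theorem~\ref{t-polysum}(iii) --- together with Theorem~\ref{t-polysum}(iv) in case~(ii). Moreover, the five-vertex linear forest $P_2+P_3$ is covered by Theorem~\ref{t-polysum} only through part~(iv), which concerns {\sc $4$-Precolouring Extension} rather than {\sc List $4$-Colouring}; this is precisely why~(ii) omits {\sc List $4$-Colouring}. Finally, the vertex bounds in the three parts are exactly those ensuring that every admissible $H$ lies within the scope of Theorem~\ref{t-polysum}; raising any of them would introduce a linear forest whose complexity is not settled there --- for instance $P_2+P_5$ for~(i), or $P_2+P_3$ for~(iii).
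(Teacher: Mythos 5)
Your proposal is correct and follows exactly the route the paper intends: the paper states Theorem~\ref{t-dicho} as a direct consequence of Theorem~\ref{t-polysum} (hardness from (i).1--2 via an induced $C_r$ or $K_{1,3}$ propagating upwards, tractability from (iii).1--4 and (iv) propagating downwards), and your case analysis on the components of $H$ supplies precisely the bookkeeping the paper leaves implicit. No gaps.
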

Note that statement (ii) of Theorem~\ref{t-dicho} cannot be stated also for {\sc List 4-Colouring} due to exactly one missing case, which is the complexity of {\sc List $4$-Colouring}
for $(P_2+P_3)$-free graphs. 

Theorem~\ref{t-polysum} also implies that for $H$-free graphs,
$3$-{\sc Colouring} is classified for all graphs $H$ on seven vertices except when $H\in \{P_2+P_5,P_3+P_4\}$, that
$4$-{\sc Colouring} is classified for all graphs $H$ on six vertices, except when $H\in 
\{P_1+P_2+P_3,P_2+P_4,2P_3,P_6\}$, and that
$5$-{\sc Colouring} is classified for all graphs $H$ on five vertices, except when $H=P_2+P_3$.

Table~\ref{t-table1} shows a summary of the existing results for $P_r$-free graphs obtained from Theorem~\ref{t-polysum}. We include this table, because {\sc $k$-Colouring} restricted to graphs characterized by forbidden induced subgraphs was most actively studied
for forbidden 
induced 
paths. By comparing Table~\ref{t-table1} with similar tables that can be found in several earlier papers~\cite{BGPS12b,GPS12,HKLSS10,LRS07,RS04a,RS04b,WS01} one can see the gradual progress that has been made over the years.

\begin{table}[h]
\begin{center}
\resizebox{360pt}{35pt}{
\begin{tabular}{c|c|c|c|c||c|c|c|c||c|c|c|c}
& \multicolumn{4}{c||}{{\sc $k$-Colouring}} & \multicolumn{4}{c||}{ {\sc $k$-Precolouring Extension}}&\multicolumn{4}{c}{ {\sc List $k$-Colouring}}\\
\hline
$\;\;r\;\;$                                 & {\small $k=3$}             & {\small $k=4$}                  & {\small $k=5$}               & {\small $k\ge 6$}                          & {\small $k=3$}             & {\small $k=4$}                  & {\small $k=5$}               & {\small $k\ge 6$}                             & {\small $k=3$}             & {\small $k=4$}                  & {\small $k=5$}               & {\small $k\ge 6$} \\
\hline
{\small $r\leq 5$}    & {\small P} & {\small P}      & {\small P} & {\small P}  & {\small P} & {\small P}      & {\small P} & {\small P} 
 & {\small P} & {\small P}      & {\small P} & {\small P} \\
{\small $r=6$}         & {\small P} & ?                     & {\small NP-c}               & {\small NP-c}                 & {\small P} & ?                     & {\small NP-c}            & {\small NP-c} 
&{\small P} & {\small NP-c}                  & {\small NP-c}            & {\small NP-c} \\
{\small $r=7$}         & 
{\small P}             
& {\small NP-c}                     & {\small NP-c}               & {\small NP-c}  & {\small P}              &  {\small NP-c}    & {\small NP-c}        & {\small NP-c}
& {\small P}              & {\small NP-c}    & {\small NP-c}        & {\small NP-c}\\
{\small $r\geq 8$}   & ?              &  {\small NP-c}    &{\small NP-c} & {\small NP-c}  &? &  {\small NP-c}    &{\small NP-c} & {\small NP-c} 
  &?              & {\small NP-c} & {\small NP-c} & {\small NP-c} 
\end{tabular}
}
\end{center}
\vspace*{0.3cm}
\caption{The complexity of $k$-{\sc Colouring}, $k$-{\sc Precolouring Extension}  and {\sc List $k$-Colouring}  
on $P_r$-free graphs for fixed $k$ and $r$.}\label{t-table1}
\end{table}

\begin{open}\label{o-class}
Complete the classification of the complexity of $k$-{\sc Colouring}, $k$-{\sc Precolouring Extension} and {\sc List $k$-Colouring} for $H$-free graphs.
\end{open}

\medskip
\noindent
{\bf Two Important Subproblems}
First, as noted, the complexity status of $4$-{\sc Colouring} for $P_6$-free graphs is still open.
One of the key ingredients in the proofs of the two aforementioned hardness results of $4$-{\sc Colouring} for $P_7$-free graphs
and $5$-{\sc Colouring} for $P_6$-free graphs by Huang~\cite{Hu13} are the so-called 
\emph{nice $k$-critical} graphs. A graph $G=(V,E)$ is nice $k$-critical for some integer $k$ if it is $k$-vertex-critical,
and if moreover, $G$ contains three independent vertices $v_1$, $v_2$, $v_3$
such that $\omega(G-\{v_1,v_2,v_3\})=\omega(G)=k-1$. In his hardness reductions, Huang~\cite{Hu13} uses the existence of 
$P_7$-free nice $3$-critical graphs and $P_6$-free nice $4$-critical graphs. He also proved  that $P_6$-free nice 3-critical graphs do not exist. Hence, new techniques are required to determine the computational complexity of $4$-{\sc Colouring} for $P_6$-free graphs.

The second intriguing open question (Problem~18 in~\cite{RST02} and Problem~57 in~\cite{RS04b}) that must be answered when solving Open Problem~\ref{o-class} is whether there exists an integer $r\geq 8$ such that $3$-{\sc Colouring} is \NP-complete for $P_r$-free
graphs. This is also unknown for  {\sc $3$-Precolouring Extension} and {\sc List $3$-Colouring}. As observed by Golovach et al.~\cite{GPS12}, an affirmative answer for one of the three problems leads to an affirmative answer for the other two.
We also note that there is no graph~$H$ and integer $k$ known  for which the computational complexity of the problems {\sc $k$-Colouring}, {\sc $k$-Precolouring Extension} and {\sc List $k$-Colouring}
differs for $H$-free graphs 
(whether such a graph $H$ exists was posed as an open problem by Huang, Johnson and Paulusma~\cite{HJP14}).

\subsubsection{Parameterized Complexity Theory}
Parameterized complexity theory is a framework  that offers a refined analysis of 
\NP-hard algorithmic problems. 
We measure the complexity of  a problem not only in terms of the input length but also in terms of a parameter, which is a numerical value not necessarily dependent on the input length.
The instance of a parameterized problem is  a pair $(I,p)$, where $I$ 
is the problem instance and $p$ is the parameter.
The choice of parameter will  depend on the structure of the problem (and there might be many possible choices).  

The central notion in parameterized complexity theory is the concept of \emph{fixed-parameter tractability}. 
A problem is called \emph{fixed-parameter tractable} (\FPT) if every instance $(I,p)$ can be solved in time $f(p)|I|^{O(1)}$ where $f$ is a computable function that only depends on $p$. The complexity class \FPT\ is the class of all fixed-parameter tractable problems.
The complexity class \XP\ is the class of all problems that can be solved in time 
$|I|^{f(p)}$.

By definition $\FPT \subseteq \XP$, but a collection of intermediate complexity classes has been defined as well.
It is known as the 
\W-hierarchy:
\[\FPT=\W[0]\subseteq \W[1]\subseteq \W[2]\subseteq \ldots \subseteq \W[P]\subseteq \XP.\] 
It is widely believed that $\FPT\neq \W[1]$. Hence, if a
problem is hard for some class $\W[i]$, then it is considered to be fixed-parameter intractable. 

A problem is \emph{para-\NP-complete}  when it is \NP-complete for some fixed value of the parameter.
Such a problem is not in \XP (and so not in \FPT) unless $\cP=\NP$.
We refer the reader to the textbook of  Niedermeier \cite{Niedermeier06} for further details.

For  {\sc Colouring} and its variants, the natural parameter is the number of available colours~$k$.
Few parameterized results for {\sc Colouring} restricted to $H$-free graphs are known.  Below we survey 
some initial results.

\begin{theorem}\label{t-hfreefpt}
Let $H$ be a graph. Then the following hold:
\begin{itemize}
\item [(i)] {\sc Colouring} is para-\NP-complete for $H$-free graphs when parameterized by 
$k$ if  $H$ is not a linear forest or if $H$ contains an induced
subgraph isomorphic to $P_6$.
\item [(ii)]  {\sc Colouring} is polynomial-time solvable on $H$-free graphs if $H$ is an induced subgraph of  $P_1+P_3$ or of~$P_4$.
\item [(iii)] {\sc List Colouring} is \FPT\ for $H$-free graphs when parameterized by $k+r$ 
 if $H=rP_1+P_2$.
\item[(iv)]  {\sc List Colouring}  is \FPT\ for $H$-free graphs when parameterized by $k$ if $H$ is an induced subgraph of  $P_1+P_3$ or of~$P_4$.
\end{itemize}
\end{theorem}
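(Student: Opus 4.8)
The plan is to deduce (i) and (ii) from results already stated, and to prove the two fixed-parameter tractability claims (iii) and (iv) by a common ``guess a controlling set'' strategy: find a set of vertices whose colours can be fixed in $f(k,r)$ (respectively $f(k)$) ways, and show that each such choice leaves an instance solvable in polynomial time using the structure of the graph class.

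For (i), recall that para-\NP-completeness parameterized by $k$ just means that $k$-{\sc Colouring} is \NP-complete for some fixed value of $k$. If $H$ contains an induced $P_6$, this is Theorem~\ref{t-polysum}(i) with $k=5$. Otherwise $H$ is, by hypothesis, not a linear forest, so either $H$ contains a cycle --- and then, taking a shortest (hence induced) cycle $C_r$ of $H$, Theorem~\ref{t-polysum}(i) (the case $H\si C_r$) shows that $3$-{\sc Colouring} is \NP-complete for $H$-free graphs --- or $H$ is a forest with a vertex of degree at least three, which together with three of its neighbours induces a $K_{1,3}$ (no edge joins these neighbours, as $H$ is acyclic), so $H\si K_{1,3}$ and Theorem~\ref{t-polysum}(i) (the case $H\si K_{1,3}$) again gives \NP-completeness of $3$-{\sc Colouring}. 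For (ii) there is nothing to do: Theorem~\ref{t-kktw}(i) already states that {\sc Colouring} is polynomial-time solvable on these classes, which trivially implies the (parameterized) statement.

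For (iii), with $H=rP_1+P_2$, the key structural observation is that an induced $rP_1+P_2$ sits inside $(r+1)P_2$, so an $(rP_1+P_2)$-free graph $G$ has no induced matching of size $r+1$. After removing isolated vertices (coloured greedily) and handling components one at a time, fix a maximal induced matching $M$ of $G$; then $|V(M)|\le 2r$, and $I:=V\setminus N[V(M)]$ is an independent set, since an edge inside $I$ could be added to $M$. Assigning each vertex of $I$ to a matching edge of $M$ at distance two from it partitions $I$ into at most $r$ blocks, and each block has fewer than $r$ vertices --- otherwise $r$ vertices of a block, together with the two ends of the associated matching edge, would induce $rP_1+P_2$, since the $I$-vertices are non-adjacent to $V(M)$. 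Hence $|I|\le r(r-1)$, and $D:=V(M)\cup I$ is a dominating set with $|D|\le r^2+r$. One then branches over the at most $k^{|D|}$ list colourings of $G[D]$, updates the lists of the remaining vertices, and solves {\sc List Colouring} on $G-D=G[N(V(M))]$. The step I expect to be the main obstacle is exactly this residual instance: it is again $(rP_1+P_2)$-free, but is almost all of $G$, so a naive recursion does not terminate quickly, and one needs an additional argument ensuring the number of colours does not reappear in the exponent of the running time; for this I would rely on the literature.

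For (iv) I would treat $H=P_4$ and $H=P_1+P_3$ separately, since by Observation~\ref{o-in} these give the largest classes involved. For $P_4$-free graphs (cographs) I would compute the cotree and run a dynamic programme: for each node $t$, with associated cograph $G_t$, and each colour set $C\subseteq\{1,\dots,k\}$, store whether $G_t$ has a colouring respecting the lists $L(v)\cap C$; this value is $L(v)\cap C\neq\emptyset$ at a leaf, the conjunction of the children's entries for $C$ at a disjoint-union node, and the disjunction over all $C'\subseteq C$ of the first child's entry for $C'$ with the second child's entry for $C\setminus C'$ at a join node (using that in a join the colour sets of the two sides must be disjoint, and that list-colourability only improves when lists grow). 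This takes time $O(n\cdot 3^k)$. For $(P_1+P_3)$-free graphs I would use that $P_1+P_3$ is the complement of the paw, so $\overline{G}$ is paw-free; by Olariu's theorem every connected paw-free graph is triangle-free or complete multipartite, and complementing shows that $G=G_1\vee\dots\vee G_m$ with each $G_j$ either a disjoint union of cliques or of independence number at most two. Since $\chi(G)=\sum_j\chi(G_j)\ge m$, one may assume $m\le k$, and then branch over the at most $(k+1)^m\le(k+1)^k$ ways of allotting the $k$ colours to the parts (each colour to at most one part); for each allotment, {\sc List Colouring} of a disjoint union of cliques within its colour budget is one bipartite matching per clique, while a part of independence number at most two has at most $2k$ vertices whenever the instance is feasible and is handled by brute force. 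In both cases the branching is routine; the work lies in putting the structural ingredients together.
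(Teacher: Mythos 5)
Parts (i) and (ii) of your argument are correct and coincide with the paper's proof: the paper derives (i) from Theorem~\ref{t-polysum}~(i) in exactly the way you spell out (a shortest cycle of $H$ is induced; a degree-$3$ vertex of a forest together with three of its neighbours induces $K_{1,3}$; the $P_6$ case is Theorem~\ref{t-polysum}~(i):4 with $k=5$), and (ii) is, as you say, just Theorem~\ref{t-kktw}~(i) restated. Your part (iv) is correct but takes a genuinely different, self-contained route: the paper simply cites Jansen and Scheffler~\cite{JS97} for $P_4$-free graphs and Couturier et al.~\cite{CGKP12} for $(P_1+P_3)$-free graphs, whereas your $O(n\cdot 3^k)$ cotree dynamic programme over colour subsets and your decomposition of a $(P_1+P_3)$-free graph into a join of at most $k$ parts (via Olariu's theorem applied to the complement) are both sound and make the dependence on $k$ explicit; the only cost relative to the citations is a slower algorithm for cographs.

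The genuine gap is in (iii). Your structural analysis is fine --- in fact $|I|\leq r-1$ already, since any $r$ vertices of $I$ together with a single edge of $M$ induce $rP_1+P_2$, so the bound $|D|\leq r^2+r$ is generous --- but branching over the $k^{|D|}$ colourings of the dominating set $D$ reduces the problem to {\sc List Colouring} on $G-D$, which is again an arbitrary $(rP_1+P_2)$-free graph with lists of size up to $k$: exactly the instance you started from. Fixing the colours on $D$ forces each remaining vertex to avoid the colours of its neighbours in $D$, but those colours need not lie in its list, so no list is guaranteed to shrink and the recursion makes no measurable progress. The ``additional argument from the literature'' you defer to is therefore not a technical lemma but the entire content of statement (iii); as written, the proof is circular. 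The paper establishes (iii) only by citing Couturier, Golovach, Kratsch and Paulusma~\cite{CGKP12}, whose fixed-parameter algorithm rests on a different structural analysis of $(rP_1+P_2)$-free graphs rather than on iterated branching over dominating sets.
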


\begin{proof}
The first part follows from Theorem~\ref{t-polysum}~(i).  The second part is a restatement of  Theorem~\ref{t-kktw}~(i) (stated again to provide a complete statement on parameterized complexity).  The third part is a result of Couturier et al.~\cite{CGKP12} (they also showed that {\sc Colouring} restricted to $(rP_1+P_2)$-free graphs admits a polynomial kernel 
(see~\cite{Niedermeier06} for a definition)
for every $r\geq 2$, when parameterized
by~$k$).
Couturier et al.~\cite{CGKP12} also proved the {\sc List Colouring} result for $(P_1+P_3)$-free graphs; the result for $P_4$-free graphs was shown by Jansen and Scheffler~\cite{JS97},
who described a linear time algorithm.
\qed
\end{proof}

These results tell us (also see~\cite{CGKP12}) the smallest open cases:

\begin{open}\label{o-2p2}
Is {\sc Colouring} \FPT\ for $2P_2$-free graphs or for $(2P_1+P_3$)-free graphs when parameterized by $k$?
\end{open}

The same question can also be asked for {\sc Precolouring Extension}. 
In fact, we can also see from Theorem~\ref{t-hfreefpt} that the cases $H=2P_2$ and $H=2P_1+P_3$ are 
the two smallest open cases when we consider {\sc List Colouring} for $H$-free graphs parameterized by the number of colours.
Another natural parameter for {\sc List Colouring} is the list size. However, Theorem~\ref{t-kktw}~(iii) shows that 
in that case {\sc List Colouring} is  para-\NP-complete for $H$-free graphs whenever $H$ is not isomorphic to~$P_3$ (and polynomial-time solvable
otherwise).

Ho\`ang et al.~\cite{HKLSS10} asked whether {\sc Colouring} is  \FPT\  for $P_5$-free graphs when parameterized by~$k$.
In the light of Open Problem~\ref{o-2p2}, we slightly reformulate their open problem.

\begin{open}
Is {\sc Colouring}, when parameterized by $k$, $\W[1]$-hard  for $P_5$-free graphs?
\end{open}

Another interesting problem is to determine whether {\sc 3-Colouring} is \W[1]-hard for $P_r$-free graphs when parameterized by $r$
(as we have noted though, we currently do not know whether there exists an integer $r$ such that {\sc 3-Colouring} is \NP-complete for $P_r$-free graphs).

\subsubsection{Certifying Algorithms}

Just as with \NP-hard problems it is natural to try to refine our understanding by asking about fixed-parameter tractability, for problems in  \cP, we ask for polynomial-time  algorithms  that not only find solutions but also  
provide  certificates which demonstrate the correctness of solutions and can be ``easily'' verified.
These algorithms are called {\it certifying} (see, for example, the survey of 
McConnell, Mehlhorn, N\"aher and Schweitzer~\cite{MMNS11}).

For {\sc Colouring}, if the input graph $G=(V,E)$ does have the sought $k$-colouring, then a certifying algorithm can give the colouring as a certificate.    
If $G$ does not have a $k$-colouring, then it must have an induced subgraph that is $(k+1)$-vertex-critical (just delete vertices until one is reached).  If for some class of graphs that is closed under vertex deletion, it is possible to construct the set of all the $(k+1)$-vertex-critical graphs (and this set is finite), then a certifying algorithm for {\sc $k$-Colouring} 
 for that graph class can, when the input graph $G$ is not $k$-colourable, give as a certificate a graph.  To verify the certificate, one must check that
it is an induced subgraph of $G$ and that it is one of the $(k+1)$-vertex-critical graphs for the class.  

We say that a graph $G$ is \emph{$(k+1)$-critical} with respect to a graph class ${\cal G}$ if $\chi(G)=k+1$ and every proper subgraph of $G$ that belongs to ${\cal G}$ is $k$-colourable. 
We will not go through the details, but clearly one can take a similar approach as above using $(k+1)$-critical graphs (rather than $(k+1)$-vertex-critical graphs).
We note that Ho\`ang, Moore, Recoskie, Sawada and Vatshelle~\cite{HMRSV15} observed that if a graph class
has a finite number of $(k+1)$-critical graphs, then
it has a finite number of $(k+1)$-vertex-critical graphs. 

Due to Theorem~\ref{t-polysum}~(iii):4, the case $H=P_5$ is a natural starting point. 
Two certifying algorithms exist for 
$3$-{\sc Colouring} on $P_5$-free graphs. The first one is due to
Bruce, Ho\`ang, and Sawada~\cite{BHS09}. They showed that there exist six 4-critical $P_5$-free graphs in total and gave an explicit construction of these graphs.
The same authors asked whether there exists an algorithm faster than brute force for checking whether a graph contains one of these six 4-critical $P_5$-free graphs as a subgraph.
The second certifying algorithm is due to Maffray and Morel~\cite{MM12}. They showed that there exist twelve 4-vertex-critical $P_5$-free graphs in total and gave an explicit construction of these graphs. The running time of the corresponding  certifying algorithm of Maffray and Morel~\cite{MM12} is linear (and as such answered the question posed by
Bruce et al.~\cite{BHS09}).

For all $k\geq 5$, Ho\`ang, Moore, Recoskie, Sawada and Vatshelle~\cite{HMRSV15} constructed  an infinite set of $k$-vertex-critical $P_5$-free graphs which, as noted, implies that  the set of $k$-critical $P_5$-free graphs is also infinite.
For the case $k=5$, they used an exhaustive computer search to construct an infinite set of $k$-critical $P_5$-free graphs. 

Chudnovsky, Goedgebeur, Schaudt and Zhong~\cite{CGSZ15} proved that there exist 24 4-critical $P_6$-free graphs and 80 4-vertex-critical $P_6$-free graphs. Hence, their result implies the existence of a certifying algorithm that solves $3$-{\sc Colouring} for $P_6$-free graphs (which answers an open problem 
of a previous version of this survey). The same authors also proved that there are infinitely many $4$-critical $P_7$-free graphs. 
Moreover, they observed that there are infinitely many $4$-critical $H$-free graphs if $H$ contains a cycle or a claw (which was to be expected given 
Theorem~\ref{t-polysum}~(i):1-2).
Hence they  showed that
for a connected graph $H$, there are finitely many $4$-critical $H$-free graphs if and only if $H$ is a subgraph of $P_6$. 

\begin{open}\label{o-certifying}
Determine all linear forests $H$, for which there exists a certifying algorithm that solves $3$-{\sc Colouring} for $H$-free graphs.
\end{open}

\subsection{Choosability}\label{ss-choos}
	
Golovach and Heggernes~\cite{GH09} showed that {\sc Choosability} is \NP-hard for $P_5$-free graphs.
Their work was continued by Golovach, Heggernes, van 't Hof and Paulusma who 
implicitly showed the following result in the proof 
of~\cite[Theorem 2]{GHHP12}\footnote{See the Appendix for an explicit proof of this result.}
({\it adding} a dominating vertex to a graph means creating a new vertex and making it adjacent to every existing vertex).

\begin{theorem}\label{t-dominating}
Let ${\cal G}$ be a graph class that is closed under adding dominating vertices.
If {\sc Colouring} is \NP-hard for ${\cal G}$, then {\sc Choosability} is \NP-hard for ${\cal G}$.
\end{theorem}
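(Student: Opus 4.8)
The plan is to give a polynomial-time reduction from {\sc Colouring} restricted to ${\cal G}$ to {\sc Choosability} restricted to ${\cal G}$. Write $\mathrm{ch}(H)$ for the least $\ell$ such that $H$ is $\ell$-choosable, and write $H\vee K_t$ for the graph obtained from $H$ by adding $t$ dominating vertices one at a time (note this is well defined, as the $t$ new vertices end up forming a clique each joined to all of $V(H)$). Given an instance $(G,k)$ of {\sc Colouring} with $G\in{\cal G}$, I would set $n=|V(G)|$, let $G^{*}=G\vee K_{n}$ (so $G^{*}\in{\cal G}$ by the closure hypothesis), set $\ell=k+n$, and output the instance $(G^{*},\ell)$ of {\sc Choosability}. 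The claim is that $G$ is $k$-colourable if and only if $G^{*}$ is $\ell$-choosable.

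The easy direction is the following. If $G^{*}$ is $\ell$-choosable then it is $\ell$-colourable (assign every vertex the list $\{1,\dots,\ell\}$), so $\chi(G^{*})\le\ell$; since $\chi(H\vee K_{t})=\chi(H)+t$ for every graph $H$, we get $\chi(G)=\chi(G^{*})-n\le\ell-n=k$, i.e.\ $G$ is $k$-colourable. This direction uses nothing beyond the trivial inequality $\mathrm{ch}\ge\chi$.

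For the converse I would isolate the combinatorial core as a lemma: for every graph $H$ and every $t\ge 0$,
\[\mathrm{ch}(H\vee K_{t})=\max\{\mathrm{ch}(H),\,\chi(H)+t\}.\]
The lower bound is immediate: $\mathrm{ch}(H\vee K_{t})\ge\chi(H\vee K_{t})=\chi(H)+t$, and $\mathrm{ch}(H\vee K_{t})\ge\mathrm{ch}(H)$ because a bad list assignment on an induced subgraph extends (giving the remaining vertices lists of equally many fresh colours) to a bad list assignment on the whole graph. For the upper bound I would induct on $t$: the case $t=0$ is just $\mathrm{ch}(H)\ge\chi(H)$, and for the step I would write $H\vee K_{t}=(H\vee K_{t-1})\vee K_{1}$, use $\chi(H\vee K_{t-1})=\chi(H)+t-1$, and invoke the single-added-vertex inequality $\mathrm{ch}(F\vee K_{1})\le\max\{\mathrm{ch}(F),\chi(F)+1\}$. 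Granting the lemma, together with the standard bound $\mathrm{ch}(G)\le|V(G)|=n$ (greedily colour from lists of size $n$), if $G$ is $k$-colourable then $\chi(G)\le k$ and $\mathrm{ch}(G^{*})=\max\{\mathrm{ch}(G),\chi(G)+n\}\le\max\{n,k+n\}=k+n=\ell$, so $G^{*}$ is $\ell$-choosable, finishing the reduction.

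The main obstacle is proving the single-added-vertex inequality $\mathrm{ch}(F\vee K_{1})\le\mu$ with $\mu=\max\{\mathrm{ch}(F),\chi(F)+1\}$. One fixes a list assignment $L$ on $F\vee K_{1}$ with all lists of size $\mu$, lets $v$ be the added vertex, and must find a colour $c\in L(v)$ for which $F$ has a proper colouring respecting the reduced lists $L(u)\setminus\{c\}$ (each of size at least $\mu-1$). If $\mu-1\ge\mathrm{ch}(F)$ this is automatic since $F$ is then $\big(L|_{F}\setminus\{c\}\big)$-choosable for every $c$; the difficulty is the remaining case $\mu=\mathrm{ch}(F)$, where the reduced lists have size only $\mathrm{ch}(F)-1$. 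Here I would argue by contradiction: $F$ is $\mu$-choosable, so it has some proper $L|_{F}$-colouring, and every such colouring must use \emph{all} $\mu$ colours of $L(v)$ (otherwise an unused colour of $L(v)$ serves as $c$); then one derives a contradiction by exhibiting a proper $L|_{F}$-colouring that misses a colour of $L(v)$ — for instance, first properly colour the set $W=\{u\in V(F):L(u)=L(v)\}$ (whose induced subgraph together with $v$ has chromatic number at most $\chi(F)+1\le\mu$) using colours of $L(v)$ in such a way that some colour of $L(v)$ is left unused, and then extend to $V(F)\setminus W$, where each vertex $u$ can in particular use colours of $L(u)$ lying outside $L(v)$. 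Making this extension go through in full generality — i.e.\ controlling the interaction between $W$ and the rest of $F$ so that the colouring can always be completed — is the delicate, technical heart of the proof; everything else is bookkeeping.
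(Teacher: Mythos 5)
Your reduction has the same overall shape as the paper's -- pad $G$ with a clique of dominating vertices and ask whether the resulting graph is choosable at threshold equal to its chromatic number -- and your easy direction is identical. The converse, however, rests entirely on the lemma $\mathrm{ch}(H\vee K_t)=\max\{\mathrm{ch}(H),\chi(H)+t\}$, which you derive by induction on $t$ from the single-added-vertex inequality $\mathrm{ch}(F\vee K_1)\le\max\{\mathrm{ch}(F),\chi(F)+1\}$. That inequality is \emph{false}, so the induction collapses at its first step and the ``delicate, technical heart'' you defer cannot be supplied. Concretely, take $F=K_{2,4}+K_{2,4}+K_{2,4}$: then $\chi(F)=2$ and $\mathrm{ch}(F)=\mathrm{ch}(K_{2,4})=3$ (it is $2$-degenerate but not $2$-choosable), so your bound asserts that $F\vee K_1$ is $3$-choosable. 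It is not. Let $M_1,M_2,M_3$ be bad $2$-list assignments of the three copies of $K_{2,4}$ on pairwise disjoint colour sets, all disjoint from $\{1,2,3\}$; give the dominating vertex $v$ the list $\{1,2,3\}$ and give each vertex $u$ of the $i$-th copy the list $M_i(u)\cup\{i\}$. Every list has size exactly $3$, yet if $v$ receives colour $i$ then the $i$-th copy, whose vertices are all adjacent to $v$, must be coloured from the reduced lists $M_i(u)$, which is impossible. So $\mathrm{ch}(F\vee K_1)\ge 4>\max\{\mathrm{ch}(F),\chi(F)+1\}$.

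It is worth noting that your reduction itself (adding $t=|V(G)|=n$ dominating vertices) does happen to be valid, but the only justification I know goes through Ohba's conjecture, proved by Noel, Reed and Wu: since $\chi(G^*)=\chi(G)+n\ge n+1$, we have $|V(G^*)|=2n\le 2\chi(G^*)+1$, hence $\mathrm{ch}(G^*)=\chi(G^*)=\chi(G)+n$ and the equivalence follows. That is a far heavier tool than anything in the paper. The paper's proof avoids it by calibrating the number of added vertices differently: it first pads $G$ so that $\deg_G(u)\ge k$ for every $u$, adds $\sum_{u\in V(G)}(\deg_G(u)-k+1)$ dominating vertices forming a clique $T$, and then colours $T$ greedily from an arbitrary list assignment, always preferring a colour absent from the list of some surviving vertex of $G$ and deleting a vertex $u$ (to be coloured last, by a degeneracy count) once $\deg_G(u)-k+1$ such colours have been used; if the process ever stalls, all surviving vertices of $G$ share $k$ common unused colours and the assumed $k$-colouring of $G$ finishes the job. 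To salvage your cleaner construction you would need to either cite the Noel--Reed--Wu theorem or replace the false single-vertex lemma with an argument of this degree-counting kind.
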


Golovach et al.~\cite{GHHP12} then used Theorem~\ref{t-dominating} to prove the following result.

\begin{theorem}\label{t-ghhp12}
Let $H$ be a graph. Then the following hold:
\begin{itemize}
\item [(i)] If $H\notin \{K_{1,3},P_1,2P_1,3P_1,P_1+P_2,P_1+P_3,P_2,P_3,P_4\}$ then
{\sc Choosability} is \NP-hard for $H$-free graphs. 
\item [(ii)] If $H\in \{P_1,2P_1,3P_1,P_2,P_3\}$ then {\sc Choosability} is
polynomial-time solvable for $H$-free graphs.
\end{itemize}
\end{theorem}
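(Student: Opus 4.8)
The plan is to push {\sc Choosability} down to {\sc Colouring} through Theorem~\ref{t-dominating} wherever the class of $H$-free graphs permits it, and to patch the few remaining cases by hand. First I would record the elementary observation that the class of $H$-free graphs is closed under adding dominating vertices \emph{if and only if} $H$ has no dominating vertex: if $h$ is a dominating vertex of $H$, then $H-h$ is $H$-free but adding a dominating vertex to $H-h$ creates an induced copy of $H$; conversely, if $H$ has no dominating vertex and $G$ is $H$-free, then an induced copy of $H$ in $G$ together with a new dominating vertex $d$ must use $d$, and $d$ would be a dominating vertex of that copy, a contradiction. Combining this observation with Theorem~\ref{t-dominating} and the hardness half of Theorem~\ref{t-kktw}(i) yields at once: {\sc Choosability} is \NP-hard for $H$-free graphs whenever $H$ has no dominating vertex and $H$ is an induced subgraph of neither $P_4$ nor $P_1+P_3$.

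Next I would invoke monotonicity. By Observation~\ref{o-in}, if $H'$ is an induced subgraph of $H$ then every $H'$-free graph is $H$-free, so \NP-hardness of {\sc Choosability} for $H'$-free graphs propagates to $H$-free graphs. Hence it suffices to verify statement (i) for those graphs $H$ that are minimal, in the induced-subgraph order, among all graphs outside the exception list. A routine finite case analysis shows that these minimal graphs are exactly $C_3$, $C_4$, $C_5$, $4P_1$, $2P_2$ and $2P_1+P_2$: none of the six is an induced subgraph of another or of any graph in the exception list, and conversely a graph containing none of the six as an induced subgraph is one of the nine graphs listed in (i). For $C_4$, $C_5$, $4P_1$, $2P_2$ and $2P_1+P_2$ the theorem now follows immediately from the previous paragraph, since each of these five graphs has no dominating vertex and is an induced subgraph of neither $P_4$ nor $P_1+P_3$. (In particular, since $2P_2$ is an induced subgraph of $P_5$, this already reproves the $P_5$-free hardness result of Golovach and Heggernes.)

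The remaining minimal case $H=C_3$ is the crux, and here Theorem~\ref{t-dominating} genuinely does not help: triangle-free graphs do \emph{not} form a class closed under adding dominating vertices, because joining a new vertex to any graph containing an edge creates a triangle. One therefore needs a direct reduction showing that {\sc Choosability} is \NP-hard for triangle-free graphs; this then covers $C_3$-free graphs and, by monotonicity, every $H$-free class with $H$ containing an induced triangle (such as $K_4$, the diamond, the paw, the gem, the bull or the house). I would try to obtain this by reducing from {\sc $3$-Choosability} on general graphs, which is $\Uppi_2^p$-complete and in particular \NP-hard by Theorem~\ref{t-general}(ii), replacing every edge by a triangle-free gadget that forces its two endpoints apart under every list assignment of the relevant size; alternatively one could appeal to a known hardness result for choosability of bipartite graphs. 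Designing a gadget that behaves correctly against arbitrary size-$\ell$ list assignments, rather than against a single fixed colouring, is the delicate point, and is where essentially all the remaining work of part (i) lies. It is also exactly this gap that keeps $K_{1,3}$ --- which has a dominating vertex yet contains no induced triangle, no induced $4P_1$, no induced $2P_2$ and no induced $2P_1+P_2$ --- outside the reach of the argument, so that it remains in the exception list.

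For part (ii) the classes involved are so restricted that {\sc Choosability} can be decided directly. If $H=P_1$ the class is empty, and if $H=P_2$ every member is edgeless and hence $\ell$-choosable for all $\ell\geq 1$. If $H=2P_1$ the class is $\{K_n\}$, and a greedy argument (equivalently Hall's theorem on systems of distinct representatives) shows that $K_n$ is $\ell$-choosable if and only if $\ell\geq n$. If $H=P_3$ the class consists of disjoint unions of complete graphs, whose components may be list-coloured independently, so such a graph is $\ell$-choosable if and only if $\ell\geq\omega(G)$. Finally, if $H=3P_1$ then $\alpha(G)\leq 2$, so every colour class of a proper colouring of $G$ is a vertex or an edge of $\overline{G}$, and $\ell$-choosability reduces to a polynomially testable (matching-type) condition on $\overline{G}$; this is the only part of (ii) that requires a genuine, if short, argument.
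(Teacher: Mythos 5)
Your overall architecture matches the paper's: Theorem~\ref{t-ghhp12} is indeed obtained in~\cite{GHHP12} by pushing {\sc Choosability} down to {\sc Colouring} via Theorem~\ref{t-dominating}, and your reduction of part~(i) to the six minimal graphs $C_3,C_4,C_5,4P_1,2P_2,2P_1+P_2$ (via the observation that $H$-freeness is preserved under adding dominating vertices precisely when $H$ has no dominating vertex, plus Observation~\ref{o-in}) is correct and cleanly handles five of the six. However, two steps are left genuinely open. First, the case $H=C_3$: you correctly diagnose that Theorem~\ref{t-dominating} cannot apply, but you then leave the triangle-free hardness as an unresolved gadget-design problem, describing it as ``where essentially all the remaining work of part (i) lies.'' No gadget construction is needed: Theorem~\ref{t-gutnertarsi} (Gutner and Tarsi) states that $3$-{\sc Choosability} is $\Uppi_2^p$-complete for bipartite graphs, and bipartite graphs are triangle-free, so {\sc Choosability} is already $\Uppi_2^p$-hard, hence \NP-hard, on $C_3$-free graphs (this is exactly how Theorem~\ref{t-uppi} covers graphs $H$ containing an odd cycle). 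You mention this alternative in passing but do not commit to it, so as written the $C_3$ branch of part~(i) is not closed.

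Second, the case $H=3P_1$ in part~(ii) is asserted rather than proved. Saying that $\ell$-choosability ``reduces to a polynomially testable (matching-type) condition on $\overline{G}$'' conceals the actual content: since even {\sc $3$-List Colouring} is \NP-complete on $3P_1$-free graphs (Jansen's result, quoted before Theorem~\ref{t-kktw}), one cannot decide individual list assignments by a matching computation, and choosability quantifies over \emph{all} size-$\ell$ assignments. What is actually needed is the nontrivial fact that graphs with independence number at most two are chromatic-choosable, i.e.\ $\ell$-choosable if and only if $\ell\geq\chi(G)$ (a special case of Ohba's conjecture); only after that does the problem collapse to computing $\chi(G)=|V(G)|-\mu(\overline{G})$ via a maximum matching in the complement. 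That theorem is the real work of the $3P_1$ case and is missing from your argument. The remaining cases of part~(ii) ($P_1$, $2P_1$, $P_2$, $P_3$) are fine as you state them.
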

Note that there are four missing cases in Theorem~\ref{t-ghhp12}: 
when $H\in \{K_{1,3},P_1+P_2,P_1+P_3,P_4\}$. 

The following result is due to Gutner~\cite{Gu96}.

\begin{theorem}
\label{t-gutner}
{\sc $3$-Choosability} and {\sc $4$-Choosability} are $\Uppi_2^p$-complete for planar graphs.
\end{theorem}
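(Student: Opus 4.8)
The plan has two parts: membership in $\Uppi_2^p$, which is immediate from the definition, and $\Uppi_2^p$-hardness, which is the whole content of Gutner's theorem.

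\medskip\noindent\textbf{Membership.} After relabelling colours, a graph has \emph{some} bad $\ell$-list assignment (one admitting no respecting colouring) if and only if it has one whose colours all lie in $\{1,\dots,\ell|V(G)|\}$. Hence $G$ is $\ell$-choosable if and only if for every $\ell$-list assignment $L$ using colours from $\{1,\dots,\ell|V(G)|\}$ there is a map $c\colon V(G)\to\{1,\dots,\ell|V(G)|\}$ that is a proper colouring respecting $L$. This is a $\forall\exists$ statement in which both quantified objects have polynomial size and the matrix is polynomial-time checkable, so $\ell$-\textsc{Choosability} lies in $\Uppi_2^p$ for every fixed $\ell$, in particular for $\ell\in\{3,4\}$; restricting to planar inputs does not affect this.

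\medskip\noindent\textbf{Hardness: the construction.} For $\ell=3$ I would reduce from a $\Uppi_2^p$-complete problem such as deciding the truth of a quantified CNF formula $\Psi=\forall x_1\cdots x_n\,\exists y_1\cdots y_m\,\phi(X,Y)$ (one could instead reduce directly from $3$-\textsc{Choosability} on general graphs, which is $\Uppi_2^p$-complete by Theorem~\ref{t-general}(ii), and planarise, but making a crossover gadget work under \emph{every} list assignment is itself delicate, so organising the whole layout from the formula is cleaner). The goal is a polynomial-time-computable planar graph $G_\Psi$ that is $3$-choosable if and only if $\Psi$ is true; equivalently, $G_\Psi$ has a bad $3$-list assignment exactly when some assignment to $X$ makes $\exists Y\,\phi$ false. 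The graph $G_\Psi$ is assembled from planar gadgets along ``wires'': (i) for each universal variable $x_i$, a gadget with a designated \emph{selector} vertex, engineered so that, among all $3$-element lists the adversary may put on the selector, the only ones not strictly advantageous to the colourer are two distinguished lists, read as ``$x_i$ true'' and ``$x_i$ false'', and this choice propagates along the wires (via ordinary list-colouring copy gadgets) as a one-colour restriction on the literal occurrences of $x_i$; (ii) for each clause, a planar list-colouring gadget of the type used to prove \textsc{List Colouring} \NP-complete on planar graphs, joined to the incident literal wires and to fresh vertices encoding the $y_j$, which is $L$-colourable exactly when some incident literal is ``satisfied''; (iii) a planar \emph{crossover} gadget at each forced crossing of wires, relaying the colour restrictions along both wires under every $3$-list assignment it may receive.

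\medskip\noindent\textbf{Hardness: correctness.} If $\Psi$ is true, then from an arbitrary $3$-list assignment $L$ of $G_\Psi$ one reads off a truth assignment $\beta$ to $X$ from the selector lists (using a fixed default when a selector's list is neither distinguished option); a $Y$-assignment satisfying $\phi(\beta,\cdot)$ then guides a colouring of the existential vertices and of all clause, copy, variable and crossover gadgets, yielding a colouring respecting $L$, so $G_\Psi$ is $3$-choosable. Conversely, if $\Psi$ is false, choose $\beta$ with $\phi(\beta,\cdot)$ unsatisfiable, put the distinguished lists encoding $\beta$ on the selectors, and complete $L$ so each gadget behaves as designed; any colouring respecting $L$ would then exhibit a satisfying $Y$ for $\phi(\beta,\cdot)$, which is impossible, so $L$ is bad and $G_\Psi$ is not $3$-choosable.

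\medskip\noindent\textbf{Main obstacle and the case $\ell=4$.} The difficulty is not the global logic but the gadgets: each must simultaneously be planar, ``transparent'' (never by itself obstruct a colouring under any $3$-list assignment the adversary could impose on it), and yet force the intended one-colour restriction precisely when the selector lists take their distinguished values. Proving that only two genuinely distinct worst-case lists survive on each variable gadget, and that the crossover gadgets create no spurious obstruction, is the technical heart; this is exactly where the $\forall L$ quantifier makes the reduction harder than the analogous ones for \textsc{Colouring} or \textsc{List Colouring}. For \textsc{$4$-Choosability} the same scheme is carried out with lists of size four and gadgets rescaled accordingly — the problem being non-trivial only because not every planar graph is $4$-choosable, a fact one also uses to calibrate the gadgets — and membership in $\Uppi_2^p$ is as above.
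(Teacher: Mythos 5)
The paper does not actually prove this theorem: it is quoted as a known result and attributed to Gutner~\cite{Gu96}, so there is no in-paper argument to compare yours against. Judged on its own terms, your proposal has a correct and complete membership argument (the standard renaming of colours into a polynomial-size palette, yielding a $\forall\exists$ predicate with a polynomial-time matrix, hence membership in $\Uppi_2^p$), but the hardness half is a strategy outline rather than a proof, and the gap it leaves is essentially the entire content of the theorem.

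Concretely, you never construct the variable, clause and crossover gadgets; you only list the properties they would need to have. The assertion that there is a planar selector gadget on which ``the only $3$-element lists not strictly advantageous to the colourer are two distinguished lists'' is precisely the statement that has to be proved, and it is not obvious: under the $\forall L$ quantifier the adversary controls every list in the gadget simultaneously, so transparency must be verified against all list assignments, not just the two intended ones. The same applies to the claim that a planar crossover gadget relays colour restrictions ``under every $3$-list assignment it may receive''. You correctly identify this as the technical heart of the argument, but identifying the heart is not supplying it. The treatment of $4$-\textsc{Choosability} is weaker still: since every planar graph is $4$-colourable and $5$-choosable, the gadgets cannot be obtained by ``rescaling'' the $3$-choosability ones; Gutner's construction for $\ell=4$ is built around specific planar graphs that fail to be $4$-choosable (in the spirit of Voigt's examples), and an analogous bespoke construction using non-$3$-choosable planar bipartite graphs underlies the $\ell=3$ case. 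Without these explicit gadgets and their correctness proofs the reduction does not exist, so your proposal should be regarded as an accurate plan for a proof rather than a proof.
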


Gutner and Tarsi~\cite{GT09} showed the following result.

\begin{theorem}
\label{t-gutnertarsi}
For all $k\geq 3$,  {\sc $k$-Choosability} is $\Uppi_2^p$-complete on bipartite graphs.
\end{theorem}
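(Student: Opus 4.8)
\textbf{Membership.} For any graph $G$ and fixed $k$, the assertion ``$G$ is $k$-choosable'' has the shape ``for every $k$-list assignment $L$ of $G$ there is a colouring of $G$ respecting $L$''. A $k$-list assignment and a colouring both have size polynomial in $|V(G)|$, and verifying that a colouring is proper and respects a given list assignment takes polynomial time; hence $k$-{\sc Choosability} lies in $\Uppi_2^p$ for all graphs, in particular for bipartite graphs. (This is already the easy half of Theorem~\ref{t-general}(ii).) So the content is the $\Uppi_2^p$-hardness on the restricted class.

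\textbf{The reduction.} The plan is to reduce from $k$-{\sc Choosability} on general graphs, which is $\Uppi_2^p$-complete for every fixed $k\ge 3$ by Theorem~\ref{t-general}(ii). Given an arbitrary graph $G$, build a bipartite graph $G'$ with a declared bipartition $(A,B)$ satisfying $V(G)\subseteq A$, as follows: delete every edge $uv$ of $G$ and glue in its place a fixed bipartite \emph{edge gadget} $D_k(u,v)$ that attaches to the rest of $G'$ only at its two \emph{terminals} $u,v$ (both placed in $A$) and whose internal vertices are new; since $D_k$ can be laid out respecting $(A,B)$ no matter what $G$ looks like, $G'$ is bipartite, and the construction is clearly polynomial. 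The gadget must satisfy two complementary requirements:
\begin{itemize}
\item[(i)] there is a $k$-list assignment on the internal vertices of $D_k(u,v)$ for which the gadget cannot be list-coloured when its two terminals receive the \emph{same} colour;
\item[(ii)] for \emph{every} $k$-list assignment on the internal vertices of $D_k(u,v)$, the gadget can be list-coloured whenever its two terminals receive \emph{distinct} colours.
\end{itemize}
Given these, correctness follows. If $G$ is not $k$-choosable, fix a bad $k$-list assignment $L$ on $V(G)$; keep $L$ on $V(G)$ and put the lists from (i) on every gadget. In any proper colouring $c$ of $G'$ respecting this assignment, $c|_{V(G)}$ respects $L$, so by badness of $L$ some edge $uv$ of $G$ has $c(u)=c(v)$, and then $D_k(u,v)$ cannot be completed, a contradiction; hence $G'$ is not $k$-choosable. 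Conversely, if $G$ is $k$-choosable, take any $k$-list assignment $L'$ on $G'$; its restriction to $V(G)$ is a $k$-list assignment on $G$, so there is a proper colouring $c$ of $G$ respecting it, and then $c(u)\ne c(v)$ for every edge $uv$ of $G$, so by (ii) each gadget (its internal vertices carrying whatever lists $L'$ assigns them) can be completed, yielding a colouring of $G'$ respecting $L'$; hence $G'$ is $k$-choosable.

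\textbf{The gadget and the main obstacle.} The crux is constructing, for each $k\ge 3$, a bipartite gadget meeting (i) and (ii) at once, and this is exactly where the work lies. The tension is intrinsic: (ii) must hold against \emph{all} internal list assignments (in a choosability instance the list-giving adversary is unhelpful), so the gadget must be loose enough to always be completable once the terminals differ, while (i) forces it to be tight enough to break when they agree; moreover every internal list has size exactly $k$, so one cannot hard-wire colours and the adversary's generous choices have to be absorbed by padding inside the gadget. The way in is that bipartite graphs with arbitrary lists are already very expressive: complete bipartite graphs furnish non-$k$-choosable bipartite structure (the Erd\H{o}s--Rubin--Taylor obstructions), and $3$-{\sc List Colouring} is already \NP-hard on complete bipartite graphs by Theorem~\ref{t-cb}, so a disequality constraint between two attachment vertices can be realised bipartitely. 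A secondary point is uniformity in $k$: either exhibit a parametric family of gadgets directly, or first settle $k=3$ (in the spirit of Gutner's planar construction, Theorem~\ref{t-gutner}) and then lift to arbitrary $k\ge 3$ by an auxiliary bipartite gadget that neutralises $k-3$ of the colours. Finally, the exclusion of $k\le 2$ is forced and consistent: $2$-{\sc Choosability} is polynomial-time solvable, and indeed requirements (i) and (ii) are jointly unsatisfiable when $k=2$.\qed
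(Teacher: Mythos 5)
The paper offers no proof of this theorem --- it is imported verbatim from Gutner and Tarsi~\cite{GT09} --- so your proposal has to stand on its own. The membership half and the outer logic of your reduction are fine, but the object everything rests on, an edge gadget $D_k(u,v)$ satisfying (i) and (ii) simultaneously, provably does not exist for any $k$. Take the internal list assignment $M_0$ and a common terminal colour $a$ from (i); the uncolourable ``effective'' instance on $D_k-\{u,v\}$ has lists $M_0(w)\setminus\{a\}$ (size at least $k-1$) on internal neighbours of the terminals and $M_0(w)$ (size $k$) elsewhere. Now choose two fresh colours $a'\neq b'$ not occurring anywhere in $M_0$, and define a $k$-list assignment $M$ by adding $a'$ to the list of each internal neighbour of $u$ that actually lost the colour $a$, adding $b'$ likewise for neighbours of $v$, and leaving all other lists equal to $M_0$. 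Then for every internal vertex $w$, the set $M(w)$ minus the colours of its terminal neighbours under the precolouring $u\mapsto a'$, $v\mapsto b'$ is contained in $M_0(w)\setminus\{a\}$ (respectively $M_0(w)$). Condition (ii), applied to $M$ with the distinct terminal colours $a',b'$, therefore produces a proper colouring of $D_k-\{u,v\}$ that also respects the lists of the instance from (i), contradicting (i). The structural reason is the one you half-noticed: colouring both terminals with the \emph{same} colour deletes at most one colour from each internal list, whereas distinct terminal colours can delete two, so for any fixed gadget the ``equal'' case is never harder than the ``distinct'' case --- and since (ii) must survive \emph{every} internal list assignment while (i) only needs one, the two requirements are jointly unsatisfiable for all $k$, not merely for $k=2$.

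Because of this, the whole strategy of reducing from $k$-{\sc Choosability} on general graphs by local edge replacement cannot be repaired in this form; no ``disequality'' gadget of the kind that works for {\sc List Colouring} (where you control every list) survives the universal quantification over lists inherent in choosability. The actual proofs in the literature --- Erd\H{o}s, Rubin and Taylor~\cite{ERT79} for general graphs, Gutner~\cite{Gu96} for planar graphs and Gutner and Tarsi~\cite{GT09} for bipartite graphs --- instead reduce from a $\forall\exists$-quantified satisfiability problem, letting the adversary's choice of lists encode the universally quantified variables directly and building variable/clause gadgets inside the target class. If you want to salvage a write-up, that is the route to take; your reduction skeleton should be discarded rather than completed.
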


Hence, for some graphs $H$, Theorem~\ref{t-ghhp12} can be strengthened: statements of NP-hardness can be replaced by stronger statements of $\Uppi^p_2$-hardness.

\begin{theorem}\label{t-uppi}
Let $H$ be a graph. Then {\sc Choosability} is $\Uppi_2^p$-hard for $H$-free graphs if $H$ is non-planar or contains an odd cycle.
\end{theorem}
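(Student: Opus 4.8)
The plan is to exploit the two special graph classes for which $\Uppi_2^p$-hardness of choosability is already in hand, namely planar graphs (Theorem~\ref{t-gutner}) and bipartite graphs (Theorem~\ref{t-gutnertarsi}). The observation driving the whole argument is that, for a suitably chosen forbidden $H$, one of these classes is entirely contained in the class of $H$-free graphs, so the hardness transfers upward for free; and that hardness of $\ell$-{\sc Choosability} for a fixed $\ell$ immediately yields hardness of {\sc Choosability} via the trivial reduction that hands over the same graph together with the value $\ell$ as part of the input.

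First I would treat the case that $H$ contains an odd cycle. Let $C$ be a shortest odd cycle of $H$, say of length $2t+1$ with $t\geq 1$. A short argument shows $C$ is chordless: a chord of an odd cycle splits it into two strictly shorter cycles whose lengths sum to an odd number, so one of them is a strictly shorter odd cycle, contradicting the choice of $C$. Hence $C_{2t+1}\ssi H$. Every bipartite graph has no cycle of odd length, so it is $C_{2t+1}$-free and therefore $H$-free by Observation~\ref{o-in}. By Theorem~\ref{t-gutnertarsi}, $3$-{\sc Choosability} is $\Uppi_2^p$-hard on bipartite graphs, hence so is {\sc Choosability} (append $\ell=3$); since bipartite graphs form a subclass of the $H$-free graphs, {\sc Choosability} is $\Uppi_2^p$-hard on $H$-free graphs.

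Next I would treat the case that $H$ is non-planar. Every induced subgraph of a planar graph is planar, so no planar graph can contain the non-planar graph $H$ as an induced subgraph; thus every planar graph is $H$-free. By Theorem~\ref{t-gutner}, $3$-{\sc Choosability} is $\Uppi_2^p$-complete on planar graphs, so by the same padding reduction {\sc Choosability} is $\Uppi_2^p$-hard on planar graphs, and hence on the larger class of $H$-free graphs. Combining the two cases gives the statement.

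There is no serious obstacle in this argument once Theorems~\ref{t-gutner} and~\ref{t-gutnertarsi} are invoked; the only point needing a genuine (if routine) argument is that the shortest odd cycle of $H$ is induced, and the only subtleties to state carefully are that hardness for a fixed list size $\ell$ implies hardness for {\sc Choosability} where $\ell$ is part of the input, and that hardness restricted to a subclass of the $H$-free graphs implies hardness for the whole class of $H$-free graphs.
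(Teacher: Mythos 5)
Your proposal is correct and follows essentially the same route as the paper, which derives Theorem~\ref{t-uppi} directly from Theorems~\ref{t-gutner} and~\ref{t-gutnertarsi} by observing that planar graphs are $H$-free when $H$ is non-planar and bipartite graphs are $H$-free when $H$ contains an odd cycle. Your extra care in verifying that a shortest odd cycle of $H$ is chordless is fine but not strictly needed, since a bipartite graph contains no odd cycle even as a (non-induced) subgraph and hence cannot contain $H$ as an induced subgraph.
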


We describe two open problems for {\sc Choosability}.  The first asks for the resolution of the missing cases of Theorem~\ref{t-ghhp12}.

\begin{open}\label{o-first}
Is {\sc Choosability} \NP-hard for $H$-free graphs if $H\in \{K_{1,3},P_1+P_2,P_1+P_3,P_4\}$?
\end{open}

We observe that for $H\in \{P_1+P_2,P_1+P_3,P_4\}$,  the class of $H$-free graphs contains the class of complete bipartite graphs as a subclass. As noted 
by Golovach et al.~\cite{GHHP12}, the computational complexity of {\sc Choosability} on complete bipartite graphs is still open as well. We discuss the fourth open case $H=K_{1,3}$ in more detail later.

The second open problem for {\sc Choosability}
asks for an extension of Theorem~\ref{t-uppi}: 

\begin{open}\label{o-second}
Is {\sc Choosability} $\Uppi_2^p$-hard for \emph{all} those classes for which it is \NP-hard.
\end{open}

If $H\in \{P_1+P_2,P_1+P_3,P_4\}$, then the class of $H$-free graphs contains the class of complete bipartite graphs as a subclass. 
Even the complexity status of {\sc Choosability} for complete bipartite graphs is open. This could be a possible direction for further research.
We also make the following remark, which shows that another natural approach does \emph{not} work.
In contrast to {\sc Precolouring Extension}, there exist graphs $H$ for which {\sc List Colouring} is \NP-complete when restricted to $H$-free graphs, while {\sc Colouring} becomes polynomial-time solvable. 
However, it is not possible (unfortunately) to strengthen Theorem~\ref{t-dominating} by replacing the \NP-hardness of {\sc Colouring}
by \NP-hardness of 
{\sc List Colouring} as a sufficient condition for \NP-hardness of {\sc Choosability}. For instance, let ${\cal G}$ be the class of $(3P_1,P_1+P_2)$-free
graphs.
It is known that {\sc List Colouring} is \NP-complete for this graph class~\cite{GPS12}, which is closed under adding of dominating vertices, while {\sc Choosability} is polynomial-time solvable even for $3P_1$-free graphs due to Theorem~\ref{t-ghhp12}.

As an aside, there also exist graph classes for which {\sc Precolouring Extension} is \NP-hard but  {\sc Choosability} is polynomial-time solvable.
Galvin~\cite{Ga95} showed that every line graph of a bipartite graph is $k$-choosable if and only if it is $k$-colourable.
Because line graphs of bipartite graphs are  
perfect~\cite{Ga58}, and {\sc Colouring} can be solved in polynomial time on perfect graphs~\cite{GLS84}, this means that {\sc Choosability} is polynomial-time solvable on such graphs. However, {\sc Precolouring Extension} is \NP-complete even for
line graphs of complete bipartite graphs, as shown by Hujter and Tuza~\cite{HT93}.

\medskip
\noindent
We now consider $k$-{\sc Choosability}.
Golovach and Heggernes~\cite{GH09} showed that $k$-{\sc Choosability} is linear-time solvable on $P_5$-free graphs.
Golovach et al.~\cite{GHHP12} extended this result and proved statement~(i) of Theorem~\ref{t-fpt} below. Statement~(ii) of this theorem 
follows from Theorem~\ref{t-gutnertarsi}, whereas statement (iii) follows from Theorem~\ref{t-gutner}.
Also recall that $2$-{\sc Choosability} is polynomial-time solvable for general graphs by Theorem~\ref{t-general}.

\begin{theorem}\label{t-fpt}
Let $H$ be graph. Then the following three statements hold for $H$-free graphs:
\begin{itemize}
\item [(i)] For all $k\geq 1$, $k$-{\sc Choosability} is linear-time solvable if $H$ is a linear forest.
\item [(ii)] For all $k\geq 3$, $k$-{\sc Choosability} is $\Uppi_2^p$-hard if $H$ contains an odd cycle.
\item [(iii)] For $3\leq k\leq 4$, $k$-{\sc Choosability} is $\Uppi_2^p$-hard if $H$ is non-planar.
\end{itemize}
\end{theorem}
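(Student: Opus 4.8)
\medskip\noindent\textbf{Proof proposal.}
Parts (ii) and (iii) should follow at once from results already available. If $H$ contains an odd cycle then $H$ is not bipartite, so no bipartite graph can contain $H$ even as a (not necessarily induced) subgraph; hence every bipartite graph is $H$-free, i.e.\ the class of $H$-free graphs contains all bipartite graphs. Since a problem that is $\Uppi_2^p$-hard on a subclass is $\Uppi_2^p$-hard on any superclass, Theorem~\ref{t-gutnertarsi} yields (ii). For (iii), if $H$ is non-planar then every planar graph is $H$-free (a subgraph of a planar graph is planar), so the class of $H$-free graphs contains all planar graphs, and (iii) follows from Theorem~\ref{t-gutner}. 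Part (i) is the substantive statement, and the plan for it is as follows.

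First I would reduce to paths. A linear forest $H$ with $m$ vertices and $c$ components is an induced subgraph of $P_{m+c-1}$ (lay the paths of $H$ out along $P_{m+c-1}$, separated by single vertices), so by Observation~\ref{o-in} every $H$-free graph is $P_t$-free for $t=m+c-1$. Thus it suffices to show that, for every fixed $k\ge1$ and $t\ge1$, $k$-{\sc Choosability} is linear-time solvable on $P_t$-free graphs. The strategy is to establish a \emph{finite} forbidden-induced-subgraph characterisation: call a graph $k$-\emph{choice-critical} if it is not $k$-choosable but every proper induced subgraph of it is. Since $k$-choosability is inherited by induced subgraphs, a $P_t$-free graph is $k$-choosable if and only if it has no induced subgraph isomorphic to a $P_t$-free $k$-choice-critical graph; the task is then to prove that there are only finitely many such graphs, each of bounded order.

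To bound a $P_t$-free $k$-choice-critical graph $G$: it is connected (otherwise some connected component is not $k$-choosable yet is a proper induced subgraph), so, being $P_t$-free, it has diameter at most $t-2$. Standard list-criticality arguments (take a list assignment with all lists of size $k$ that $G$ does not satisfy, and pass to a minimal non-colourable configuration) show that every vertex has degree at least $k$, and that $\omega(G)\le k$ unless $G=K_{k+1}$, since $K_{k+1}$ is not $k$-choosable and a clique is always an induced subgraph. The remaining, delicate, step is to bound $\Delta(G)$: using $\omega(G)\le k$ and Ramsey's theorem, a vertex of very large degree has a large independent set in its neighbourhood, i.e.\ a large induced star centred at that vertex, and this has to be combined with the degree lower bound on the leaves, the bounded diameter, and $P_t$-freeness to reach a contradiction. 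Together with the diameter bound this forces $|V(G)|\le c(k,t)$, so the forbidden family is finite.

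Finally, one tests an input graph for containment of one of these finitely many fixed bounded-size induced subgraphs in polynomial time, and then refines this to linear time by exploiting the structure of $P_t$-free graphs (as in~\cite{GH09} for $P_5$-free graphs and~\cite{GHHP12} in general). The step I expect to be the main obstacle is the structural bound on $\Delta(G)$ — the Ramsey-type argument must be meshed carefully with the precise shape of a minimal bad list assignment — with a secondary difficulty in squeezing the running time from polynomial down to linear, which needs further structural work on $P_t$-free graphs.
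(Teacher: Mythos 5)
Parts (ii) and (iii) of your argument are correct and coincide with the paper's: since a graph containing an odd cycle is non-bipartite, all bipartite graphs are $H$-free and Theorem~\ref{t-gutnertarsi} applies; since induced subgraphs of planar graphs are planar, all planar graphs are $H$-free when $H$ is non-planar and Theorem~\ref{t-gutner} applies. Your reduction of (i) to $P_t$-free graphs via Observation~\ref{o-in} is also fine. However, the survey does not prove (i) at all --- it cites \cite{GHHP12} (extending \cite{GH09}) --- and your proposal for (i) is where the real content lies, and it has a genuine gap.

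The gap is the finiteness of the family of $P_t$-free $k$-choice-critical graphs, which you reduce to bounding $\Delta(G)$ but do not establish; moreover, the toolkit you propose for that step cannot suffice. All of the structural facts you plan to combine --- connectedness, diameter at most $t-2$, minimum degree at least $k$, $\omega(G)\le k$, and a large induced star at a high-degree vertex --- are simultaneously satisfied by $K_{k,n}$ for arbitrarily large $n$: it is $P_4$-free, triangle-free, has diameter~$2$, minimum degree $k$ and $\Delta=n$. So no contradiction can be derived from these properties alone; to exclude large degree in a \emph{critical} graph you must invoke choosability itself, e.g.\ the fact that $K_{m,m}$ (or $K_{k,k^k}$) is not $k$-choosable, or Alon's theorem that large minimum degree forces choice number greater than $k$ (hence $k$-choosable graphs have degeneracy bounded by a function of $k$, and a critical graph cannot properly contain a non-$k$-choosable complete bipartite subgraph). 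This is the engine of the known proof and is absent from your sketch. A second, smaller gap is the running time: even granted a finite obstruction set with members of size at most $s$, testing for them naively costs $O(n^s)$, and the theorem claims linear time; the linear-time algorithm comes from trimming low-degree vertices and exploiting bounded degeneracy/treewidth (cf.\ Theorem~\ref{t-chooso}), not from induced-subgraph testing. As written, your proof of (i) is a plan whose decisive steps remain open.
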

Theorem~\ref{t-fpt} leads to the following open problem.

\begin{open}\label{o-k13}
For all $k\geq 3$, determine the complexity of $k$-{\sc Choosability} on $H$-free graphs when $H$ is a 
bipartite graph that is not a linear forest.
\end{open}

Open Problem~\ref{o-k13}  
seems difficult
due to its connection to the 
well-known and long-standing List Colouring Conjecture, for which the aforementioned  result of Galvin~\cite{Ga95} is a special case.
This conjecture states that every line graph is $k$-choosable if and only if it is $k$-colourable. This conjecture is usually attributed to Vizing (cf.~\cite{HC92}).
As observed by Golovach et al.~\cite{GHHP12}, $k$-{\sc Choosability} is \NP-hard on $K_{1,3}$-free graphs for every $k\geq 3$ if the List Colouring Conjecture is true. This could mean that Theorem~\ref{t-fpt}~(i) is best possible.

\section{Results and Open Problems for $(H_1,H_2)$-Free Graphs}\label{s-hfreetwo}

When we forbid two induced subgraphs, only partial results are known for {\sc Colouring} and its variants. 
We survey these results below. First we need some other other results starting with the following theorem of Maffray and Preissmann~\cite{MP96}.

\begin{theorem}
\label{t-mp96}
$3$-{\sc Colouring} is \NP-complete for $C_3$-free graphs of maximum degree at most~$4$.
\end{theorem}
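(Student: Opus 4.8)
The plan is to give a polynomial-time reduction to $3$-{\sc Colouring} on triangle-free graphs of maximum degree at most $4$ from a suitably restricted form of $3$-{\sc Sat} --- concretely, from $3$-{\sc Sat} with clauses of exactly three literals in which every variable occurs in at most a bounded number of clauses, a restriction under which the problem is still \NP-complete and which is exactly what will let us keep every degree at most $4$. Membership in \NP\ is immediate since a $3$-colouring is a polynomial-size certificate, so only \NP-hardness needs to be established.

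Given such a formula $\varphi$ with variables $x_1,\dots,x_n$ and clauses $C_1,\dots,C_m$, I would build a graph $G_\varphi$ out of three kinds of triangle-free gadgets, taking copies of $C_5$ --- the smallest triangle-free graph of chromatic number $3$ --- as the basic constraint-carrying block. First, a \emph{ground} gadget that, after permuting colours, pins one colour down as a distinguished ``neutral'' colour; since a single neutral vertex adjacent to all wires would have unbounded degree, this gadget instead consists of many neutral copies linked so as to be forced to a common colour while staying triangle-free. Second, for each variable $x_i$ a \emph{variable} gadget whose proper $3$-colourings (once the neutral colour is fixed) are exactly the two truth assignments of $x_i$, exposing one ``output'' vertex per clause containing $x_i$, all outputs forced to encode the same literal value, and arranged so that no vertex exceeds degree $4$ (a small triangle-free fan-out being used when a variable occurs several times). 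Third, for each clause $C_j$ an \emph{OR} gadget on its at most three literal-outputs that is triangle-free and properly $3$-colourable precisely when at least one input carries the ``true'' colour; a negated literal is realized by inserting, on the connecting wire, a short triangle-free ``swap'' gadget exchanging the two non-neutral colours.

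Correctness would be checked in the two standard directions. From a satisfying assignment: colour the ground gadget neutral, colour each variable gadget according to the assignment, and extend over every OR gadget using one satisfied literal, obtaining a proper $3$-colouring of $G_\varphi$. Conversely, from a proper $3$-colouring: permute colours so the ground gadget is neutral, read a truth assignment off the variable outputs, and observe that since every OR gadget is properly coloured, every clause is satisfied. One then verifies, by inspecting each gadget and each interface between gadgets, the two structural properties: that no triangle is ever created, and that every vertex --- in particular the neutral vertices and the fan-out vertices serving several clauses --- has degree at most $4$, the latter using the bound on the number of occurrences of each variable.

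The main obstacle is meeting triangle-freeness and the tight bound $4$ on the degree \emph{simultaneously} inside the gadgets. With only three colours a triangle-free graph of small degree is ``loose'' --- a single edge or short path constrains its endpoints hardly at all --- so the delicate points are (a) building an OR gadget, and a mechanism that produces several consistent copies of a variable's value, that genuinely force the intended constraints without using a triangle, and (b) then checking that every such gadget still fits under degree $4$ at all of its vertices and at every junction with another gadget; a naive gadget will typically either contain a triangle, or overload a shared vertex past degree $4$, or simply fail to propagate the constraint it was meant to carry.
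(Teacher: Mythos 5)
There is a genuine gap, and you have in fact named it yourself in your closing paragraph: every piece of actual mathematical content in this theorem lives inside the gadgets, and you never exhibit a single one of them. The statement ``$3$-{\sc Colouring} is \NP-complete for triangle-free graphs'' is not hard because the reduction template is unclear --- reducing from a bounded-occurrence variant of $3$-{\sc Sat} is a perfectly standard frame --- it is hard because one must produce concrete triangle-free, maximum-degree-$4$ graphs that (a) force many pairwise non-adjacent vertices to receive a common ``neutral'' colour, (b) replicate a variable's value consistently across several outputs, and (c) realize a clause/OR constraint. None of these is routine once triangles are banned: the standard equality gadget (two vertices with two mutually adjacent common neighbours) is itself a triangle, so even your ``ground'' gadget --- a $3$-colourable triangle-free graph of maximum degree $4$ in which a large independent set of terminals is monochromatic in \emph{every} proper $3$-colouring --- is a nontrivial object whose existence must be demonstrated, not assumed. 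As written, both directions of your correctness argument and both structural checks (triangle-freeness, degree at most $4$) quantify over gadgets that do not yet exist, so nothing can be verified. A proof outline that defers exactly the step it identifies as ``the main obstacle'' is not a proof.

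For comparison: the survey itself does not prove this theorem but cites Maffray and Preissmann~\cite{MP96}. Their argument does not go through {\sc Sat} at all; it reduces from $3$-{\sc Colouring} on graphs of maximum degree $4$ (\NP-complete by Garey, Johnson and Stockmeyer~\cite{GJS74}) by replacing each vertex with an explicit triangle-free ``equality'' gadget whose external terminals are forced to share a colour in every $3$-colouring, with original edges realized as a matching between terminals of different gadgets. The entire proof is the exhibition and analysis of that one gadget --- which is precisely the kind of object your proposal postpones. Your {\sc Sat}-based route could in principle be completed, but to do so you would have to construct and verify each gadget explicitly, at which point you would be doing essentially the same work as the cited proof by a longer path.
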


For $1\leq h\leq i\leq j$, let $S_{h,i,j}$ denote the tree that is the union of paths of lengths $h$, $i$ and $j$ whose only common vertex is an end-vertex of each.
Observe that $S_{1,1,1}=K_{1,3}$,
$S_{1,1,2}$ is the 
chair and $S_{1,2,2}$ is the ``E''-graph (see Figure~\ref{f-small}). 
Let $A_{h,i,j}$ denote the line graph of $S_{h,i,j}$.
Schindl~\cite{Sc05} showed the following result. 

\begin{theorem}
\label{t-schindl}
Let $\{H_1,\ldots,H_p\}$ be a finite set of graphs.
Then {\sc Colouring} is \NP-complete for $(H_1,\ldots,H_p)$-free graphs if the complement of each $H_i$ has a connected component that is  isomorphic neither 
to any graph $A_{h,i,j}$, for $1\leq h\leq i\leq j$, nor to any path 
$P_r$ for $r\geq 1$.
\end{theorem}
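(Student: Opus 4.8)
The plan is to produce, for any family $\{H_1,\ldots,H_p\}$ meeting the hypothesis, a hereditary class $\mathcal{G}$ consisting of $(H_1,\ldots,H_p)$-free graphs on which {\sc Colouring} is already \NP-hard; since membership in \NP\ is immediate, this yields \NP-completeness. The class I would use is $\mathcal{G}=\{\,\overline{L(F)} : F\in\mathcal{F}\,\}$, where $L(F)$ denotes the line graph of $F$ and $\mathcal{F}$ is the set of graphs obtained from a simple cubic graph by subdividing every edge exactly $2t$ times, with the constant $t$ chosen so large that $2t+1>c:=\max_i|V(H_i)|$. Each $F\in\mathcal{F}$ is triangle-free (indeed has girth at least $2t+1$) and subcubic, and its only vertices of degree $3$ --- call them the \emph{branch vertices} --- are the vertices of the original cubic graph, so any two of them are at distance at least $2t+1$ in $F$.

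First I would check that {\sc Colouring} is \NP-hard on $\mathcal{G}$. A $k$-colouring of $\overline{L(F)}$ is exactly a partition of $V(\overline{L(F)})=E(F)$ into $k$ cliques of $L(F)$; since $F$ is triangle-free, a clique of $L(F)$ is precisely a set of edges of $F$ sharing a common vertex (a ``star''). Hence $\chi(\overline{L(F)})$ is the least number of stars covering $E(F)$, and a short argument --- the centres of the stars in such a partition form a vertex cover of $F$, and conversely any vertex cover yields such a partition --- shows this equals the vertex cover number $\tau(F)$. Now {\sc Vertex Cover} is \NP-hard on cubic graphs, and subdividing an edge twice raises the independence number by exactly $1$ and the vertex count by $2$, hence (as $\tau$ and $\alpha$ sum to the number of vertices) raises $\tau$ by exactly $1$; so subdividing every edge $2t$ times raises $\tau$ by the fixed amount $t\cdot m$, where $m$ is the number of edges of the cubic graph. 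Thus {\sc Vertex Cover} is \NP-hard on $\mathcal{F}$, and $F\mapsto\overline{L(F)}$ is a polynomial reduction from {\sc Vertex Cover} on $\mathcal{F}$ to {\sc Colouring} on $\mathcal{G}$.

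Second --- and here the hypothesis on the $H_i$ is used --- I would show every $G=\overline{L(F)}\in\mathcal{G}$ is $(H_1,\ldots,H_p)$-free. If some $G$ contained $H_i$ as an induced subgraph, then, complementing on the same vertex set, $L(F)$ would contain $\overline{H_i}$ as an induced subgraph; since line graphs are closed under taking induced subgraphs, $\overline{H_i}\cong L(F')$ for some subgraph $F'\subseteq F$ with at most $c$ edges. Let $F'_0$ be the component of $F'$ whose line graph is the ``bad'' component $C$ of $\overline{H_i}$ (the one isomorphic to no $P_r$ and no $A_{h,i,j}$). Then $F'_0$ is connected, subcubic, and has fewer edges than the girth of $F$, hence is a tree; if it had two vertices of degree $3$ these would be branch vertices of $F$, joined in $F'_0$ by a path of length at least $2t+1>c$, which is impossible. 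So $F'_0$ has at most one vertex of degree $3$, hence is either a path --- giving $C=L(F'_0)=P_r$ --- or a spider $S_{h,i,j}$ --- giving $C=A_{h,i,j}$ --- contradicting the choice of $C$. Therefore every $G\in\mathcal{G}$ is $H_i$-free for all $i$.

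I expect the principal obstacle to be pinning down this class: one needs $F$ triangle-free so that the clique structure of $L(F)$, and thus the value $\chi(\overline{L(F)})=\tau(F)$, is under control, while simultaneously needing $F$ of large girth and with widely separated branch vertices so that the only small connected induced subgraphs of $L(F)$ are paths and the graphs $A_{h,i,j}$ --- which is exactly the configuration the hypothesis excludes among the complements $\overline{H_i}$. The remaining ingredients (the star-cover/vertex-cover identity, \NP-hardness of {\sc Vertex Cover} on cubic graphs, and the behaviour of $\tau$ under repeated subdivision) are standard and I would only sketch them.
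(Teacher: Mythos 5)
Your argument is correct, and it is essentially Schindl's original proof of this result: the survey itself does not prove Theorem~\ref{t-schindl} but only cites~\cite{Sc05}, where the hardness class is exactly the complements of line graphs of cubic graphs with every edge subdivided a large fixed even number of times, the identity $\chi(\overline{L(F)})=\tau(F)$ for triangle-free $F$ is established via the clique--star--vertex-cover correspondence, and the freeness argument uses that every small connected subgraph of such an $F$ is a tree of maximum degree at most~$3$ with at most one vertex of degree~$3$, whose line graph is a $P_r$ or an $A_{h,i,j}$. All the supporting facts you invoke (Poljak's double-subdivision lemma, \NP-hardness of {\sc Vertex Cover} on cubic graphs, closure of line graphs under induced subgraphs) are standard and used as you describe; the only cosmetic slip is calling your class $\mathcal{G}$ ``hereditary'' --- it is not closed under induced subgraphs, but the argument only needs $\mathcal{G}$ to be contained in the class of $(H_1,\ldots,H_p)$-free graphs, which you verify.
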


We also need the following result due to Gravier, Ho\`ang and Maffray~\cite{GHM03} (which is a slight improvement on a similar result of
Gy\'arf\'as~\cite{Gy87}).

\begin{theorem}
\label{t-gy}
Let $r,t\geq 1$ be two integers.
Then every $(K_r,P_t)$-free graph can be coloured with at most $(t-2)^{r-2}$ colours.
\end{theorem}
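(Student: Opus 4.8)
The plan is to argue by induction on $r$, after the standard reduction to connected graphs: a graph and each of its connected components have the same chromatic number, so we may assume $G$ is connected. The base case $r\le 2$ is immediate --- $G$ is then $K_2$-free, hence edgeless, and $\chi(G)\le 1=(t-2)^{0}$ --- and the degenerate small values of $t$ can be checked directly; so assume $r\ge 3$ and that every $(K_{r-1},P_t)$-free graph has chromatic number at most $(t-2)^{r-3}$.

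Two facts drive the inductive step. First, for every vertex $v$ the graph $G[N(v)]$ is $(K_{r-1},P_t)$-free: it is $P_t$-free because $G[N(v)]\ssi G$, and it is $K_{r-1}$-free because a $K_{r-1}$ inside $N(v)$ together with $v$ would be a $K_r$ in $G$. Hence, by the induction hypothesis, $\chi\bigl(G[N(v)]\bigr)\le (t-2)^{r-3}$, and the same bound holds for any graph that is an induced subgraph of a disjoint union of such neighbourhood subgraphs. Second, connectivity together with $P_t$-freeness forces short distances: a shortest path between two vertices is an induced path, so $G$ has diameter at most $t-2$, and a breadth-first layering $N_0=\{v\},N_1,\dots,N_m$ rooted at any vertex $v$ satisfies $m\le t-2$.

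The goal is then to combine these two facts into a partition $V(G)=X_1\cup\dots\cup X_{t-2}$ in which each $G[X_j]$ is an induced subgraph of a disjoint union of neighbourhood subgraphs $G[N(w)]$. Given such a partition, the first fact yields $\chi(G[X_j])\le (t-2)^{r-3}$ for each $j$, and colouring the classes with disjoint palettes gives $\chi(G)\le (t-2)\cdot (t-2)^{r-3}=(t-2)^{r-2}$, as required. A natural way to build the partition is to process $G$ in rounds, where in each round one selects a maximal set $S$ of vertices that are pairwise at distance at least $4$ --- so the sets $N(w)$ with $w\in S$ are pairwise disjoint and with no edges between them --- puts $\bigcup_{w\in S}N(w)$ into the current class, deletes it, and repeats on what remains. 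The diameter bound, together with a closer analysis of how the lengths of induced paths shrink when such a union of neighbourhoods is removed, is what should guarantee that the process stops after at most $t-2$ rounds; in particular the single-vertex root layer must be absorbed at no extra cost (it may reuse a colour from any layer at distance at least $2$ from it), and it is exactly this bookkeeping that distinguishes the constant $t-2$ from the $t-1$ that a naive layer-by-layer colouring produces --- the latter recovering the slightly weaker bound of Gy\'arf\'as.

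The step I expect to be the main obstacle is precisely the construction of this $(t-2)$-class partition. One cannot simply take the classes to be the breadth-first layers, since a layer $G[N_i]$ with $i\ge 2$ need not be $K_{r-1}$-free; and when two neighbourhoods are placed in the same class there may be edges between them unless their centres are genuinely far apart, which is why the rounds have to be built from distance-$4$ independent sets of centres and why the centres themselves are not covered by their own round. Pinning down that exactly $t-2$ rounds suffice --- losing neither a multiplicative factor, nor a round, nor the ``$-2$'' --- is the delicate point; the remainder is routine bookkeeping around the induction on $r$.
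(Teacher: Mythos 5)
The survey does not prove this theorem itself; it is quoted from Gravier, Ho\`ang and Maffray \cite{GHM03}, so your attempt has to be judged on its own. The overall shape is right --- induction on $r$, the observation that $G[N(v)]$ is $(K_{r-1},P_t)$-free, and the reduction of the inductive step to a partition of $V(G)$ into $t-2$ classes each inducing a disjoint union of pieces of open neighbourhoods --- but that partition lemma \emph{is} the theorem, and you have not proved it; you say yourself that it is ``the main obstacle'' and ``the delicate point''. As written, the round-based process does not even establish termination: you remove $\bigcup_{w\in S}N(w)$ but not the centres $w$ themselves, so after a round each $w\in S$ survives as an isolated vertex and is never covered (try $C_5$: after one round you are left with an isolated vertex that no subsequent round ever removes). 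The assertion that ``induced paths shrink when such a union of neighbourhoods is removed'', which is supposed to cap the number of rounds at $t-2$, is exactly the kind of claim that needs a careful argument --- naive versions of it (one BFS layer per round, or one step of the Gy\'arf\'as path per round) lose an additive $1$ and only give the factor $t-1$, i.e.\ Gy\'arf\'as's older bound $(t-1)^{r-2}$. Since the entire content of the Gravier--Ho\`ang--Maffray improvement is precisely squeezing $t-1$ down to $t-2$, deferring that step means the proof of the stated bound is missing.

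Concretely, what is needed and absent is a lemma of the form: \emph{every connected $P_t$-free graph $G$ with at least one edge satisfies $\chi(G)\le (t-2)\cdot\max_{v}\chi(G[N(v)])$} (or an equivalent $(t-2)$-class decomposition). Gravier, Ho\`ang and Maffray obtain this via a refined Gy\'arf\'as-path/domination argument in which the first and last steps of the path are amortised rather than each costing a fresh colour class; your remark about ``absorbing the root layer at no extra cost'' gestures at this but is not an argument. The surrounding bookkeeping in your write-up (reduction to connected graphs, the base case $r\le 2$, multiplying the palettes) is fine, so the gap is localised, but it is the load-bearing step.
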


We note that Theorem~\ref{t-gy} has been improved by 
Esperet, Lemoine, Maffray and Morel~\cite{ELMM13} for the case $r=4, t=5$; they 
showed that  every $(K_4,P_5)$-free graph is 5-colourable.

It can be seen that {\sc Colouring} is polynomial-time solvable on any graph class of bounded clique-width by combining two results:
Kobler and Rotics~\cite{KR03} showed that for any constant $q$, {\sc Colouring}  is polynomial-time solvable if a $q$-expression
is given 
(they also showed that in that case {\sc List $k$-Colouring}
is linear-time solvable
for all $k\geq 1$), 
and Oum~\cite{Oum08} showed that a  $(8^{p}-1)$-expression for any $n$-vertex graph with clique-width at most $p$ can be found in $O(n^3)$ time.

\begin{theorem}\label{t-cliquewidth}
Let ${\cal G}$ be a graph class of bounded clique-width. The following two statements hold:
\begin{itemize}
\item [(i)]
{\sc Colouring} can be solved in polynomial time on ${\cal G}$.
\item [(ii)] For all  $k\geq 1$, {\sc List $k$-Colouring}  can be solved in polynomial time on ${\cal G}$.
\end{itemize}
\end{theorem}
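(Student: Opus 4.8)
The plan is to derive both statements by combining the algorithmic result of Kobler and Rotics~\cite{KR03} with the approximation result of Oum~\cite{Oum08}, exactly as sketched in the paragraph preceding the statement. The key point is that Kobler and Rotics give polynomial-time algorithms \emph{provided} a $q$-expression is supplied as part of the input, so the only thing needed is a way to construct such an expression in polynomial time for any graph drawn from a class of bounded clique-width.

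First I would fix the constant $p$ such that every graph in ${\cal G}$ has clique-width at most $p$; this exists by the hypothesis that ${\cal G}$ has bounded clique-width. Given an input graph $G\in{\cal G}$ on $n$ vertices, I would invoke Oum's algorithm, which in $O(n^3)$ time either reports that the clique-width of $G$ exceeds $p$ or outputs a $q$-expression for $G$ with $q\leq 8^{p}-1$. Since $G\in{\cal G}$ we are guaranteed to be in the second case, so we obtain, in polynomial time, a $q$-expression with $q$ bounded by the constant $8^{p}-1$. For statement~(i), I would then feed this $q$-expression to the Kobler--Rotics algorithm for {\sc Colouring}, which runs in polynomial time for every fixed number of labels $q$; composing the two polynomial-time steps yields a polynomial-time algorithm for {\sc Colouring} on ${\cal G}$. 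For statement~(ii), the argument is identical except that one feeds the same $q$-expression to the Kobler--Rotics algorithm for {\sc List $k$-Colouring}, which for each fixed $k\geq 1$ runs in linear time in the size of the expression once $q$ is fixed; again the composition is polynomial (indeed the dominant cost is Oum's $O(n^3)$ preprocessing).

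The main obstacle — really the only subtlety — is bookkeeping about \emph{what is given as input}: the Kobler--Rotics algorithms are only polynomial when the expression is part of the input and the label count is treated as a constant, so one must be careful to note that $q\leq 8^{p}-1$ is a constant depending only on the class ${\cal G}$ (via $p$), not on the individual instance $G$. Once this is observed, there is nothing further to prove: both statements follow by straightforward composition of the two cited polynomial-time procedures. \qed
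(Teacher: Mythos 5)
Your proposal is correct and follows exactly the route the paper takes: it justifies the theorem by combining Oum's $O(n^3)$ algorithm for producing an $(8^p-1)$-expression for graphs of clique-width at most $p$ with the Kobler--Rotics algorithms for {\sc Colouring} and {\sc List $k$-Colouring} given a $q$-expression with $q$ constant. Your additional remark that $q\leq 8^{p}-1$ depends only on the class ${\cal G}$ and not on the instance is the right point to emphasise, and nothing further is needed.
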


As an aside, the statement of Theorem~\ref{t-cliquewidth}~(i) is valid neither for  {\sc Precolouring Extension} nor for {\sc List Colouring}.
For instance, Bonomo, Dur\'{a}n and Marenco~\cite{BDM09} proved that {\sc Precolouring Extension} is \NP-complete for distance-hereditary graphs, which have clique-width at most~3~\cite{GR00}, whereas,
by Theorem~\ref{t-cb},  
even 3-{\sc List Colouring} is \NP-complete for complete bipartite graphs, which 
have clique-width at most~2~\cite{CO00}.

The graph $\overline{P_1+P_3}$ is called the \emph{paw}
(see Figure~\ref{f-small});
we also denote it by $C_3^+$.
By using a result of Olariu~\cite{Ol88}, which states that a graph is $C_3^+$-free if and only if it is $C_3$-free or a complete 
multipartite graph, Kr\'al' et al.~\cite{KKTW01} observed the following.

\begin{theorem}
\label{t-paw}
Let $H$ be a graph. 
Then {\sc Colouring} is polynomial-time solvable on $(C_3,H)$-free graphs if and only if it is polynomial-time solvable for $(C_3^+,H)$-free graphs.
\end{theorem}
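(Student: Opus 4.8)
The plan is to use Olariu's structural characterization as the hinge: a graph is $C_3^+$-free if and only if it is either $C_3$-free or complete multipartite. I would prove the theorem by establishing both directions of the equivalence, in each case reducing {\sc Colouring} on one graph class to {\sc Colouring} on the other (plus a trivially solvable residual case), so that a polynomial-time algorithm for one class yields one for the other.

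First I would observe that every $C_3$-free graph is $C_3^+$-free, since the paw contains a triangle; hence the class of $(C_3,H)$-free graphs is contained in the class of $(C_3^+,H)$-free graphs. This immediately gives the ``if'' direction: if {\sc Colouring} is polynomial-time solvable for $(C_3^+,H)$-free graphs, it is polynomial-time solvable for the subclass of $(C_3,H)$-free graphs.

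For the converse, suppose {\sc Colouring} is polynomial-time solvable for $(C_3,H)$-free graphs, and let $G$ be a $(C_3^+,H)$-free graph on which we wish to decide $k$-colourability. By Olariu's theorem, $G$ is either $C_3$-free or complete multipartite; moreover one can test in polynomial time which case applies (checking whether $G$ contains a triangle, or equivalently whether the complement of $G$ is a disjoint union of cliques, is trivial). If $G$ is $C_3$-free, then it is $(C_3,H)$-free, so we invoke the assumed algorithm directly. If $G$ is complete multipartite, then its chromatic number equals the number of parts in its (unique) partition into independent sets, which is computable in polynomial time by examining the connected components of $\overline{G}$; thus $k$-colourability is decided immediately. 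Since these two cases are exhaustive and each is handled in polynomial time, {\sc Colouring} is polynomial-time solvable for $(C_3^+,H)$-free graphs.

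The only mild subtlety — not really an obstacle — is making sure the case split is genuinely algorithmic and that the complete-multipartite subcase is handled without appeal to the $(C_3,H)$-free algorithm (it cannot be, since a complete multipartite graph with enough parts contains triangles). Both points are routine: detecting a triangle and computing the chromatic number of a complete multipartite graph are easy, and $H$-freeness plays no role in either the triangle-free subcase (where it is inherited automatically) or the complete-multipartite subcase (where colourability is determined by structure alone). The whole argument is therefore a short reduction resting entirely on Olariu's characterization.
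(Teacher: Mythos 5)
Your proposal is correct in substance and follows exactly the route the paper intends: the survey gives no explicit proof of this theorem, merely citing Olariu's characterization of paw-free graphs and attributing the observation to Kr\'al' et al., and your two-directional reduction (trivial containment one way, case split via Olariu the other way) is precisely how that observation is meant to be fleshed out. One small correction is needed: Olariu's theorem is a statement about connected components --- a graph is paw-free if and only if \emph{every component} is triangle-free or complete multipartite --- so your claim that a $(C_3^+,H)$-free graph ``is either $C_3$-free or complete multipartite'' fails for disconnected inputs such as the disjoint union of $K_3$ and $P_3$, which is paw-free but is neither triangle-free nor complete multipartite, and your ``exhaustive'' case split does not cover it. (The paper's own paraphrase of Olariu has the same looseness.) The fix is immediate: the chromatic number of a graph is the maximum over its components, and every component of a $(C_3^+,H)$-free graph is an induced subgraph and hence itself $(C_3^+,H)$-free, so apply your case analysis component by component --- triangle-free components are $(C_3,H)$-free and go to the assumed algorithm, while complete multipartite components have chromatic number equal to their number of parts.
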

 
\begin{figure}[ht]
\centering\scalebox{0.7}{\input{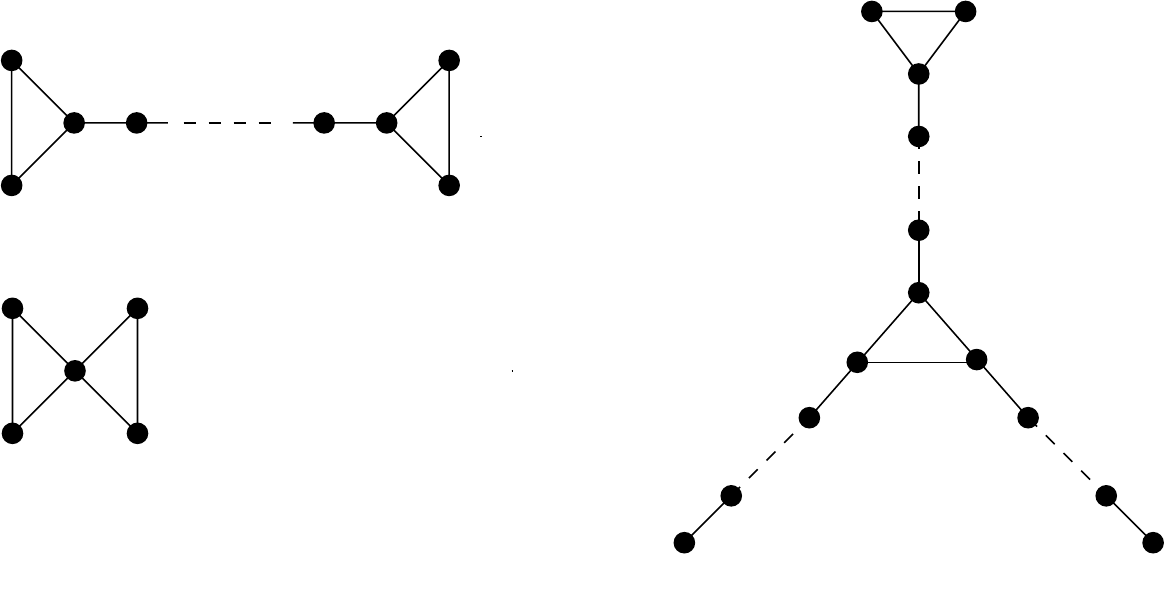_t}}
\caption{The graphs $\Phi_i$ and $T_{h,i,j}^{\Delta}$. 
\label{f-t}}
\end{figure}

Theorem~\ref{t-twographs} below summarizes  results on {\sc Colouring} for graph classes defined by two forbidden induced subgraphs.
In order to state this theorem, we need to define the following graphs. 
The graph $\overline{2P_1+P_2}$ is also called a {\it diamond}.
The graph $\overline{P_1+P_4}$ is also called the \emph{gem}.
The graph $\overline{P_5}$ is also called the \emph{house}.
The graph $\overline{C_4+P_1}$ is also called the \emph{butterfly}.
These graphs are all shown in Figure~\ref{f-small} as are the \emph{hammer} and the \emph{bull} which we also denote by $C_3^*$ and $C_3^{++}$ respectively
(recall that $C_3^+=\overline{P_1+P_3}$ denotes the paw). 
The graph $\Phi_i$, $i \geq 0$, is composed of a path $P$ on $i$ edges with end-vertices $u$ and $v$ and a $K_3$ that intersects $P$ in $u$ and a $K_3$ that intersects $P$ in $v$
(notice that if $i=0$, then $u=v$ so $\Phi_0$ is the butterfly). 
The graph $T_{h,i,j}^\Delta$, $h,i,j \geq 0$, is composed of a $\Phi_h$, a $\Phi_i$ and a $\Phi_j$ which all intersect in a $K_3$ in such a way that each of its vertices has degree at most~3.  Both graphs are illustrated in Figure~\ref{f-t}.  The graph $\Phi_{i,j}$, $i, j \geq 0$, is composed of a path $P_1$ on $i$ edges with end-vertices $u$ and $v$, a path $P_2$ on $j$ edges with end-vertices $w$ and $x$ and a $K_3$ that intersects $P_1$ in $u$, a $K_3$ that intersects $P_1$ in $v$ and $P_2$ in $w$, $v \neq w$, and a $K_3$ that intersects $P_2$ in $x$ (notice that possibly $u=v$ or $w=x$).
The graph $\Phi'_i$, $i \geq 0$, is formed from $\Phi_i$ by adding a new vertex of degree 1 adjacent to one of the two $K_3$s but not to the path between them.  The graph $\Phi''_i$, $i \geq 0$, is formed from $\Phi_i$ by adding two new vertices of degree 1; each one is adjacent to a different one of the two $K_3$s but not to the path between them.  These graphs are illustrated in Figure~\ref{f-fififi}.

\begin{figure}[ht]
\centering\scalebox{0.7}{\input{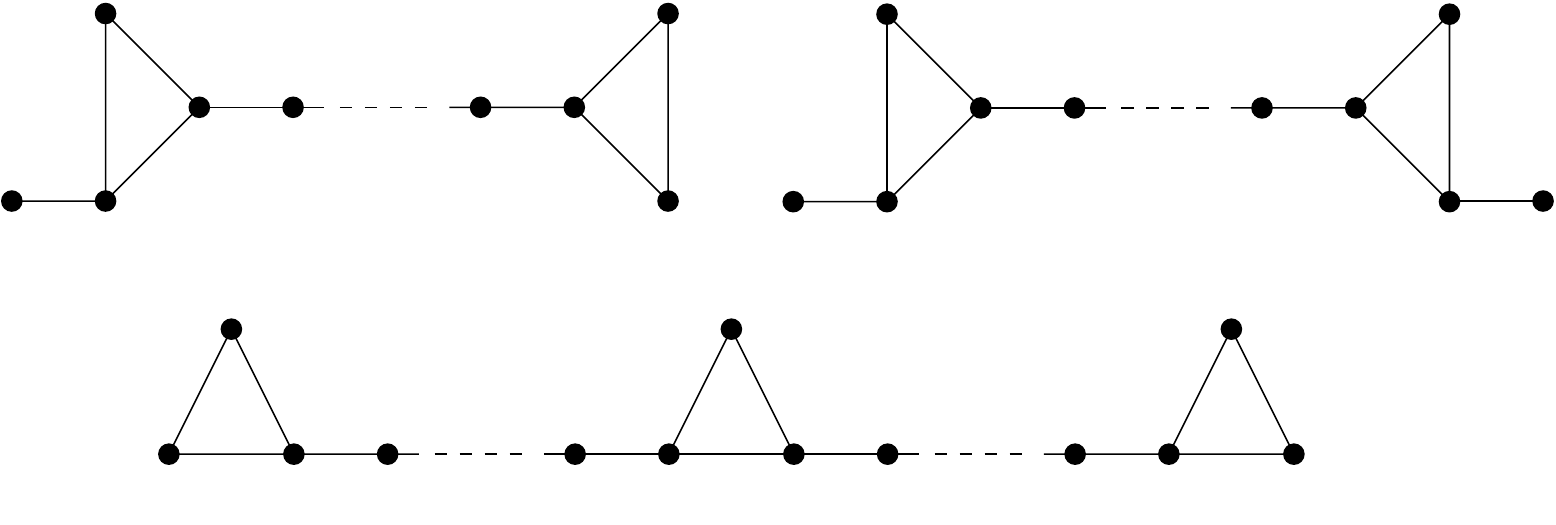_t}}
\caption{The graphs $\Phi_i'$, $\Phi_i''$ and $\Phi_{i,j}$. 
\label{f-fififi}}
\end{figure}

A (partial) proof of Theorem~\ref{t-twographs} can be found in the papers of 
Golovach and Paulusma~\cite{GP} and Dabrowski, Golovach and Paulusma~\cite{DGP}. 
Note that, by symmetry, the graphs $H_1$ and $H_2$ may be swapped in each of the subcases of Theorem~\ref{t-twographs}.

\begin{theorem}\label{t-twographs}
Let $H_1$ and $H_2$ be two graphs. Then the following hold:
\begin{itemize}
\item [(i)] {\sc Colouring} is \NP-complete for $(H_1, H_2)$-free graphs if\\[-8pt]
\begin{enumerate}
\item  $H_1\si C_r$ for $r\geq 3$, and $H_2\si C_s$ for  $s\geq 3$
\item $H_1\si K_{1,3}$, and $H_2\si K_{1,3}$ or $H_2\si
\overline{2P_1+P_2}$ or $H_2\si C_r$ for  $r\geq 4$ or $H_2\si K_4$ or
$H_2\si \Phi_{i,j}$ for $i,j\geq 0$, both even or $H_2\si
\Phi'_i$ for $i\geq 1$ odd or $H_2\si \Phi''_i$ for $i\geq 0$ even
\item $H_1\si \overline{\Phi_i}$ for $i\geq 1$, and $H_2$ contains a spanning subgraph of $2P_2$ as an induced subgraph
\item  $H_1$ and $H_2$ contain a spanning subgraph of $2P_2$ as an induced subgraph
\item $H_1\si C_3^{++}$, and $H_2\si K_{1,4}$ or $H_2\si \overline{C_4+P_1}$
\item  $H_1\si C_3$, and $H_2\si K_{1,r}$  for  $r\geq 5$
\item  $H_1\si C_3$, and 
$H_2\si P_{22}$
\item $H_1\si C_r$ for  $r\geq 5$, and $H_2$ contains a spanning subgraph of $2P_2$ as an induced subgraph
\item $H_1\si C_r+ P_1$ for $3\leq r\leq 4$ or $H_1\si  \overline{C_r}$ for $r\geq 6$, and
$H_2$ contains a spanning subgraph of $2P_2$ as an induced subgraph
\item $H_1\si K_5$, and $H_2\si P_7$
\item $H_1\si K_6$, and $H_2\si P_6$.\\[-8pt]
\end{enumerate}
\item [(ii)] {\sc Colouring} is polynomial-time solvable for $(H_1, H_2)$-free graphs if\\[-8pt] 
\begin{enumerate}
\item $H_1$ or $H_2$ is an induced subgraph of $P_1+P_3$ or of $P_4$
\item $H_1\ssi K_{1,3}$, and 
$H_2\ssi C_3^{++}$ or $H_2\ssi C_3^*$ or $H_2\ssi P_5$
\item $H_1\neq K_{1,5}$ is a forest on at most six vertices or $H_1= K_{1,3}+3P_1$, and $H_2\ssi C_3^+$  
\item $H_1\ssi sP_2$ or $H_1\ssi sP_1+P_5$ for  $s\geq 1$, and $H_2=K_t$ for $t\geq 4$
\item $H_1\ssi sP_2$ or $H_1\ssi sP_1+P_5$ for $s\geq 1$, and $H_2\ssi C_3^+$
\item $H_1\ssi P_1+P_4$ or $H_1\ssi P_5$, and $H_2\ssi  \overline{P_1+P_4}$ 
\item $H_1\ssi P_1+P_4$ or 
$H_1\ssi P_5$,
and $H_2\ssi \overline{P_5}$
\item $H_1\ssi 2P_1+P_2$, and 
$H_2\ssi \overline{P_1+2P_2}$ or
$H_2\ssi \overline{2P_1+P_3}$ or $H_2\ssi \overline{P_2+P_3}$
\item  $H_1\ssi \overline{2P_1+P_2}$, and
$H_2\ssi P_1+2P_2$ or
 $H_2\ssi {2P_1+P_3}$ or $H_2\ssi P_2+P_3$
\item $H_1\ssi sP_1+P_2$ for $s\geq 0$ or
$H_1=P_5$, and  $H_2\ssi \overline{tP_1+P_2}$ for $t\geq 0$
\item $H_1\ssi 4P_1$ and $H_2 \ssi \overline{2P_1+P_3}$
\item $H_1\ssi P_5$, and $H_2\ssi C_4$ 
or $H_2\ssi \overline{2P_1+P_3}$.
\end{enumerate}
\end{itemize}
\end{theorem}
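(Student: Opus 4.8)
The plan is to treat Theorem~\ref{t-twographs} as a bookkeeping exercise that collects many separate results, reducing every subcase to one of the engines already in hand, and splitting the argument into the hardness part~(i) and the tractability part~(ii). The first, uniform step in both parts is Observation~\ref{o-in}: since $F\ssi H$ implies that every $F$-free graph is $H$-free, each \NP-completeness claim only has to be proved for the \emph{smallest} pair $(H_1,H_2)$ permitted by the hypothesis, and each polynomial-time claim only for the \emph{largest} such pair. This is what turns the long disjunctions over $H_1,H_2$ in the statement into a short list of base cases; for each it then remains to identify the relevant minimal (or maximal) witness among the small graphs of Figures~\ref{f-small}, \ref{f-t} and~\ref{f-fififi}.

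For the hardness part~(i) the engines are Theorem~\ref{t-kl07} ($k$-{\sc Colouring} is \NP-complete for graphs of arbitrarily large girth and bounded maximum degree), Theorem~\ref{t-line} ($k$-{\sc Colouring} is \NP-complete for line graphs of $k$-regular graphs, which are claw-free), Theorem~\ref{t-mp96} ($3$-{\sc Colouring} is \NP-complete for $C_3$-free graphs of maximum degree~$4$), Theorem~\ref{t-schindl} (Schindl's complement criterion), and the path and complete-graph results of Theorem~\ref{t-polysum}. Subcase~1 is immediate: for $r\le s$, graphs of girth at least $s+1$ are $(C_r,C_s)$-free, so Theorem~\ref{t-kl07} applies. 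Subcase~6 follows from Theorem~\ref{t-mp96}, because a $C_3$-free graph of maximum degree~$4$ is automatically $K_{1,5}$-free, hence $(C_3,K_{1,r})$-free for every $r\ge 5$. Subcases~10 and~11 come from Theorem~\ref{t-polysum} together with the elementary reduction showing that a $k$-{\sc Colouring} instance may be assumed $K_{k+1}$-free (if it contains $K_{k+1}$, output a fixed no-instance): taking $k=4$ with $P_7$-free graphs gives subcase~10, and $k=5$ with $P_6$-free graphs gives subcase~11. The cluster of subcases~3,~4,~5,~8 and~9 is handled \emph{en bloc} by Theorem~\ref{t-schindl}: in each one computes $\overline{H_1}$ and $\overline{H_2}$ (the relevant graphs are the bull $=\overline{C_3^{++}}$, $\overline{C_4+P_1}$, $\overline{K_{1,4}}=K_4+K_1$, $\overline{C_r}$ for $r\ge5$, $\overline{C_r+P_1}$, $\Phi_i$, and the complements $C_4$, diamond, $K_4$ of the three spanning subgraphs of $2P_2$) and observes that each has a connected component that is neither a path $P_t$ nor a line graph $A_{h,i,j}$ of some $S_{h,i,j}$ --- the point being that paths and the $A_{h,i,j}$ are all $C_4$-, diamond- and $K_4$-free. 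This leaves subcases~2 and~7, the only genuinely new reductions: for~2 one builds on the line-graph construction of Theorem~\ref{t-line} and verifies that line graphs of suitably chosen regular graphs (cubic, triangle-free, and so on) additionally avoid the prescribed second forbidden graph among $K_{1,3}$, $\overline{2P_1+P_2}$, $C_r$, $K_4$, $\Phi_{i,j}$, $\Phi'_i$, $\Phi''_i$; for~7 one needs a dedicated reduction producing triangle-free graphs whose longest induced path has bounded length, as carried out in~\cite{GP}.

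For the tractability part~(ii) the engines are Theorem~\ref{t-cliquewidth} (bounded clique-width implies {\sc Colouring}, and even {\sc List $k$-Colouring}, is polynomial-time solvable), Theorem~\ref{t-paw} (Olariu's dichotomy, which lets one pass freely between $(C_3,H)$-free and $(C_3^+,H)$-free classes), Theorem~\ref{t-gy} (a $(K_r,P_t)$-free graph is $(t-2)^{r-2}$-colourable, so on such a class {\sc Colouring} collapses to $k$-{\sc Colouring} for a fixed $k$), Theorem~\ref{t-sp2} together with the preceding remark that an $sP_2$-free graph has only polynomially many maximal independent sets, and the path results of Theorems~\ref{t-kktw} and~\ref{t-polysum}. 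Subcase~1 is exactly Theorem~\ref{t-kktw}(i), and subcase~2 combines it with Theorems~\ref{t-kktw}--\ref{t-polysum} and Theorem~\ref{t-paw}. Subcases~4 and~5 are obtained by first bounding the chromatic number --- by Theorem~\ref{t-gy} for the $(K_t,\cdot)$ and, via Theorem~\ref{t-paw}, $(C_3^+,\cdot)$ families, or via the maximal-independent-set count for the $sP_2$ families --- and then running the polynomial-time algorithm for {\sc List $k$-Colouring} on $sP_2$-free and on $(sP_1+P_5)$-free graphs from Theorem~\ref{t-polysum}(iii). The remaining subcases~3 and~6--12 are the substantial ones: each is proved by showing that the corresponding class of $(H_1,H_2)$-free graphs has bounded clique-width --- the families built around $P_5$ with $\overline{P_1+P_4}$, $\overline{P_5}$ or $C_4$, around $\overline{2P_1+P_2}$ (and $2P_1+P_2$) with $P_1+2P_2$, $2P_1+P_3$ or $P_2+P_3$, and the small-forest-versus-$C_3^+$ cases --- after which Theorem~\ref{t-cliquewidth} finishes the job.

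The hard part is not any individual engine but the two pieces of structural work that feed them. On the hardness side it is the verification, in subcases~(i).2 and~(i).7, that the reduction gadgets genuinely lie in the claimed class, for which the crucial ingredient is a precise catalogue of the induced subgraphs of line graphs of regular graphs (and, for~(i).7, an explicit triangle-free construction avoiding long induced paths). On the tractability side it is the clique-width bounds for the $P_5$-free and $\overline{2P_1+P_2}$-related classes in subcases~(ii).3 and~(ii).6--12; these are typically obtained by fixing a vertex or a small dominating structure, partitioning the rest of the graph according to its neighbourhood in that structure, and using $P_5$-freeness (or the relevant second restriction) to force each part into a simple bounded-label form, the second forbidden subgraph being exactly what rules out the configurations that would otherwise make the clique-width unbounded. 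Everything else --- checking containments among the named small graphs and invoking Observation~\ref{o-in}, Theorem~\ref{t-schindl} and Theorem~\ref{t-cliquewidth} --- is routine.
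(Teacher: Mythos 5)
Your overall architecture --- reduce each disjunction to its extremal case via Observation~\ref{o-in} and then dispatch to a small set of engines --- is exactly how the paper proceeds, and many of your assignments (girth for (i).1, Maffray--Preissmann for (i).6, the $K_{k+1}$-free observation for (i).10--11, Gy\'arf\'as-type bounds plus Theorem~\ref{t-polysum}(iii) for (ii).4--5, Schindl for (i).3, (i).4, (i).9) match the paper. But several specific reductions are wrong. On the hardness side, your ``en bloc'' use of Theorem~\ref{t-schindl} fails for subcase (i).5: the bull $C_3^{++}$ is self-complementary and is precisely $A_{1,2,2}=L(S_{1,2,2})$, so when $H_1=C_3^{++}$ the complement $\overline{H_1}$ is a single component isomorphic to an $A_{h,i,j}$ and Schindl's hypothesis is violated --- indeed the bull passes your own ``$C_4$-, diamond- and $K_4$-free'' test rather than failing it. The paper instead invokes a dedicated hardness construction of Malyshev~\cite{Ma14,Ma15} for $(C_3^{++},K_{1,4})$-free and $(C_3^{++},\overline{C_4+P_1})$-free graphs. (A smaller slip: your stated reason also does not cover $\overline{C_5}=C_5$ in (i).8, though the conclusion survives because $C_5$ is triangle-free while every $A_{h,i,j}$ contains a triangle. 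And (i).7 is due to~\cite{HJP14}, not~\cite{GP}.)

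On the tractability side, the blanket claim that (ii).3 and (ii).6--12 all follow from bounded clique-width is false for several of them. Split graphs are $2P_2$-free and $C_4$-free, hence $(P_5,\overline{P_5})$-free and $(P_5,C_4)$-free, and have unbounded clique-width; so the $H_1=P_5$ part of (ii).7 and the $H_2\ssi C_4$ part of (ii).12 cannot be obtained from Theorem~\ref{t-cliquewidth} --- the paper uses the direct algorithms of Ho\`ang and Lazzarato~\cite{HL} and of Malyshev~\cite{Ma14}. Similarly, $(C_3,K_{1,4})$-free graphs have maximum degree at most~$3$ and include wall-like graphs of unbounded treewidth, hence unbounded clique-width, so most of (ii).3 rests on the ad hoc case analysis of Dabrowski, Lozin, Raman and Ries~\cite{DLRR12} combined with Theorem~\ref{t-paw}, not on clique-width (only the $K_{1,3}+3P_1$ branch is a clique-width argument). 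Finally, (ii).2 does not follow from the engines you cite: claw-free graphs have unbounded chromatic number, so {\sc Colouring} (with $k$ part of the input) on $(K_{1,3},P_5)$-free, $(K_{1,3},C_3^{++})$-free or $(K_{1,3},C_3^{*})$-free graphs does not reduce to {\sc List $k$-Colouring} for any fixed $k$; these are again separate results of Malyshev~\cite{Ma14,Ma}. In short, the skeleton is right, but five of the twenty-three subcases ((i).5, (ii).2, and parts of (ii).3, (ii).7 and (ii).12) genuinely require external results that your proposed derivations do not supply.
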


\begin{proof}
In each case we either refer back to an earlier result, or give a reference.  The results quoted can clearly be seen to imply the statements of the theorem.
\begin{itemize}
\item [(i)] We first consider the \NP-completeness results.  
\begin{enumerate}
\item By Theorem~\ref{t-kl07}, for $k\geq 3$, $k$-{\sc Colouring}
is \NP-complete for $(C_r,C_s)$-free graphs for all $r\geq 3$ and $s\geq 3$.
\item These cases were proved for {\sc 3-Colouring} by Lozin and Purcell~\cite{LP14}.
 (This is sufficient but we note also that by Theorem~\ref{t-line},  for  $k\geq 3$, $k$-{\sc Colouring} is \NP-complete for claw-free graphs and that  Kr\'al' et al.~\cite{KKTW01} showed that $3$-{\sc Colouring} is \NP-complete for $(C_r,K_{1,3})$-free graphs whenever $r\geq 4$ and for 
 $(K_4,K_{1,3},\overline{2P_1+P_2})$-free  graphs.)
\item This follows from a result of Lozin and Malyshev~\cite{LM15} who proved that {\sc Colouring} is \NP-complete for $(C_3+P_1,2P_2,2P_1+P_2,4P_1,C_5,\overline{C_6},\ldots,\overline{C_p},\overline{\Phi_0},\ldots,\overline{\Phi_p})$-free graphs for every integer $p\geq 6$.
\item This is a result of Kr\'al' et al.~\cite{KKTW01}.
\item  Malyshev~\cite{Ma15} proved that $3$-{\sc Colouring} is \NP-complete for $(C_3^{++},\overline{C_4+P_1},K_{1,4})$-free graphs after previously proving that 
$3$-{\sc Colouring} is \NP-complete for $(C_3^{++},K_{1,4})$-free graphs~\cite{Ma14}. 
Note that Theorem~\ref{t-twographs}~(i).3 already implies that 3-{\sc Colouring} is \NP-complete for $(\overline{C_4+P_1},K_{1,4})$-free graphs.
\item  By Theorem~\ref{t-mp96}, $3$-{\sc Colouring} is \NP-complete for $(C_3,K_{1,r})$-free graphs for all $r\geq 5$.
\item Huang, Johnson and Paulusma~\cite{HJP14} proved that $4$-{\sc Colouring} is \NP-complete for $(C_3,P_{22})$-free graphs, thereby improving
a result of Golovach et al.~\cite{GPS11} who showed that $4$-{\sc Colouring} is \NP-complete for $(C_3,P_{164})$-free graphs.
\item This is a result of
 Kr\'al' et al.~\cite{KKTW01} 
 \item This follows from Theorem~\ref{t-schindl}.
 \item This follows from Theorem~\ref{t-polysum}~(i) and the fact that $K_6$ is not 5-colourable.
 \item This follows from Theorem~\ref{t-polysum}~(i) and the fact that $K_5$ is not 4-colourable.
 \end{enumerate}
\item [(ii)] We now consider the tractable cases. 
\begin{enumerate}
\item This follows from Theorem~\ref{t-kktw}~(i).
\item This was proved by Malyshev~\cite{Ma14} for $(K_{1,3},C_3^*)$-free graphs and $(K_{1,3},P_5)$-free graphs and by Malyshev~\cite{Ma} for $(K_{1,3},C_3^{++})$-free graphs.
\item 
First we consider the case when $H_1$ is a forest 
on at most six vertices 
not isomorphic to $K_{1,5}$ and $H_2\ssi C_3$.
Dabrowski, Lozin, Raman and Ries~\cite{DLRR12} proved that {\sc Colouring} is polynomial-time solvable for $(H_1,C_3)$-free graphs
by combining a number of new results  with known results for $H_1=K_{1,4}$~\cite{KKTW01}, 
$H_1=S_{1,2,2}$~\cite{Ra04}, $H_1=P_2+P_4$~\cite{BGPS12a},
$H_1=2P_3$~\cite{BGPS12b}, $H_1=P_6$~\cite{BKM06}, $H_1$ is the cross~\cite{RS02} and $H_1$ is the ``H''-graph~\cite{Ra04} 
(see Figure~\ref{f-small} for pictures of the cross and the ``H''-graph).
Then they applied Theorem~\ref{t-paw}.
Dabrowski and Paulusma~\cite{DP} proved that the class of 
$(K_{1,3}+3P_1,C_3^+)$-free graphs has bounded clique-width, so
Theorem~\ref{t-cliquewidth}~(i) can be applied.
\item  Theorem~\ref{t-gy} implies that for all $r\geq 1$, {\sc Colouring} is polynomial-time solvable on $(K_r,F)$-free graphs for some linear forest $F$ if $k$-{\sc Colouring} is polynomial-time solvable on $F$-free graphs for all $k\geq 1$. The latter is true for $F=sP_1+P_5$ and $F=sP_2$, for all 
$s\geq 1$, 
by Theorem~\ref{t-polysum}~(iii).
\item 
This is obtained by combining the arguments of the previous case with Theorem~\ref{t-paw}.
\item  The classes of $(P_1+P_4,\overline{P_1+P_4})$-free graphs~\cite{BLM04} and  $(P_5,\overline{P_1+P_4})$-free graphs~\cite{BK05}  have bounded clique-width. Hence, {\sc Colouring} is polynomial-time solvable for these 
two 
graph classes by Theorem~\ref{t-cliquewidth}~(i).
\item For the class of $(P_1+P_4,\overline{P_5})$-free graphs, we again note they  have bounded clique-width~\cite{BK05}. 
Ho\`ang and Lazzarato~\cite{HL} showed that {\sc Colouring} is polynomial-time solvable on $(P_5,\overline{P_5})$-free graphs
(in fact they show that the weighted variant of {\sc Colouring} is polynomial-time solvable).
Previously, this result was known only for $(2P_2,\overline{P_5})$-free graphs, as Ho\`ang, Maffray and Mechebbek~\cite{HMM12} showed that these graphs are b-perfect, and in the same paper they proved that {\sc Colouring} is polynomial-time solvable for 
b-perfect graphs.
\item 
Dabrowski, Dross and Paulusma~\cite{DDP} proved that the class of $(2P_1+P_2,\overline{P_1+2P_2})$-free graphs has bounded clique-width.
Dabrowski, Huang and Paulusma~\cite{DHP} showed that the class of $(2P_1+P_2,\overline{2P_1+P_3})$-free graphs and the class of $(2P_1+P_2,\overline{P_2+P_3})$-free graphs have bounded clique-width.
\item This is due to Dabrowski et al.~\cite{DDP,DHP} as well. 
\item Dabrowski, Golovach and Paulusma~\cite{DGP} proved that
for every two integers $s\geq 0$ and $t\geq 0$,
 {\sc Colouring} is polynomial-time solvable
for $(sP_1+P_2,\overline{tP_1+P_2})$-free graphs. 
Malyshev and Lobanova~\cite{MO} proved that, for all $t\geq 0$, {\sc Colouring} is polynomial-time solvable for
$(P_5, \overline{tP_1+P_2})$-free graphs, which generalizes an earlier result of Dabrowski, Golovach and Paulusma~\cite{DGP}
for the class of $(2P_2, \overline{tP_1+P_2})$-free graphs. 
\item The class of $(4P_1,\overline{2P_1+P_3})$-free graphs has bounded clique-width~\cite{DHPb}, hence we apply Theorem~\ref{t-cliquewidth}~(i). 
\item This was proved by Malyshev~\cite{Ma14} 
for  $(P_5,C_4)$ and by Malyshev~\cite{Ma} for $(P_5,\overline{2P_1+P_3})$. 
Previously this was known for $(P_5,\overline{2P_1+P_2})$-free graphs~\cite{AM02}.
\qed
\end{enumerate}
\end{itemize} 
\end{proof}

We pose the following problem.

\begin{open}
Complete the classification of the complexity of {\sc Colouring}  for $(H_1,H_2)$-free graphs.
\end{open}

\noindent
{\bf Some Important Subproblems}
A classification of the complexity of {\sc Colouring} for $(H_1,H_2)$-free graphs is already problematic when $H_1$ and $H_2$ are small.
Lozin and Malyshev~\cite{LM15} determined the computational complexity of {\sc Colouring} restricted to ${\cal H}$-free graphs for every finite set ${\cal H}$ that consists only of 4-vertex graphs except in the following four cases, which are still open: 
\begin{itemize}
\item [$\bullet$] ${\cal H}=\{K_{1,3},4P_1\}$
\item   [$\bullet$]  ${\cal H}=\{K_{1,3},2P_1+P_2,4P_1\}$
\item  [$\bullet$]  ${\cal H}=\{K_{1,3},2P_1+P_2\}$ 
\item  [$\bullet$]  ${\cal H}=\{C_4,4P_1\}$. 
\end{itemize}
The same authors showed that the cases
${\cal H}=\{K_{1,3},2P_1+P_2\}$ and  ${\cal H}=\{K_{1,3},2P_1+P_2,4P_1\}$
are polynomially equivalent (hence three cases remain).
Fraser, Hamel and Ho\`ang~\cite{FHH} continued this study by showing that  {\sc Colouring} 
is polynomial-time solvable for a subclass of $(K_{1,3},4P_1\})$-free graphs, namely for
(a superclass of) $4P_1$-free line graphs.

The above open cases are part of a larger set of open subproblems. As we will see (Theorem~\ref{t-com2}), 
for all $k,r,s,t\geq 1$, {\sc $k$-Colouring} can be solved in linear time for the class of $(K_{r,s},P_t)$-free graphs. However, for {\sc Colouring} restricted to $(K_{r,s},P_t)$-free graphs  
much less is known. Besides the aforementioned open cases, also the case 
of $(C_4,P_6)$-free graphs is a natural open case to consider, as {\sc Colouring} is polynomial-time solvable for $(C_4,P_5)$-free graphs due to Theorem~\ref{t-twographs} (ii).12.
In addition, the case of $(C_3,P_7)$-free graphs still needs to be solved (we discuss this case in more detail later).

Finally, Dabrowski et al.~\cite{DDP} listed all 13 classes of $(H_1,H_2)$-free graphs, for which {\sc Colouring} could still be solved in polynomial time by showing that their clique-width is bounded.
These classes are
\begin{enumerate}
\item $\overline{H_1}\in \{3P_1,P_1+P_3\}$ and $H_2\in \{P_1+S_{1,1,3},S_{1,2,3}\}$;\\[-10pt]
\item $H_1=2P_1+\nobreak P_2$ and $\overline{H_2} \in \{P_1+\nobreak P_2+\nobreak P_3, \allowbreak 
P_1+\nobreak P_5\}$;\\[-10pt]
\item $H_1=\overline{2P_1+\nobreak P_2}$ and $H_2 \in \{P_1+\nobreak P_2+\nobreak P_3,\allowbreak 
P_1+\nobreak P_5\}$;\\[-10pt]
\item $H_1=P_1+P_4$ and $\overline{H_2} \in \{P_1+2P_2,P_2+P_3\}$;\\[-10pt]
\item $\overline{H_1}=P_1+P_4$ and ${H_2} \in \{P_1+2P_2,P_2+P_3\}$;\\[-10pt]
\item $H_1=\overline{H_2}=2P_1+P_3$.
\end{enumerate}

We now discuss what is known for {\sc Precolouring Extension} restricted to $(H_1,H_2)$-free graphs.
By Theorem~\ref{t-kktw}~(ii),  {\sc Precolouring Extension}  can be solved in polynomial time on $(H_1,H_2)$-free graphs
whenever, for some $i\in \{1,2\}$, $H_i\ssi P_4$ or $H_i\ssi P_1+P_3$,
and, of course, the \NP-completeness results from Theorem~\ref{t-twographs} also hold for  {\sc Precolouring Extension}.
This is all that seems to be known of {\sc Precolouring Extension} on $(H_1,H_2)$-free graphs. 
Let us give an example of a class of $(H_1,H_2)$-free graphs for which the complexities of {\sc Colouring}
and {\sc Precolouring Extension} are different (unless \cP$=$\NP):
Case~(ii):3 of Theorem~\ref{t-twographs}, which shows that {\sc Colouring} is polynomial-time solvable for $(C_3,P_6)$-free graphs, can be compared with the following result.

\begin{theorem}\label{t-example}
{\sc Precolouring Extension} is \NP-complete for $(C_3,P_6)$-free graphs.
\end{theorem}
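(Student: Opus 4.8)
Membership in \NP\ is immediate, so the work lies in the \NP-hardness proof. The plan is to reduce from $3$-{\sc List Colouring} restricted to complete bipartite graphs, which is \NP-complete by Theorem~\ref{t-cb}. What makes this a good starting point is that complete bipartite graphs are themselves $(C_3,P_6)$-free, and in fact $P_4$-free: $K_{r,s}$ is triangle-free because it is bipartite, and in any four-vertex path of $K_{r,s}$ the two end-vertices lie in opposite parts and are hence adjacent, so no induced $P_4$ occurs. Given an instance $(K_{r,s},L)$, I would first let $k$ be the number of colours used by the lists, so that $L(v)\subseteq\{1,\dots,k\}$ for every vertex $v$.

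The reduction uses the standard device of forbidding colours by means of precoloured pendant vertices. From $(K_{r,s},L)$ I would construct a graph $G'$ with a $k$-precolouring $c_W$ as follows: take $K_{r,s}$, and for every vertex $v$ and every colour $i\in\{1,\dots,k\}\setminus L(v)$ add a new vertex $p_{v,i}$ whose only neighbour is $v$, and precolour $p_{v,i}$ with $i$; let $W$ consist of all these pendant vertices and leave the vertices of $K_{r,s}$ uncoloured. Checking correctness is then routine: in any extension of $c_W$, vertex $v$ is adjacent to a vertex of colour $i$ for every $i\notin L(v)$, so its colour belongs to $L(v)$, and the restriction to $K_{r,s}$ respects $L$; conversely any $L$-respecting colouring of $K_{r,s}$ extends $c_W$, since each $p_{v,i}$ sees only $v$, which is coloured with a colour of $L(v)$ and hence different from $i$. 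The construction is clearly polynomial.

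The one point that needs a little care is verifying that $G'$ is $(C_3,P_6)$-free, which is also the reason complete bipartite graphs, rather than arbitrary bipartite graphs, are the appropriate source class (appending pendants to a long path would create a long induced path). Triangle-freeness is clear, as $K_{r,s}$ is bipartite and every $p_{v,i}$ has degree $1$. For $P_6$-freeness I would observe that if $Q$ is an induced path of $G'$, then the vertices of $Q$ inside $K_{r,s}$ induce a path of $K_{r,s}$, which has at most three vertices because $K_{r,s}$ is $P_4$-free; every other vertex of $Q$ has degree $1$ and must therefore be an end-vertex of $Q$, of which there are at most two; hence $|V(Q)|\leq 5$ and $G'$ has no induced $P_6$. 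I expect this structural check to be the only mildly delicate step; the reduction and its correctness are entirely straightforward. \qed
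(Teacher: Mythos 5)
Your proof is correct and is essentially the same reduction as the paper's: both attach precoloured pendant vertices to a complete bipartite List Colouring instance (NP-complete by Theorem~\ref{t-cb}) to simulate the lists, with $k$ equal to the total number of colours appearing in the lists. The only difference is that you spell out the $(C_3,P_6)$-freeness check (via $P_4$-freeness of $K_{r,s}$ and the fact that pendants can only be path end-vertices), which the paper dismisses as ``readily seen''.
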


\begin{proof}
We reduce from the restriction of {\sc List Colouring}  to complete bipartite graphs which is
\NP-complete by Theorem~\ref{t-cb}.
Let $G=(V,E)$ be a complete bipartite graph with list assignment~$L$. Let $X=\bigcup_{u\in V}L(u)$.
For each $u\in V$,  add $|X|-|L(u)|$ new vertices, add an edge from each to~$u$, and assign each a different colour from $X\setminus L(u)$. 
Let  $G'$ be the resulting graph,  let $W$ be the set of vertices in $G'-V$ and let $k=|X|$, and
notice that in the previous sentence we have defined a $k$-precolouring of $G'$ in which a vertex
has a colour if and only if it is in~$W$.
It is readily seen that~$G'$ is $(C_3,P_6)$-free, and that~$G$ has a colouring that respects~$L$ if and only if the $k$-precolouring of~$G'$ can be extended to a $k$-colouring. 
\qed
\end{proof}

Hujter and Tuza asked for which graph classes {\sc Precolouring Extension} is \NP-complete  (Problem 1.1 in~\cite{HT96}). We pose the following problem.

\begin{open}
Complete the classification of the complexity of 
{\sc Precolouring Extension} for $(H_1,H_2)$-free graphs.
\end{open}

By combining a number of known hardness results on {\sc List Colouring} for complete bipartite graphs~\cite{JS97},
complete split graphs~\cite{GH09} and $(3P_1,P_1+P_2$)-free graphs~\cite{GPS12} with a number of new hardness results, 
Golovach and Paulusma~\cite{GP} completely classified the complexity of {\sc List Colouring} and 
$\ell$-{\sc List Colouring}, $\ell\geq 3$, for $(H_1,H_2)$-free graphs.
Note that, by symmetry, the graphs $H_1$ and $H_2$ may be swapped in each of the three subcases of Theorem~\ref{t-main}.

\begin{theorem}
\label{t-main}
Let $H_1$ and $H_2$ be two  graphs. Then {\sc List Colouring} is polynomial-time solvable for $(H_1,H_2)$-free graphs 
in the following cases:
\begin{enumerate}
\item $H_1\ssi P_3$ or $H_2\ssi P_3$
\item $H_1\ssi C_3$ and $H_2\ssi K_{1,3}$
\item $H_1=K_r$ for some $r\geq 3$ and $H_2=sP_1$ for some $s\geq 3$.
\end{enumerate}
In all other cases, even {\sc $3$-List Colouring} is \NP-complete for $(H_1,H_2)$-free graphs.
\end{theorem}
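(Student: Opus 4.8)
The plan is to establish the three tractable families, then the \NP-completeness of every remaining pair, and finally to check that the case split exhausts all choices of $(H_1,H_2)$; throughout I use that hardness propagates upward and tractability downward in Figure~\ref{f-col}, and Observation~\ref{o-in}.

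\emph{Tractable side.} Case~1 is immediate: if $H_i\ssi P_3$ for some $i\in\{1,2\}$, the $(H_1,H_2)$-free graphs form a subclass of the $P_3$-free graphs, so Theorem~\ref{t-kktw}(iii) applies. For Case~2, every $(C_3,K_{1,3})$-free graph has maximum degree at most~$2$, because the neighbourhood of a vertex is independent (by $C_3$-freeness) and contains no three pairwise non-adjacent vertices (by $K_{1,3}$-freeness), hence has at most two elements; such graphs are disjoint unions of paths and cycles, and {\sc List Colouring} is solved by a left-to-right dynamic program on each path, after guessing the colour of one vertex on each cycle. For Case~3, Ramsey's theorem bounds the order of any $(K_r,sP_1)$-free graph by the constant $R(r,s)-1$, so {\sc List Colouring} is decided by brute force in polynomial time.

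\emph{Hardness side.} By Case~1 we may assume that neither $H_1$ nor $H_2$ is an induced subgraph of $P_3$. Since $3P_1$, $P_1+P_2$, $C_3$ and $C_4$ are precisely the minimal graphs that are not induced subgraphs of $P_3$, each of $H_1,H_2$ contains one of these four graphs as an induced subgraph. The strategy is then to exhibit, inside $(H_1,H_2)$-free graphs, a hereditary class for which {\sc 3-List Colouring} is already \NP-complete. The available such classes are: complete bipartite graphs (Theorem~\ref{t-cb}), which are triangle-free and $(P_1+P_2)$-free; complete split graphs~\cite{GH09}, which are $(P_1+P_2)$-free and $C_4$-free; $3P_1$-free graphs~\cite{Ja96} (and the smaller class of $(3P_1,P_1+P_2)$-free graphs~\cite{GPS12}); triangle-free graphs of bounded maximum degree (Theorem~\ref{t-mp96}); and graphs of large girth and bounded maximum degree (Theorem~\ref{t-kl07} with $k=3$, since {\sc 3-Colouring} is a special case of {\sc 3-List Colouring}). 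This gives the following routing rules, each justified by Observation~\ref{o-in}: if each $H_i$ contains $C_3$ or $P_1+P_2$, use complete bipartite graphs; if each $H_i$ contains $P_1+P_2$ or $C_4$, use complete split graphs; if each $H_i$ contains $3P_1$, use $3P_1$-free graphs; and if some $H_i$ contains $C_3$ or $C_4$ while the previous rules have been excluded, use a bounded-degree triangle-free or large-girth class (all short cycles and all large stars are then forbidden).

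\emph{Remaining boundary pairs and the main obstacle.} After the routing rules the only pairs left are, up to swapping $H_1$ and $H_2$, those in which $H_1$ is edgeless or a star ($H_1\in\{pP_1,K_{1,n}:p,n\ge 3\}$ --- equivalently, $3P_1$ is the only one of the four minimal obstructions that $H_1$ contains) while $H_2$ has independence number at most~$2$ and is not handled by Cases~2 or~3. For such $(H_1,H_2)$ none of the off-the-shelf hard classes embeds into $(H_1,H_2)$-free graphs, and one must supply dedicated \NP-completeness reductions --- those of Golovach and Paulusma~\cite{GP} --- reducing from suitably sparse or bounded-independence-number instances and built from gadgets in the spirit of Theorem~\ref{t-cb} and~\cite{GPS12}; in the claw-free subcases one can instead invoke Theorem~\ref{t-line}, since line graphs of cubic graphs are $(K_{1,3},K_4)$-free and {\sc 3-Colouring} is already \NP-complete on them. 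It then remains to verify that Cases~1--3, the routing rules, and these boundary pairs together cover every $(H_1,H_2)$, which yields the dichotomy. The main obstacle is precisely this last part of the hardness argument: delineating exactly which pairs sit just outside the tractable region around $K_{1,3}/C_3$ and around $K_r/sP_1$, and designing {\sc 3-List Colouring} reductions (with a fixed bound on list size) for $(H_1,H_2)$-free graphs in those cases, where the clean subclass arguments no longer apply.
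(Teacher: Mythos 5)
Your treatment of the tractable direction is correct and essentially complete: Theorem~\ref{t-kktw}~(iii) for Case~1, the observation that $(C_3,K_{1,3})$-free graphs have maximum degree at most~$2$ for Case~2, and the Ramsey bound for $(K_r,sP_1)$-free graphs for Case~3 are exactly the right arguments, and your identification of $3P_1$, $P_1+P_2$, $C_3$ and $C_4$ as the minimal graphs that are not induced subgraphs of $P_3$ is also correct. Be aware, though, that the survey does not prove Theorem~\ref{t-main} at all: it attributes the classification to Golovach and Paulusma~\cite{GP}, describing the proof only as a combination of the known hardness results for complete bipartite graphs~\cite{JS97}, complete split graphs~\cite{GH09} and $(3P_1,P_1+P_2)$-free graphs~\cite{Ja96,GPS12} with ``a number of new hardness results''. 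Your routing rules reproduce the first half of that recipe faithfully.

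The genuine gap is the second half, and you name it yourself. For the boundary pairs (essentially $H_1\in\{pP_1,K_{1,n}\}$ against an $H_2$ of independence number at most~$2$ escaping Cases~2 and~3, plus a few pairs built around $C_4$ and complete split graphs), no off-the-shelf hard class embeds into the $(H_1,H_2)$-free graphs, and the dedicated $3$-{\sc List Colouring} reductions of~\cite{GP} for exactly these pairs are the new content of the theorem; your proposal defers them entirely to the citation, so what you have is a proof outline rather than a proof. Three further concrete problems. First, your fourth routing rule, read literally (``if \emph{some} $H_i$ contains $C_3$ or $C_4$\ldots''), fires on the pair $(K_r,sP_1)$ and would declare your own tractable Case~3 \NP-complete; the condition must be imposed on \emph{both} graphs and must require that each $H_i$ contain a short cycle or a vertex of large degree before Theorem~\ref{t-kl07} or Theorem~\ref{t-mp96} can be invoked. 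Second, the exhaustiveness of the case split is asserted but never verified, and it rests on structural facts you do not prove --- for instance that $(P_1+P_2)$-free graphs are exactly the complete multipartite graphs, whence the graphs whose only obstruction among the four is $3P_1$ are precisely the edgeless graphs and the stars, and the only graph whose only obstruction is $C_4$ is $C_4$ itself. Third, the claw-free escape via Theorem~\ref{t-line} only disposes of those boundary pairs in which $H_2\si K_4$ (line graphs of cubic graphs being $(K_{1,3},K_4)$-free); it says nothing about, say, $(K_{1,3},C_4)$ or $(K_{1,n},\overline{P_5})$, which remain among the cases needing the bespoke reductions.
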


The computational complexity classification of {\sc $k$-Colouring}, {\sc $k$-Precolouring Extension} and {\sc List $k$-Colouring} restricted to $(H_1,H_2)$-free graphs is not complete either. 
Tractability for many cases is obtained from 
Theorem~\ref{t-polysum}~(iii)--(iv).
Moreover, as mentioned in the proof of Theorem~\ref{t-twographs}, Cases~(i):1, 2, 4--6, 10 of Theorem~\ref{t-twographs} hold for 3-{\sc Colouring} and Case~(i):7  holds for 4-{\sc Colouring}.
In particular, the case in which $H_1$ is a cycle and $H_2$ is a path has been studied for all three variants~\cite{GPS11,HH13,HJP14}. 

We survey the known results for these two cases below. In order to do this we need three additional results.  
The first additional result was proven by Golovach et al.~\cite{GPS11}, 
which extends a corresponding result of Lozin and Rautenbach~\cite{LR03} from $r=1$ to arbitrary $r\geq 1$. 

\begin{theorem}
\label{t-com2}
For all $k,r,s,t\geq 1$, {\sc List $k$-Colouring} can be solved in linear time for $(K_{r,s},P_t)$-free graphs.
\end{theorem}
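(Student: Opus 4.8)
The plan is to reduce, in linear time, to a bounded number of instances that are either of bounded size or trivial, and to solve each of those directly; throughout, $k,r,s,t$ are fixed constants. First I would note that we may assume $G$ is connected, since a List $k$-colouring of $G$ exists if and only if one exists for each connected component, and each component is again $(K_{r,s},P_t)$-free. A connected $P_t$-free graph has diameter at most $t-2$, so a breadth-first search from an arbitrary vertex of such a component produces at most $t-1$ non-empty layers; this ``bounded depth'' is what the argument exploits.

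Consider first the case $r=1$, which is the result of Lozin and Rautenbach~\cite{LR03}. If $G$ has a clique on $k+1$ vertices then $G$ is not $k$-colourable and we reject; otherwise $\omega(G)\le k$. Then $K_{1,s}$-freeness forces bounded degree: for every vertex $v$, the graph induced by $N_G(v)$ has no independent set of size $s$ (such a set together with $v$ would induce a $K_{1,s}$) and no clique on $k$ vertices (such a clique together with $v$ would give $\omega(G)\ge k+1$), so by Ramsey's theorem $|N_G(v)|$ is bounded by a function of $s$ and $k$. A connected graph of bounded maximum degree and bounded diameter has boundedly many vertices, so every connected component of $G$ has size bounded by a function of $r,s,t,k$, and List $k$-Colouring is settled on each component by brute force.

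For $r\ge 2$ this argument breaks down: for all $r,s\ge 2$ and $t\ge 5$, every split graph --- in particular every complete split graph $K_{k-1}\vee\overline{K_n}$ --- is $(K_{r,s},P_t)$-free and yet arbitrarily large even when $\omega\le k$, and these classes have unbounded clique-width, so Theorem~\ref{t-cliquewidth} does not apply either. The plan here is to prove a structural lemma of the following shape: for every $(K_{r,s},P_t)$-free graph $G$, either $\omega(G)>k$ (so $G$ is not $k$-colourable), or $G$ contains a set $S$ whose size is bounded in terms of $r,s,t,k$ such that every connected component of $G-S$ also has bounded size; and moreover such an $S$ can be produced in linear time. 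For split graphs one takes $S$ to be the (now bounded) clique side, leaving an edgeless remainder; in general $S$ would be assembled from the breadth-first layering together with an analysis, using $K_{r,s}$-freeness and the bound $\omega(G)\le k$, showing that within each layer only boundedly many vertices can attach in essentially different ways to the previous layers, the remaining vertices falling into components of bounded size. Given such an $S$, the algorithm enumerates the $O(1)$ proper colourings of $G[S]$ respecting $L$ (there are at most $k^{|S|}$ mappings $S\to\{1,\ldots,k\}$ to check); for each, it colours every connected component $C$ of $G-S$ independently by brute force over the $k^{|C|}=O(1)$ colourings of $C$, respecting the already fixed colours on $N_G(C)\cap S$. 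If no colouring of $S$ can be completed we reject; otherwise we output the colouring. This costs linear time per colouring of $S$, hence linear time overall.

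The hard part will be establishing the structural lemma for $r\ge 2$ and realising it in linear time: one must show that a $(K_{r,s},P_t)$-free graph with clique number at most $k$ decomposes into a bounded ``core'' plus components of bounded size, using the two forbidden induced subgraphs in tandem --- $P_t$-freeness to bound the depth of the graph and $K_{r,s}$-freeness, together with the clique bound, to force the unbounded parts to become sparse and disconnected once the core is removed. This is precisely the work carried out by Golovach, Paulusma and Song~\cite{GPS11}; extracting such a decomposition --- together with a $(k+1)$-clique whenever $\omega(G)>k$ --- in linear time rather than merely polynomial time is the technical heart of the proof.
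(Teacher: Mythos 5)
First, a point of comparison: the survey does not prove Theorem~\ref{t-com2} at all --- it is quoted from Golovach, Paulusma and Song~\cite{GPS11} as an extension of Lozin and Rautenbach~\cite{LR03} --- so there is no in-paper argument to measure you against. Judged on its own, your treatment of the case $r=1$ is correct and is essentially the Lozin--Rautenbach argument: a vertex of degree at least the Ramsey number $R(k+1,s)$ forces either a $K_{k+1}$ (reject) or an induced $K_{1,s}$ (impossible), so the maximum degree is bounded, and bounded degree plus the bounded diameter coming from $P_t$-freeness gives components of bounded size.

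For $r\geq 2$, however, the structural lemma your algorithm rests on is not merely unproven --- it is false as stated. Take the tree $G$ with a centre $c$ adjacent to $v_1,\dots,v_n$, where each $v_i$ is additionally adjacent to $m$ private leaves $w_{i,1},\dots,w_{i,m}$. Being a tree, $G$ contains no $K_{r,s}$ for any $r,s\geq 2$ (even as a subgraph), its longest induced path is a $P_5$ (leaf--$v_i$--$c$--$v_j$--leaf), so it is $P_t$-free for every $t\geq 6$, and $\omega(G)=2\leq k$. Yet for any set $S$ with $|S|$ bounded by a constant, some $v_i$ lies outside $S$ together with at least $m-|S|$ of its leaves, so $G-S$ has a component of unbounded size. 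Hence no decomposition into a bounded core plus components of bounded size exists, and the ``enumerate colourings of $S$, brute-force each component'' algorithm cannot be made to work in this form. What the example suggests --- and what makes the general case genuinely harder than $r=1$ --- is that the pieces left after deleting a bounded set must be handled because they are \emph{easy to list-colour} (here, stars), not because they are small; one is pushed towards an iterated or recursive decomposition in which $P_t$-freeness bounds the recursion depth and $K_{r,s}$-freeness together with the clique bound controls the attachments at each level. That argument is the actual content of~\cite{GPS11} and is absent from your proposal, so the proof is incomplete for every $r\geq 2$. (A minor further slip: complete split graphs $K_{k-1}\vee \overline{K_n}$ are cographs and have clique-width at most $2$; it is split graphs in general, not the complete ones, that have unbounded clique-width.)
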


Theorem~\ref{t-com2} implies that for all  $g\geq 5$, $k\geq 1$ and $t\geq 1$, {\sc List $k$-Colouring} can be solved in linear time for $P_t$-free graphs of girth at least $g$, or equivalently $(C_3,\ldots,C_{g-1},P_t)$-free graphs (contrast with Theorem~\ref{t-kl07} on {\sc $k$-Colouring}). Huang et al.~\cite{HJP14} showed that when $C_4=K_{2,2}$ is no longer forbidden the computational complexity
changes again by proving that for all $k\geq 4$ and $g\geq~6$, there exists a constant $t^g_k$ such that $k$-{\sc Colouring} is \NP-complete for
$(C_3,C_5,\ldots,C_{g-1},P_{t^g_k})$-free graphs.

We also need 
another result of Huang et al.~\cite{HJP14}. 
\begin{theorem}
\label{t-hp14}
{\sc List $4$-Colouring} is \NP-complete for $(C_5,C_6,K_4,\overline{P_1+2P_2},\overline{P_1+P_4},P_6)$-free graphs.
\end{theorem}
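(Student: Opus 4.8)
The plan is to give a polynomial-time reduction from a variant of satisfiability — concretely Not-All-Equal $3$-Satisfiability, where each clause contains three distinct variables — to {\sc List $4$-Colouring} restricted to the stated graph class. Given a formula $\phi$ with variables $x_1,\dots,x_n$ and clauses $C_1,\dots,C_m$, I would build a graph $G_\phi$ together with a $4$-list assignment $L$ (all lists subsets of $\{1,2,3,4\}$) so that $\phi$ has a not-all-equal satisfying assignment if and only if $G_\phi$ admits a colouring respecting $L$. The point of choosing an NAE-style source problem is that it plugs naturally into list-colouring ``gate'' gadgets; the real content of the construction will be geometric, not logical.

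First I would fix a constant-size \emph{core} gadget $K$ — a triangle together with a few pendant vertices carrying restrictive lists — whose only purpose is to act as a rigid palette and to keep the diameter small: every other vertex of $G_\phi$ will be made adjacent to a prescribed vertex of $K$, so that $K$ is a dominating set of bounded size. For each variable $x_i$ I would add a \emph{truth} vertex $v_i$ with list $\{1,2\}$, with colour $1$ meaning ``true'' and colour $2$ meaning ``false'', wired to $K$ so that both colours of $v_i$ stay available. For each clause $C_j=(\ell_{j,1}\vee \ell_{j,2}\vee \ell_{j,3})$ I would attach a constant-size \emph{clause gadget} $D_j$, with lists over $\{1,2,3,4\}$, joined to $v_{i(j,1)},v_{i(j,2)},v_{i(j,3)}$ and to $K$ in such a way that $D_j$ together with its three truth-vertices has a valid list colouring precisely when the three literal values are not all equal. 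Correctness of the reduction (a satisfying assignment yields a colouring and vice versa) is then a routine check on the gadget.

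The bulk of the work is showing $G_\phi$ is $(C_5,C_6,K_4,\overline{P_1+2P_2},\overline{P_1+P_4},P_6)$-free. For the clique-type obstructions I would use that every list has size at most $3$, so no $K_4$ can arise from vertices forced to pairwise different colours, together with an explicit list of the few triangles and $C_4$'s that occur — all localized inside $K$ or inside a single $D_j$ — which simultaneously rules out any vertex dominating an induced $P_4$ or an induced $C_4$ (hence no $\overline{P_1+P_4}$ and no $\overline{P_1+2P_2}$, the $4$-wheel). For the path and cycle obstructions I would exploit that $K$ is a dominating set of bounded size: an induced path or induced cycle meets each vertex of $K$ at most once, and between two consecutive core-vertices it can remain inside a single gadget for only a bounded number of steps, which caps induced paths at five vertices and induced cycles at length four, giving $P_6$-, $C_5$- and $C_6$-freeness at once.

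I expect the main obstacle to be the clause gadget: it must be at the same time (a) functionally correct as an NAE gate under the list constraints over $\{1,2,3,4\}$, (b) ``flat'' enough — every vertex near the core — that no induced $P_6$ appears even when two clause gadgets overlap on a shared truth-vertex, and (c) free of $K_4$, of $\overline{P_1+2P_2}$ and of $\overline{P_1+P_4}$ despite necessarily containing triangles. If engineering all three properties from scratch proves awkward, a safer route is to start from the known reduction proving {\sc List $4$-Colouring} \NP-complete for $P_6$-free graphs (Theorem~\ref{t-polysum}~(ii).1, due to Golovach, Paulusma and Song) and to re-examine its gadgets: one would show that, possibly after minor modification, the graphs it outputs already avoid $C_5$, $C_6$, $K_4$, $\overline{P_1+2P_2}$ and $\overline{P_1+P_4}$, so that the whole proof collapses to a careful but essentially mechanical case analysis of the forbidden configurations in that fixed construction.
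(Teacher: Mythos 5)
The survey does not prove this theorem itself; it quotes it from Huang, Johnson and Paulusma~\cite{HJP14}, whose argument is indeed a gadget reduction in the general spirit you describe, obtained by revisiting the {\sc List $4$-Colouring} hardness construction for $P_6$-free graphs and verifying each forbidden subgraph by explicit case analysis. So your plan points in a reasonable direction, but as written it is an outline with genuine gaps rather than a proof. The entire technical content --- the clause gadget $D_j$ --- is never exhibited; you only list the properties it would need, and you yourself identify the simultaneous satisfaction of those properties as the main obstacle. Your fallback of re-examining the known $P_6$-free gadgets is essentially what~\cite{HJP14} does, but that step \emph{is} the proof, not an afterthought.

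Two of the verification arguments you do offer would fail. The claim that ``every list has size at most $3$, so no $K_4$ can arise'' confuses colourability with structure: $K_4$-freeness is a property of the adjacency relation alone and is independent of the list assignment, so you must show the constructed graph contains no four mutually adjacent vertices, full stop. Likewise, a bounded dominating core $K$ does not cap induced path or cycle length: an induced $P_6$ need not visit $K$ at all (take $P_{10}$ plus one dominating vertex --- the $P_{10}$ survives as an induced path), so ``every vertex is adjacent to some core vertex'' gives no bound on induced paths among the non-core vertices; the danger of long induced paths threading through consecutive clause gadgets via shared truth vertices is exactly the case your argument does not handle. Finally, making $K$ dominating actively conflicts with the other constraints: if a core vertex $w$ is adjacent to all vertices of some portion of the construction, then any induced $P_4$ or induced $C_4$ inside $N(w)$ yields an induced $\overline{P_1+P_4}$ or $\overline{P_1+2P_2}$ with apex $w$, so each core vertex's neighbourhood must itself be $P_4$- and $C_4$-free --- a strong structural restriction that your sketch never confronts and that works against the long-range wiring a satisfiability reduction needs.
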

The third additional result was also proven by Huang et al.~\cite{HJP14}. It strengthens a result of 
Kratochv\'{\i}l~\cite{Kr93} who showed that 5-{\sc Precolouring Extension} is \NP-complete for 
$P_{14}$-free bipartite graphs. 
\begin{theorem}
\label{t-prebipartite}
For all $k\geq 4$, $k$-{\sc Precolouring Extension}  is \NP-complete for $P_{10}$-free bipartite graphs.
\end{theorem}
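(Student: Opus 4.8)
Membership in \NP\ is immediate: an extension of the given precolouring is a polynomially verifiable certificate. For \NP-hardness the plan is a polynomial reduction from \textsc{$3$-Satisfiability}. The construction has to accomplish two things simultaneously: faithfully encode the formula, and -- the delicate part -- output a bipartite graph with no induced $P_{10}$, thereby sharpening the $P_{14}$-free reduction of Kratochv\'{\i}l~\cite{Kr93}.

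The first step is to fix the palette in a way that is uniform in $k$. On each side of the intended bipartition place $k-2$ \emph{reference vertices}, precoloured with the colours $3,4,\dots,k$, and join every reference vertex to all non-reference vertices on the opposite side. Every non-reference (``active'') vertex is then confined to the colours $\{1,2\}$, apart from a few internal vertices of the clause gadgets which will be adjacent to only some reference vertices and so may also use colours $3$ and $4$. The colours $5,\dots,k$ thus play no role other than being consumed by reference vertices, so the whole construction depends on $k$ only through these reference vertices; this is also why the statement correctly stops at $k=3$, where only the essentially trivial palette $\{1,2\}$ would remain on active vertices.

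Next come the logical gadgets, every new vertex being assigned to whichever side keeps the graph bipartite. For each variable $x$ there is a hub $u_x$ on one side, coloured $1$ or $2$ to record false or true, and its negation $\overline{u_x}$ on the other side, joined to $u_x$ so as to receive the opposite colour; each literal occurrence in a clause is carried by a pendant path of length one or two hanging off the appropriate hub and ending in a \emph{literal vertex} placed, according to the sign of the literal, on the side the clause needs. To each clause I attach a constant-size bipartite gadget on its three literal vertices, using colours $3$ and $4$ internally, designed so that it admits a proper extension precisely when at least one of the three literal vertices carries its ``true'' colour. (Building a sufficiently small such gadget for each of the few possible sign patterns is the combinatorial core of the argument and is where the improvement from $P_{14}$ to $P_{10}$ is obtained.) A satisfying assignment then yields a colouring and conversely; the reduction is clearly polynomial, and the output is bipartite by construction.

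The main obstacle is verifying $P_{10}$-freeness, and the key structural facts to exploit are the following. First, a reference vertex is adjacent to \emph{all} active vertices on the opposite side, so an induced path meeting two active vertices of a given side can contain at most one reference vertex of the opposite colour, and none once it meets three such vertices. Second, the hubs $u_x,\overline{u_x}$ are the only active vertices of large degree, each clause gadget sits at depth at most two from its literal vertices, and distinct variables and gadgets communicate only through hubs. One then argues that an induced path, once inside a gadget, can escape only via a literal vertex and then a hub, and that at a hub the adjacent reference vertices together with the negation $\overline{u_x}$ create chords that prevent any long continuation; a finite case analysis over the ways a path can thread hub--literal--gadget segments (containing at most one reference vertex per side) bounds its number of vertices by $9$. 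Pushing this bound down to exactly $9$, rather than to the weaker $13$ obtainable from a naive construction, is exactly a matter of trimming every gadget and connector to the shortest length still compatible with the logic, and that optimisation is where most of the work lies.
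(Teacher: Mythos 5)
The survey does not actually prove this theorem: it is quoted from Huang, Johnson and Paulusma~\cite{HJP14} as a strengthening of Kratochv\'{\i}l's $P_{14}$-free result~\cite{Kr93}, so there is no in-paper argument to compare yours against. Judged on its own terms, your proposal is a reduction template rather than a proof, and the two places where you defer the work are exactly the places where the theorem lives. First, the clause gadget is never exhibited. You say yourself that building ``a sufficiently small such gadget for each of the few possible sign patterns is the combinatorial core of the argument and is where the improvement from $P_{14}$ to $P_{10}$ is obtained'' --- but that core is precisely what must be supplied; without an explicit gadget one cannot check that it extends a colouring if and only if some literal is true, that it keeps the graph bipartite, or that it is short enough. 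Second, the $P_{10}$-freeness verification is asserted, not performed. The skeleton you describe --- reference vertex, hub $u_x$, a connector of length one or two, a literal vertex, two gadget-interior vertices, a second literal vertex, another connector, a hub $u_y$ --- already threads ten or eleven vertices, so whether every such walk acquires a chord depends entirely on the unspecified adjacencies inside the gadgets and connectors; the claimed bound of nine vertices is unsupported.

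There is also an internal inconsistency in your palette setup: you first join \emph{every} reference vertex to \emph{all} active vertices on the opposite side (which confines all active vertices to $\{1,2\}$), and then allow some gadget-interior vertices to be ``adjacent to only some reference vertices'' so that they may use colours $3$ and $4$. You cannot have both. Once the reference vertices carrying colours $3$ and $4$ are no longer universal, the key structural fact you invoke for $P_{10}$-freeness (``a reference vertex is adjacent to all active vertices on the opposite side'') fails for exactly those vertices, which reopens the possibility of long induced paths routed through them. In short, the high-level strategy --- pad the palette with precoloured dominating vertices, encode variables by $\{1,2\}$-coloured hubs, attach constant-size clause gadgets, and control induced paths via the dominating vertices --- is the standard and reasonable one, but none of the load-bearing steps is carried out, so the proposal cannot be accepted as a proof of the statement.
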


We are now ready to state Theorem~\ref{t-coloringall}. A proof of this theorem was given by Huang et al.~\cite{HJP14}; as it is obtained by combining a number of results from different papers we include it here as well.

\begin{theorem}\label{t-coloringall}
Let $k,s,t$ be three integers.
The following statements hold for $(C_s,P_t)$-free graphs.
\begin{itemize}
\item[(i)] {\sc List $k$-Colouring} is \NP-complete if\\[-8pt]
\begin{enumerate}
\item  $k\geq 4$,   $s=3$ and $t\geq 8$ 
\item  $k\geq 4$,  $s\geq 5$ and $t\geq 6$.\\[-8pt]
\end{enumerate}
\item[] {\sc List $k$-Colouring} is polynomial-time solvable if\\[-8pt]
\begin{enumerate}
  \setcounter{enumi}{2}
\item   $k\leq 2$, $s\geq 3$ and $t\geq 1$\\[-10pt]
\item   $k=3$, $s=3$ and $t\leq 7$
\item   $k=3$, $s=4$ and $t\geq 1$
\item   $k=3$, $s\geq 5$ and $t\leq 6$\\[-10pt]
\item  $k\geq 4$, $s=3$ and $t\leq 6$
\item  $k\geq 4$, $s=4$ and $t\geq 1$
\item  $k\geq 4$, $s\geq 5$ and $t\leq 5$.
\end{enumerate} 
\end{itemize}
\begin{itemize}
\item[(ii)] {\sc $k$-Precolouring Extension} is \NP-complete if\\[-8pt]
\begin{enumerate}
\item $k=4$,   $s=3$ and $t\geq 10$ 
\item $k=4$,  $s=5$ and $t\geq 7$ 
\item $k=4$, $s=6$ and $t\geq 7$ 
\item $k=4$, $s=7$ and $t\geq 8$
\item $k=4$, $s\geq 8$ and $t\geq 7$\\[-10pt] 
\item $k\geq 5$,   $s=3$ and $t\geq 10$ 
\item $k\geq 5$,  $s\geq 5$ and $t\geq 6$.\\[-8pt]
\end{enumerate}
\item[] {\sc $k$-Precolouring Extension} is polynomial-time solvable if\\[-8pt]
\begin{enumerate}
  \setcounter{enumi}{7}
\item  $k\leq 2$, $s\geq 3$ and $t\geq 1$\\[-10pt]
\item   $k=3$, $s=3$ and $t\leq 7$
\item   $k=3$, $s=4$ and $t\geq 1$
\item   $k=3$, $s\geq 5$ and $t\leq 6$\\[-10pt]
\item  $k\geq 4$, $s=3$ and $t\leq 6$
\item  $k\geq 4$, $s=4$ and $t\geq 1$
\item  $k\geq 4$, $s\geq 5$ and $t\leq 5$.
\end{enumerate}
\end{itemize}
\begin{itemize}
\item[(iii)] $k$-{\sc Colouring} is \NP-complete if\\[-8pt]
\begin{enumerate}
\item  $k=4$, $s=3$ and $t\geq 22$ 
\item  $k=4$,  $s=5$ and $t\geq 7$ 
\item  $k=4$, $s=6$ and $t\geq 7$ 
\item  $k=4$, $s=7$ and $t\geq 9$
\item  $k=4$, $s\geq 8$ and $t\geq 7$\\[-10pt]
\item  $k\geq 5$, $s=3$ and $t\geq t_k$ where $t_k$ is a constant that only depends on $k$
\item  $k\geq 5$,  $s=5$ and $t\geq 7$ 
\item  $k\geq 5$,  $s\geq 6$ and $t\geq 6$.\\[-8pt]
\end{enumerate}
\item[] $k$-{\sc Colouring} is polynomial-time solvable if\\[-8pt]
\begin{enumerate}
  \setcounter{enumi}{8}
\item   $k\leq 2$, $s\geq 3$ and $t\geq 1$\\[-10pt]
\item   $k=3$, $s=3$ and $t\leq 7$
\item   $k=3$, $s=4$ and $t\geq 1$
\item   $k=3$, $s\geq 5$ and $t\leq 7$\\[-10pt]
\item  $k=4$, $s=3$ and $t\leq 6$
\item  $k=4$, $s=4$ and $t\geq 1$
\item  $k=4$, $s=5$ and $t\leq 6$
\item  $k=4$, $s\geq 6$ and $t\leq 5$\\[-10pt]
\item   $k\geq 5$, $s=3$ and $t\leq k+2$
\item   $k\geq 5$, $s=4$ and $t\geq 1$
\item   $k\geq 5$, $s\geq 5$ and $t\leq 5$.
\end{enumerate}
\end{itemize}
\end{theorem}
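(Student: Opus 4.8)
The statement will be assembled from two monotonicities plus a short list of anchor results, in the style of the proofs of Theorems~\ref{t-polysum} and~\ref{t-twographs}. The first monotonicity is Figure~\ref{f-col}: \NP-completeness propagates upwards from {\sc $k$-Colouring} through {\sc $k$-Precolouring Extension} to {\sc List $k$-Colouring}, and also from {\sc List $k_1$-Colouring} to {\sc List $k_2$-Colouring} when $k_1\le k_2$, while polynomial-time solvability propagates the other way; so it suffices to prove the hardness lines of part~(iii) together with the sharper hardness lines of~(ii) and~(i) not cascading up from them, and dually for the tractable lines starting from part~(i). The second monotonicity is in $(s,t)$: for fixed $s$, every $(C_s,P_{t'})$-free graph is $(C_s,P_t)$-free when $t'\le t$, so each hardness result extends to all larger $t$ and each tractability result to all smaller $t$. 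The bridge to the one-forbidden-subgraph results is the observation that a $P_r$-free graph has no induced cycle of length greater than~$r$ (an induced $C_\ell$, $\ell\ge r+1$, contains an induced $P_r$); hence every $(C_s,P_t)$-free graph is $P_t$-free, and $(C_s,P_t)$-free \emph{equals} $P_t$-free once $s\ge t+1$, so the ``large~$s$'' lines can be read directly off Theorem~\ref{t-polysum}.

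For the tractable lines: the $k\le2$ lines are Theorem~\ref{t-general}~(i). The $s=4$ lines in all three parts come at once from $C_4=K_{2,2}$ and Theorem~\ref{t-com2}, giving linear-time {\sc List $k$-Colouring} on $(C_4,P_t)$-free graphs for every $k$ and~$t$. For $s\ge5$, using that $(C_s,P_t)$-free graphs are $P_t$-free, Theorem~\ref{t-polysum}~(iii)--(iv) applies: the $P_5$-free algorithm handles all $k$ with $t\le5$, the $P_7$-free {\sc List $3$-Colouring} algorithm of Bonomo et al.~\cite{BCMSSZ} handles $k=3$ with $t\le6$, and Theorem~\ref{t-polysum}~(iv) handles {\sc $4$-Precolouring Extension}. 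For $s=3$ the same {\sc List $3$-Colouring} algorithm (or its $(C_3,P_7)$-free variant in~\cite{BCMSSZ}) settles the $k=3$ entries with $t\le7$, the $k\ge4$ entries use the bounded clique-width of $(C_3,P_6)$-free graphs and Theorem~\ref{t-cliquewidth}~(ii), and --- in part~(iii) only --- the {\sc $k$-Colouring} entries with $s=3$ go further because Theorem~\ref{t-gy} makes every $(C_3,P_t)$-free graph $(t-2)$-colourable, so {\sc $k$-Colouring} is a trivial ``yes''-instance once $k\ge t-2$; this is exactly why the tractable $s=3$ region for {\sc $k$-Colouring} grows with~$k$.

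For the hardness lines, part~(iii) is the core. For $s=3$ I would invoke the triangle-free, short-induced-path reductions of Huang et al.~\cite{HJP14}: {\sc $4$-Colouring} is \NP-complete for $(C_3,P_{22})$-free graphs, and for each $k\ge5$ there is a constant $t_k$ with {\sc $k$-Colouring} \NP-complete for $(C_3,P_{t_k})$-free graphs (obtained by fixing the girth in their $(C_3,C_5,\ldots,C_{g-1},P_{t^g_k})$-free construction). Theorem~\ref{t-prebipartite} gives the sharper $s=3$ lines of part~(ii) --- {\sc $k$-Precolouring Extension} is \NP-complete on $P_{10}$-free bipartite, hence $(C_3,P_{10})$-free, graphs for every $k\ge4$ --- and a still more economical list-colouring reduction of~\cite{HJP14} brings {\sc List $k$-Colouring} ($k\ge4$) down to $(C_3,P_8)$-free graphs, which is line~(i).1. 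For $s\ge5$, Theorem~\ref{t-hp14} supplies a class that is $(C_5,C_6)$-free and $P_6$-free, hence $C_s$-free for every $s\ge5$ (the $P_6$-freeness killing all $C_s$ with $s\ge7$), so it is a subclass of $(C_s,P_6)$-free graphs for every $s\ge5$; this gives line~(i).2 at once, for all $k\ge4$ by $k$-monotonicity. The matching $s\ge5$ lines of parts~(ii) and~(iii) then split: for $s=5,6,7$ they need dedicated {\sc $k$-Precolouring Extension} and {\sc $k$-Colouring} reductions from~\cite{HJP14}, while for $s$ large enough that $P_t$-freeness already forbids $C_s$ they collapse onto the $P_6$-free and $P_7$-free hardness results of Theorem~\ref{t-polysum}~(i) and its surrounding discussion. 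Everything then propagates upwards through Figure~\ref{f-col} and forwards in~$t$.

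The step I expect to be the real obstacle is producing those anchor reductions --- the ones fixing the constants $22,10,8,7,6,t_k,t^g_k$. A generic reduction from a suitable variant of {\sc Satisfiability} or of {\sc $3$-Colouring} builds variable and clause gadgets joined by connector paths, and both the gadgets and --- especially --- the wiring naturally create long induced paths and short cycles; the work of~\cite{HJP14} lies in redesigning the gadgets and their interconnections so that the instance simultaneously avoids $C_s$ (or all triangles, or is bipartite) and avoids an induced $P_t$ for the claimed~$t$, and in checking that this~$t$ is small enough to abut the adjacent tractable line. Once those reductions are in place, the rest of the theorem is the bookkeeping of propagating \NP-completeness upwards and polynomial-time solvability downwards through Figure~\ref{f-col}, monotonically in~$t$, and --- via the ``no long induced cycle'' fact --- through the $P_r$-free results of Theorem~\ref{t-polysum}.
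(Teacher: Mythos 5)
Your overall strategy --- propagating \NP-completeness upwards and tractability downwards through Figure~\ref{f-col}, monotonically in $t$, using the bridge that $(C_s,P_t)$-free coincides with $P_t$-free once $s\geq t+1$, and anchoring everything on the cited reductions of Huang et al., Theorem~\ref{t-hp14}, Theorem~\ref{t-prebipartite}, Theorem~\ref{t-com2}, Theorem~\ref{t-gy} and the bounded clique-width of $(C_3,P_6)$-free graphs --- is essentially the paper's own proof, and almost all of the sixty-odd cases are correctly discharged by it. (The paper attributes the dedicated $s\in\{5,6,7\}$ hardness reductions in parts~(ii) and~(iii) to Hell and Huang~\cite{HH13} rather than to~\cite{HJP14}, but that is a citation detail, not a gap in the argument.)

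There is, however, one case your machinery cannot reach: line~(iii).15, the polynomial-time solvability of {\sc $4$-Colouring} on $(C_5,P_6)$-free graphs. None of your tractability sources applies here. Downward propagation from part~(i) fails because, by Theorem~\ref{t-hp14}, {\sc List $4$-Colouring} is \NP-complete on this very class; the reduction to $P_t$-free results fails because {\sc $4$-Colouring} on $P_6$-free graphs is open; Theorem~\ref{t-com2} needs a forbidden complete bipartite graph, and $C_5$ is not one; Theorem~\ref{t-gy} concerns $(K_r,P_t)$-free graphs; and no bounded clique-width result is available for $(C_5,P_6)$-free graphs. The paper closes this case by invoking the dedicated algorithm of Chudnovsky, Maceli, Stacho and Zhong~\cite{CMSZ14} for $4$-colouring $P_6$-free graphs with no induced $5$-cycles. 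This omission matters beyond bookkeeping: $(C_5,P_6)$ with $k=4$ is precisely the pair the paper singles out as the only known instance where {\sc $k$-Colouring} and {\sc List $k$-Colouring} have different complexities on an $(H_1,H_2)$-free class, so it is structurally guaranteed not to follow from your ``propagate down from the list version'' scheme. You should add~\cite{CMSZ14} as an explicit anchor for this line.
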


\begin{proof} Again we either refer back to an earlier result, or give a reference and the results quoted can clearly be seen to imply the statements of the theorem.

We first consider the intractable cases of {\sc List $k$-Colouring}.
For (i).1, we note that Huang et al.~\cite{HJP14} showed that {\sc List 4-Colouring} is \NP-complete for $(C_3,P_8)$-free graphs. 
Theorem~\ref{t-hp14} implies that
{\sc List $4$-Colouring}  is \NP-complete for the class of $(C_5,C_6,P_6)$-free graphs which proves (i).2.

We now consider the tractable cases of {\sc List $k$-Colouring}. 
Theorem~\ref{t-general}~(i) implies (i).3,
whereas (i).4 follows from Theorem~\ref{t-polysum}~(iii).
Theorem~\ref{t-com2} implies (i).5 and (i).8,
whereas (i).6 and (i).9  follow from Theorem~\ref{t-polysum}~(iii). 
Recall that the class of $(C_3,P_6)$-free graphs has bounded clique-width, as shown by Brandst\"adt, Klembt and Mahfud~\cite{BKM06}.
Combining this with Theorem~\ref{t-cliquewidth}~(ii) we find that {\sc List $k$-Colouring} is 
polynomial-time solvable on $(C_3,P_6)$-free graphs for all $k\geq 1$. This proves~(i).7

\medskip
\noindent
We now consider {\sc $k$-Precolouring Extension}.
As the tractable cases all follow from Theorem~\ref{t-coloringall}~(i), 
we are left to consider the \NP-complete cases.
Theorem~\ref{t-prebipartite} implies (ii).1 and (ii).6.
Huang et al.~\cite{HJP14} proved that 4-{\sc Precolouring Extension} is \NP-complete for $(C_7,P_8)$-free graphs, which implies
 (ii).4, and they also proved (ii).7.
We observe that (ii).2, (ii).3 and (ii).5 follow immediately from corresponding results for {\sc $k$-Colouring} as shown by Hell and Huang~\cite{HH13}.

\medskip
\noindent
Finally, we consider $k$-{\sc Colouring}; first the
\NP-complete cases.
Recall that (iii).1 has been shown by Huang et al.~\cite{HJP14},
who also proved (iii).6; they showed that  $t_k\leq k + (k+1)(3\cdot 2^{k-1}-1)$ for all $k\geq 5$.
Golovach et al.~\cite{GPS11} proved that
for all $s\geq 5$, there exists a constant $t(s)$ such that $4$-{\sc Colouring} is
 \NP-complete for $(C_5,\ldots,C_s,P_{t(s)})$-free graphs. In particular, 
they showed that  $4$-{\sc Colouring}  is \NP-complete for $(C_5,P_{23})$-free graphs, and this result has
been strengthened by Hell and Huang~\cite{HH13} who proved all the other \NP-completeness subcases.

We now consider the tractable cases of $k$-{\sc Colouring}. 
Theorem~\ref{t-polysum}~(iii) 
implies (iii).9, (iii).10, (iii).12, (iii).16 and (iii).19. Theorem~\ref{t-com2} implies  (iii).11, (iii).14 and (iii).18.
Case~(ii):3 of Theorem~\ref{t-twographs} implies~(iii).13. Chudnovsky, Maceli, Stacho and Zhong~\cite{CMSZ14} proved (iii).15. Theorem~\ref{t-gy} implies~(iii).17.\qed
\end{proof}
Theorem~\ref{t-coloringall} leaves a number of cases open 
(also see Huang et al.~\cite{HJP14}).
\begin{open}\label{o-cspt}
Determine the complexity of the missing cases from Theorem~\ref{t-coloringall} for $(C_s,P_t)$-free graphs, which are:
\begin{itemize}
\item [(i)] for {\sc List $k$-Colouring} when
\begin{itemize}
\item [$\bullet$] $k=3$, $s=3$ and $t \geq 7$
\item [$\bullet$] $k=3$, $s\geq 5$ and $t\geq 7$
\item[$\bullet$] $k\geq 4$, $s=3$ and $t=7$.\\[-8pt]
\end{itemize}
\item[(ii)] 
for {\sc $k$-Precolouring Extension} when
\begin{itemize}
\item [$\bullet$] $k=3$, $s=3$ and $t \geq 7$
\item [$\bullet$] $k=3$, $s\geq 5$ and $t\geq 7$
\item[$\bullet$] $k=4$, $s=3$ and $7\leq t\leq 9$
\item[$\bullet$] $k=4$, $s\geq 5$ and $t=6$
\item[$\bullet$] $k=4$, $s=7$ and $t=7$
\item [$\bullet$] $k\geq 5$, $s=3$ and $7\leq t\leq 9$\\[-8pt]
\end{itemize}
\item[(iii)]
for {\sc $k$-Colouring} when
\begin{itemize}
\item [$\bullet$] $k=3$, $s=3$ and $t \geq 8$
\item [$\bullet$] $k=3$, $s\geq 5$ and $t\geq 8$
\item[$\bullet$] $k=4$, $s=3$ and 
$7\leq t\leq 21$
\item[$\bullet$] $k=4$, $s\geq 6$ and $t=6$
\item [$\bullet$] $k=4$, $s=7$ and $7\leq  t\leq 8$
\item [$\bullet$] $k\geq 5$, $s=3$ and $k+3\leq t\leq t_k-1$
\item [$\bullet$]  $k\geq 5$, $s=5$ and $t=6$.
\end{itemize}
\end{itemize}
\end{open}

\noindent
{\bf Some Important Subproblems}
Note that as a consequence of Theorem~\ref{t-hp14},  {\sc List $4$-Colouring}  is \NP-complete for $(C_5,P_6)$-free graphs.
As {\sc 4-Colouring}  is polynomial-time solvable for $(C_5,P_6)$-free graphs by Theorem~\ref{t-coloringall}, there exists an integer $k$
and two graphs $H_1$ and
$H_2$ (namely $k=4$, $H_1=C_5$ and $H_2=P_6$) for which the complexity of {\sc $k$-Colouring} and {\sc List $k$-Colouring} is not the same
when restricted to $(H_1,H_2)$-free graphs. Recall that such a situation is not known when we forbid only one induced graph $H$. 
Moreover, when forbidding two graphs $H_1$ and $H_2$ no such complexity jump is known, for any integer $k$, between {\sc $k$-Colouring} and {\sc $k$-Precolouring Extension} restricted to $(H_1,H_2)$-free graphs. 
Hence, it would also
be interesting  to determine the complexity of $4$-{\sc Precolouring Extension} for $(C_5,P_6)$-free graphs,
which is one of the cases in Open Problem~\ref{o-cspt}.

In addition to the above, we point out another case in Open Problem~\ref{o-cspt}: that of determining the complexity of $4$-{\sc Colouring} restricted to $(C_3,P_7)$-free graphs.
By Theorem~\ref{t-gy}, every $(C_3,P_7)$-free graph is 5-colourable. 
By Thereom~\ref{t-coloringall}~(iii), {\sc $k$-Colouring} is polynomial-time solvable for $(C_3,P_7)$-free graphs if $k\leq 3$. Hence, the problems $4$-{\sc Colouring} and {\sc Colouring} are polynomially equivalent for $(C_3,P_7)$-free graphs.

\medskip
\noindent
We now discuss a number of  results for $k$-{\sc Colouring} restricted to $(H_1,H_2)$-free graphs when $(H_1,H_2)$ is not a cycle and a path.

First we consider pairs of graphs $(H_1,H_2)$ with the property that every $(H_1,H_2)$-free graph is 3-colourable. Because 2-{\sc Colouring}
is polynomial-time solvable, such results imply polynomial-time solvability of {\sc 3-Colouring} for $(H_1,H_2)$-free graphs. 

We note that only when $H\in \{P_1,P_2\}$ is every $H$-free graph 3-colourable.  
Thus for all graphs $H_2$, every $(P_1,H_2)$-free graph and every $(P_2,H_2)$-free graph is 3-colourable.   Also Wagon~\cite{Wa80} showed that every $(K_r,2P_2)$-free graph is $\frac{1}{2}r(r-1)$-colourable, which implies that every $(C_3,2P_2)$-free graph is 3-colourable. 

We focus now  on the case where $H_1$ and $H_2$ are connected and show that this is \emph{almost} completely understood.
A pair of graphs $(H_1,H_2)$ is called 
\emph{good} if every $(H_1,H_2)$-free graph is 3-colourable,
and, moreover, the class of $(H_1,H_2)$-free graphs is properly contained in the classes of $H_1$-free graphs and $H_2$-free graphs.

A good pair $(H_1,H_2)$ is {\it saturated} if there is no good pair $(H_1',H_2')$ with $H_1\subsetneq_i H_1'$ and $H_2\subsetneq_i H_2'$.
We note in passing that Sumner~\cite{Su80} showed that 
every $(C_3,P_5)$-free graph is 3-colourable.
However, the pair $(C_3,P_5)$ is not saturated. 
This follows from this result of Randerath~\cite{Ra04} (see Figure~\ref{f-small} for the names of small graphs): 
\begin{itemize}
\item If $(K_3,\mbox{fork})$ is a good pair, then $(K_3,\mbox{fork})$,  $(K_3,\mbox{``H''-graph})$ and $(K_4,P_4)$ are the only saturated pairs of connected graphs.
\item If $(K_3,\mbox{fork})$ is not a good pair, then $(K_3,\mbox{cross})$, $(K_3,\mbox{``E''-graph})$, $(K_3,\mbox{``H''-graph})$ 
and $(K_4,P_4)$ are the only saturated pairs of connected graphs.
\end{itemize}
Note that the cross and ``E''-graph are the two maximal connected
proper induced subgraphs of the fork.
Hence the following open problem remains (which is Conjecture~6 in~\cite{Ra04} and Conjecture~44 in~\cite{RS04b}).

\begin{open}
Is every $(K_3,\mbox{fork})$-free graph $3$-colourable?
\end{open}
Recently, Fan, Xu, Ye and Yu~\cite{FXYY14} made progress in answering this question by proving that every $(C_5,K_3,\mbox{fork})$-free graph  is 3-colourable.

The natural next question is, of course, to ask when $(H_1,H_2)$-free graphs are $k$-colourable for $k \geq 4$.  A little is known. 
For $r\geq 2$ and $s\geq 1$, the {\it broom} $B(r, s)$ (also called a {\it mop}) is the graph obtained from a star $K_{1,s+1}$ after subdividing one of its edges $r-2$ times, so $B(2,s)=K_{1,s+1}$, $B(r,1)=P_{r+1}$ and $B(r,2)=S_{1,1,r-1}$.
Gy\'arf\'as, Szemer\'edi and Tuza~\cite{GST80} proved that every $(C_3,B(r,s))$-free graph is $(r+s-1)$-colourable.
Hence, every $(C_3,P_r)$-free graph is $(r-1)$-colourable.
Randerath and Schiermeyer~\cite{RS04b} improved this implication by showing that for all $r\geq 4$, every $(C_3,P_r)$-free graph is $(r-2)$-colourable. This result has been extended by Wang and Wu~\cite{WW}, who proved that every connected $(C_3,B(r,s))$-free
graph that is not an odd cycle is $(r+s-2)$-colorable. For the cases $r=2, s\geq 8$ and $r=3,s\in \{2,3\}$ they were able to 
reduce this bound to $r+s-3$.

The above results imply that every $(C_3,P_6)$-free graph is 4-colourable
(this also follows from Theorem~\ref{t-gy}).
Brandt~\cite{Br02} showed that every $(C_3,sP_2)$-free graph is $(2s-2)$-colourable for any $s\geq 3$.
This means that every $(C_3,3P_2)$-free graphs is 4-colourable.
Pyatkin~\cite{Py13} showed that every $(C_3,2P_3)$-free graph is 4-colourable,
whereas Broersma et al.~\cite{BGPS12b} showed that every $(C_3,P_2+P_4)$-free graph is 4-colourable.

\begin{open}
Determine all pairs $(H_1,H_2)$ that have the property that every $(H_1,H_2)$-free graph is $4$-colourable.
\end{open}

One problem that has had considerable attention is the classification of the computational complexity of {\sc 3-Colouring} for $(K_{1,3},H)$-free graphs.
As noted in Theorem~\ref{t-twographs}, Lozin and Purcell~\cite{LP14} showed that {\sc 3-Colouring} on $(K_{1,3},H)$-free graphs is \NP-complete  whenever 
$H\si K_{1,3}$ or $H\si \overline{2P_1+P_2}$ or $H\si C_r$ for  $r\geq 4$ or $H\si K_4$ or $H\si \Phi_{i,j}$ for $i,j\geq 0$, both even or $H\si \Phi'_i$ for $i\geq 1$ odd or $H\si \Phi''_i$ for $i\geq 0$ even.
They also observed that  
3-{\sc Colouring}  is polynomial-time solvable on $(K_{1,3},H)$-free graphs if every connected component of $H$ contains at most one triangle.  So what about the remaining cases where $H$ has a connected component containing two triangles?
Randerath, Schiermeyer and Tewes~\cite{RST02} proved that {\sc 3-Colouring} is polynomial-time solvable on $(K_{1,3},\Phi_0)$-free 
graphs, and later Kami\'nski and Lozin~\cite{KL07b} gave a linear-time algorithm.
The latter authors also showed that {\sc 3-Colouring} is polynomial-time solvable on $(K_{1,3},T_{0,0,j}^\Delta)$-free graphs for all $j\geq 0$, and Lozin and Purcell~\cite{LP14} showed that {\sc 3-Colouring} is polynomial-time solvable on $(K_{1,3},\Phi_1)$-free graphs and $(K_{1,3},\Phi_3)$-free graphs.

\begin{open}
Complete the classification of the complexity of $3$-{\sc Colouring} for $(K_{1,3},H)$-free graphs.
\end{open}

Malyshev~\cite{Ma15} characterized exactly those pairs of graphs $H_1$ and $H_2$ each with at most five vertices
for which 3-{\sc Colouring} is polynomial-time solvable when restricted to $(H_1,H_2)$-free graphs. His main result, obtained by combining known results with new results, can be summarized
as follows. 
Recall that $C_3^{++}$ is the bull. 
If at least one of $H_1$, $H_2$ is a forest and at least one of $H_1$, $H_2$ is the line graph of a forest with maximum degree at most~3 and $(H_1,H_2)\neq \{(K_{1,4},\overline{C_4+P_1}),(K_{1,4},C_3^{++})\}$,
then {\sc 3-Colouring} is polynomial-time solvable for $(H_1,H_2)$-free graphs; otherwise it is \NP-complete.

We recall that the complexity status of 4-{\sc Colouring} is still open for the class of $P_6$-free graphs. 
Randerath et al.~\cite{RST02} showed that every $(C_3,P_6)$-free graph is 4-colourable and gave a polynomial-time algorithm for finding a 4-colouring of a $(C_3,P_6)$-free graph.
The latter result also follows from Theorem~\ref{t-coloringall}~(i):4 and can be extended to $(\overline{P_1+P_3},P_6)$-free graphs
due to Theorem~\ref{t-twographs}~(ii):3.
Besides the aforementioned results that 4-{\sc Colouring} is polynomial-time solvable for $(C_5,P_6)$-free graphs (due to Theorem~\ref{t-coloringall}~(iii):15) and 
$(K_{r,s},P_6)$-free graphs for any two positive integers $r,s$ (due to Theorem~\ref{t-com2}), the following results are known as well.
Let $C_4^+$ denote the banner, which is the graph obtained from $C_4$ after adding a pendant vertex. Huang~\cite{Hu13} proved that 4-{\sc Colouring} is polynomial-time solvable for $(C_4^+,P_6)$-free graphs.  Brause, Schiermeyer, Holub, Ryj\'a\v{c}ek, Vr\'ana and Krivo\v{s}-Bellu\v{s}~\cite{SBHRVK15} proved that 4-{\sc Colouring} is polynomial-time solvable 
for $(P_6,S_{1,1,2})$-free graphs.  The same authors also considered subclasses of $(C_3^{++},P_6)$-free graphs and showed that 4-{\sc Colouring} is polynomial-time solvable for $(C_3^+,C_3^{++},P_6)$-free graphs and for
$(C_3^{++},P_6,\overline{S_{1,1,2}})$-free graphs.
Maffray and Pastor~\cite{MP15} generalized these two results by proving that 4-{\sc Colouring} is polynomial-time solvable for $(C_3^{++},P_6)$-free graphs.

\subsubsection{Certifying Algorithms}

Recall from the previous section that Ho\`ang, Moore, Recoskie, Sawada and Vatshelle~\cite{HMRSV15} showed that the number of 5-critical $P_5$-free graphs and 
the number of 5-vertex-critical $P_5$-free graphs is infinite.
They also showed that there exist 
exactly eight 5-critical $(C_5,P_5)$-free graphs.
Dhaliwal et al.~\cite{DHHMMP} proved that, for all $k\geq 1$, the number of $k$-vertex-critical $(P_5,\overline{P_5})$-free graphs is finite. They showed
that their result implies a certifying algorithm for 
{\sc $k$-Colouring} on  $(P_5,\overline{P_5})$-free graphs for all $k\geq 1$.
Randerath, Schiermeyer and Tewes~\cite{RST02} proved that the Gr\"otzsch graph is the only 4-critical $(C_3,P_6)$-free graph.
Hell and Huang~\cite{HH13} showed that, for all $k\geq 1$, the number of $k$-vertex-critical $(C_4,P_6)$-free graphs is finite. 
Moreover, they gave an explicit construction of all four $4$-vertex-critical $(C_4,P_6)$-free graphs  
and of all thirteen $5$-vertex-critical $(C_4,P_6)$-free graphs. 
Hence, they obtained certifying algorithms for  $3$-{\sc Colouring} and $4$-{\sc Colouring} on $(C_4,P_6)$-free graphs.
For all $k\geq 6$, explicit constructions of all $k$-vertex-critical graphs are unknown
(for  $k\geq 5$, no certifying algorithm is known for 
$k$-{\sc Colouring} on $(C_4,P_6)$-free graphs).
Goedgebeur and Schaudt~\cite{GS15} determined  
all 4-critical $(C_4,P_7)$-free graphs, all 4-critical $(C_5,P_7)$-free graphs and all 4-critical $(C_4,P_8)$-free graphs leading to 
certifying algorithms for {\sc 3-Colouring} restricted to these three graph classes.

\medskip
\noindent
We conclude this section by noting that,
as far as we are aware, there are no additional results for {\sc Choosability} and {\sc $k$-Choosability} known for $(H_1,H_2)$-free graphs other
than those that follow directly from previously mentioned theorems 
and two results of Esperet, Gy\'arf\'as and Maffray~\cite{EGM14} who proved that every $(K_{1,3},K_4)$-free
graph is 4-choosable and that every $(K_{1,3},K_5)$-free graph is 7-choosable.

\section{Graph Classes Defined by Other Forbidden Patterns}\label{s-strong}

In this section we consider a number of other graph classes.
We first consider
strongly $H$-free
graphs.  Recall that, given a graph $H$, the class of strongly $H$-free graphs contains those graphs that do not contain $H$ as a subgraph.

Contrast with $H$-free graphs where the graph $H$ is forbidden as an \emph{induced} subgraph:
forbidding a graph $H$ as an induced subgraph is equivalent to forbidding $H$ as a subgraph if and only if $H$ is a complete graph.
So Theorem~\ref{t-kktw} tells us that {\sc Colouring} is \NP-complete for strongly $H$-free
graphs if $H$ is a complete graph. Golovach, Paulusma and Ries~\cite{GPR12} extended this result. 
Let $T_1,\ldots,T_6$ be the trees displayed in Figure~\ref{fig:T}.
For an integer $p\geq 0$, let  $T_2^p$ be the tree obtained from $T_2$ after subdividing the edge $st$ $p$ times; note that $T_2^0=T_2$.

\begin{figure}[ht]
\centering\scalebox{0.8}{\input{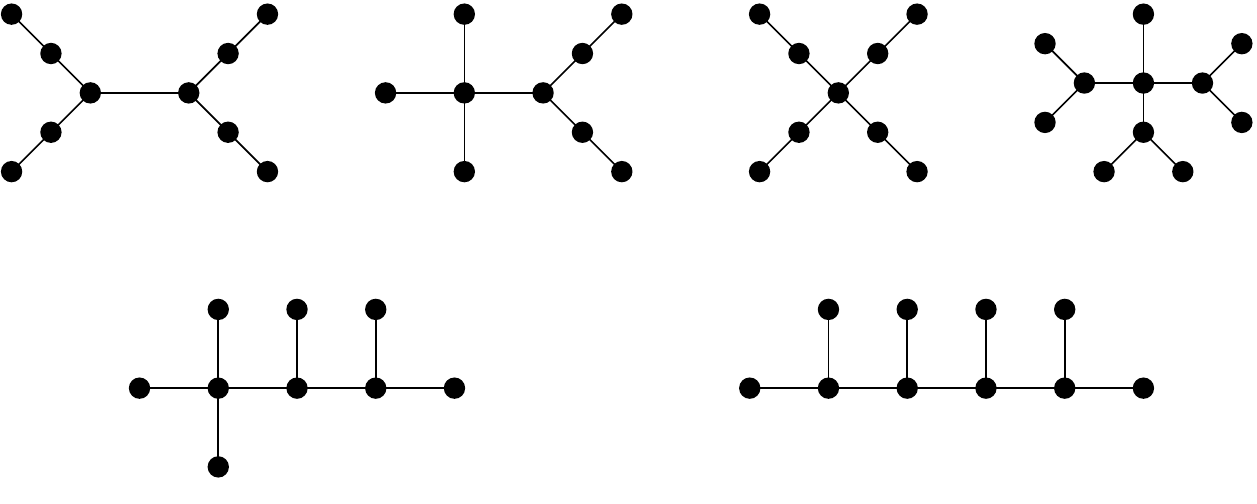_t}}
\caption{The trees $T_1,\ldots, T_6$. 
\label{fig:T}}
\end{figure}

\begin{theorem}
\label{t-gpr12}
Let $H$ be a graph. Then the following two statements hold:
\begin{itemize}
\item [(i)]
{\sc Colouring} is polynomial-time solvable for strongly $H$-free graphs~if 
 \begin{itemize}
 \item [1.] $H$ is a forest with $\Delta(H)\leq 3$ in which each connected component has at most one vertex of degree~$3$, or
 \item [2.] $H$ is a forest  with $\Delta(H)\leq 4$ and $|V(H)|\leq 7$.\\[-8pt]
\end{itemize}
\item [(ii)] Even  $3$-{\sc Colouring} is \NP-complete for strongly $H$-free graphs if 
\begin{itemize}
\item[1.] $H$ contains a cycle, or
\item[2.] $\Delta(H)\geq 5$, or
\item[3.] $H$ has a connected component with at least two vertices of degree~$4$, or 
\item[4.] $H$ contains a subdivision of the tree $T_1$  as a subgraph, or
\item[5.] $H$ contains the tree $T_2^p$ as a subgraph for some  $0\leq p\leq 9$, or
\item[6.] $H$ contains one of the trees $T_3,T_4,T_5,T_6$ as a subgraph.
\end{itemize}
\end{itemize}
\end{theorem}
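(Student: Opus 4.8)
This statement is due to Golovach, Paulusma and Ries~\cite{GPR12}; below is the line of attack I would follow.

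The first step is to translate strong $H$-freeness into a forbidden-induced-subgraph condition. For a fixed graph $H$ on $h$ vertices, let $\mathcal{H}$ be the finite set of all graphs $H'$ with $V(H')=V(H)$ and $E(H)\subseteq E(H')$. Then a graph $G$ is strongly $H$-free if and only if $G$ contains no induced subgraph isomorphic to a member of $\mathcal{H}$: if $G$ has a (not necessarily induced) copy of $H$ on a vertex set $S$, then $G[S]\in\mathcal{H}$, and conversely. This lets one bring the induced-subgraph machinery of Section~\ref{ss-col} and Section~\ref{s-hfreetwo} to bear, in particular Theorems~\ref{t-kl07}, \ref{t-line} and~\ref{t-mp96} for hardness and treewidth/clique-width based dynamic programming (Theorem~\ref{t-cliquewidth}) for tractability.

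For the hardness statements in (ii) the task in each case is a polynomial reduction to $3$-{\sc Colouring} whose output graphs are strongly $H$-free. Cases (ii).1 and (ii).2 are essentially free: if $H$ contains a cycle then every graph of girth greater than $|V(H)|$ is strongly $H$-free, and $3$-{\sc Colouring} is already \NP-complete on such graphs by Theorem~\ref{t-kl07}; and if $\Delta(H)\geq 5$ then every graph of maximum degree at most $4$ is strongly $H$-free, so Theorem~\ref{t-mp96} suffices. The delicate cases are (ii).3--(ii).6, which concern forests $H$ with $\Delta(H)\leq 4$, where neither high girth nor small maximum degree alone avoids $H$. Here I would start from a triangle-free bounded-degree hard instance of $3$-{\sc Colouring} and transform it so as to (a) subdivide every edge enough times that the girth exceeds $|V(H)|$ and every ``branch-to-branch'' path and every pendant path has length exceeding $|V(H)|$, and (b) replace the remaining high-degree vertices by propagation gadgets --- a path of degree-$3$ vertices carrying short pendant paths, wired so that in any proper $3$-colouring the gadget behaves as a single vertex --- which preserves the $3$-{\sc Colouring} answer while bringing the maximum degree down to $3$ or $4$. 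One then verifies strong $H$-freeness directly: a copy of a forest is an injective homomorphism, so each branch vertex of $H$ must be sent to a host vertex of at least equal degree; since in the constructed graph the vertices of degree at least $3$ are few, pairwise far apart, and joined only by long paths, the only subtrees one can embed are long and thin, and in particular $H$ --- with its two nearby branch vertices in (ii).3, its prescribed subdivision of $T_1$ in (ii).4, its $T_2^p$ with $p\leq 9$ in (ii).5, or one of $T_3,\dots,T_6$ in (ii).6 --- fails to embed. The precise thresholds (the value $9$ in (ii).5, the exact shapes of $T_1,\dots,T_6$) are dictated by which short tree patterns survive in the gadgets used.

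For the tractable cases in (i) the argument is structural. In case (i).1 every component of $H$ is a path or a spider $S_{a,b,c}$, and I claim strongly $H$-free graphs have bounded treewidth: by the excluded-grid theorem a graph of sufficiently large treewidth contains a subdivision of an arbitrarily large wall as a subgraph, and a large subdivided wall --- with unboundedly many degree-$3$ vertices joined by long internally disjoint paths --- contains every such forest as a subgraph; hence {\sc Colouring} is solved by standard dynamic programming over a tree decomposition. Case (i).2 is genuinely harder, because such classes need not have bounded treewidth: already for $H$ equal to the ``H''-graph, suitably subdivided walls are strongly $H$-free yet have unbounded treewidth, so the wall argument breaks. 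Here I would instead show that a strongly $H$-free graph of large treewidth is forced to have bounded maximum degree, and that its ``dense part'' (the vertices of degree at least $3$ together with their neighbourhoods) decomposes into pieces of bounded size joined by subdivided paths, reducing {\sc Colouring} to a bounded-size problem together with parity constraints along the paths; the bounded-degree remainder is handled by Brooks' theorem (one need only test for $K_4$-components), and the few remaining small trees by ad hoc decomposition arguments. The main obstacle, as in all such classifications, is part (ii): engineering the propagation gadgets so that the reduction is simultaneously a correct $3$-{\sc Colouring} reduction and provably strongly $H$-free, and then carrying out the finite but intricate case analysis --- tracking branch vertices and the lengths of paths between them --- that rules out every embedding of $H$; pinning down the $|V(H)|\leq 7$, $\Delta(H)\leq 4$ boundary in (i).2 is the secondary obstacle.
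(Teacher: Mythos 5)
The survey does not actually prove Theorem~\ref{t-gpr12}; it quotes it from Golovach, Paulusma and Ries~\cite{GPR12}, so your sketch has to be measured against that paper and against the machinery the survey assembles around the statement. The parts of your proposal that are solid are the parts the survey already makes easy: cases (ii).1 and (ii).2 do follow at once from Theorems~\ref{t-kl07} and~\ref{t-mp96}, and your excluded-grid argument for (i).1 is correct (a large subdivided wall contains, as a subgraph, every forest whose components are paths and subdivided claws, so strongly $H$-free graphs have bounded treewidth and Theorem~\ref{t-js97} finishes). It is, however, heavier than necessary: for precisely these forests, containment as a subgraph coincides with containment as a minor, so Theorem~\ref{t-forest} (Bienstock--Robertson--Seymour--Thomas~\cite{BRST91}) bounds the pathwidth by $|V(H)|-2$ directly; this is the route the survey itself takes for the analogous choosability statement (Theorem~\ref{t-chst}).

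The genuine gaps are in (ii).3--6 and in (i).2. For (ii).3--6 you describe the intended shape of the reduction (subdivide, replace high-degree vertices by gadgets, verify non-embeddability of $H$) but exhibit no gadget that provably preserves $3$-colourability while avoiding the specific trees $T_1,\dots,T_6$ and $T_2^p$; this is where essentially all of the work of~\cite{GPR12} lies, and nothing in your sketch explains, for instance, why the threshold $p\le 9$ in (ii).5 arises. More seriously, the structural claim you propose for (i).2 is false. Take $H$ to be the ``H''-graph: a $6$-vertex forest with $\Delta(H)=3$ and two degree-$3$ vertices in one component, so it lies in case (i).2 but not in (i).1. A wall with every edge subdivided once and with $n$ pendant leaves attached to a single branch vertex is strongly $H$-free (every edge of this graph has an endpoint of degree at most $2$, whereas a subgraph copy of the ``H''-graph needs an edge whose two endpoints each have two further private neighbours), yet it has unbounded treewidth \emph{and} unbounded maximum degree, and its degree-$\ge 3$ vertices together with their neighbourhoods do not split into bounded pieces. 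So neither ``large treewidth forces bounded maximum degree'' nor your claimed decomposition holds. Brooks' theorem also cannot ``handle the bounded-degree remainder'': it only bounds $\chi$ by $\Delta$, and deciding $3$-colourability of graphs of maximum degree $4$ is \NP-complete (Theorem~\ref{t-mp96}), which is exactly the regime left over when $\Delta(H)=4$. A first reduction your sketch omits, and which is the natural entry point for (i).2, is that a graph all of whose subgraphs avoid a fixed forest on $h$ vertices is $(h-2)$-degenerate and hence $(h-1)$-colourable, so for $|V(H)|\le 7$ only $k$-colourability for $k\le 5$ has to be decided; the finitely many remaining trees $H$ then each require their own structural analysis, which is what~\cite{GPR12} carries out and what your proposal does not supply.
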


Theorems~\ref{t-kktw} and~\ref{t-gpr12} show that  {\sc Colouring} behaves differently on 
$H$-free graphs and strongly $H$-free graphs.
Theorem~\ref{t-gpr12} implies the following classification for graphs $H$ of at most seven vertices (also see~\cite{GPR12}).

\begin{theorem}\label{thm:poly}
Let $H$ be a graph. If $|V(H)|\leq 7$, then
{\sc Colouring} is polynomial-time solvable on strongly $H$-free graphs if $H$ is a forest of maximum degree at most~$4$, and \NP-complete otherwise.
\end{theorem}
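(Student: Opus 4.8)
The plan is to obtain Theorem~\ref{thm:poly} as an immediate corollary of Theorem~\ref{t-gpr12}, by checking that for graphs $H$ with $|V(H)|\leq 7$ the hypotheses listed in Theorem~\ref{t-gpr12}(i) and~(ii) together exhaust all possibilities and match the claimed dichotomy. So the argument is a short case analysis rather than any new construction.

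First I would dispose of the tractable side. Suppose $H$ is a forest with $\Delta(H)\leq 4$ and $|V(H)|\leq 7$. Then $H$ satisfies condition~2 of Theorem~\ref{t-gpr12}(i) verbatim, so {\sc Colouring} is polynomial-time solvable on strongly $H$-free graphs. (Condition~1 of Theorem~\ref{t-gpr12}(i) is not needed here, as any forest it describes is already a forest with $\Delta\leq 4$ on at most seven vertices.)

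Next, the hardness side. Suppose $|V(H)|\leq 7$ and $H$ is \emph{not} a forest of maximum degree at most~$4$. Then either $H$ contains a cycle, or $H$ is a forest with $\Delta(H)\geq 5$. In the first case Theorem~\ref{t-gpr12}(ii).1 gives that $3$-{\sc Colouring} is \NP-complete on strongly $H$-free graphs; in the second case Theorem~\ref{t-gpr12}(ii).2 does the same. Since $3$-{\sc Colouring} trivially reduces to {\sc Colouring}, and {\sc Colouring} restricted to strongly $H$-free graphs lies in \NP\ (a $k$-colouring is a polynomial-size certificate, and for the fixed graph $H$ one can even test in polynomial time whether the input contains $H$ as a subgraph), it follows that {\sc Colouring} is \NP-complete on strongly $H$-free graphs. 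Combining the two sides yields the stated dichotomy.

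The only point that needs genuine care — and is really the whole ``content'' of the proof — is verifying exhaustiveness of the case split for $|V(H)|\leq 7$: a forest with $\Delta(H)\leq 4$ is caught by~(i).2, a graph containing a cycle by~(ii).1, and a forest with a vertex of degree at least~$5$ by~(ii).2, with no graph on at most seven vertices left unclassified. I do not anticipate any obstacle beyond this bookkeeping; the more delicate cases~(ii).3--6 of Theorem~\ref{t-gpr12}, involving specific larger trees and high-degree configurations, simply do not arise in this range of $|V(H)|$.
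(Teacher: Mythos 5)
Your proof is correct and matches the paper's own (essentially one-line) derivation: the paper simply asserts that Theorem~\ref{t-gpr12} implies this classification, and your exhaustive case split --- forest with $\Delta(H)\leq 4$ handled by (i).2, a graph containing a cycle by (ii).1, and a forest with $\Delta(H)\geq 5$ by (ii).2 --- is exactly the bookkeeping that justifies it. One minor quibble: your parenthetical claim that every forest described by condition (i).1 already has at most seven vertices is false in general (such forests, e.g.\ long paths, can be arbitrarily large), but within the hypothesis $|V(H)|\leq 7$ the only point you actually need --- that (i).2 alone covers the tractable side --- stands.
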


The classification of {\sc Precolouring Extension} for $H$-free graphs is still open. For {\sc List Colouring}, Golovach and Paulusma~\cite{GP} gave a complete complexity classification even for graph classes defined by more than two forbidden subgraphs. 
 
\begin{theorem}
\label{t-all}
Let $\{H_1,\ldots,H_p\}$ be a finite set of graphs. Then {\sc List Colouring} is polynomial-time solvable for strongly $(H_1,\ldots,H_p)$-free graphs 
if at least one of the $H_i$, $1\leq i \leq p$, is a forest of maximum degree at most
$3$, 
every connected component of which has at most one vertex of degree~$3$.
In all other cases, even {\sc List $3$-Colouring} is \NP-complete for $(H_1,\ldots,H_p)$-free graphs.
\end{theorem}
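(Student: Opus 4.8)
The plan is to prove the stated dichotomy in two directions, in each case first reducing to a single forbidden subgraph. Throughout, recall that by Observation~\ref{o-in} forbidding a graph as a (not necessarily induced) subgraph is monotone under taking subgraphs, so that the class of strongly $(H_1,\dots,H_p)$-free graphs is contained in the class of strongly $H_i$-free graphs for each~$i$; also recall that every strongly-free graph is in particular free, so hardness on the former class implies hardness on the latter.

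\medskip\noindent\textbf{Polynomial part.} Suppose some $H_i$, say $H_1$, is a forest of maximum degree at most~$3$ in which every connected component has at most one vertex of degree~$3$; equivalently, each component of $H_1$ is a path or a subdivided claw $S_{h,i,j}$. Since forbidding the remaining $H_j$'s only restricts the class further, it suffices to solve {\sc List Colouring} on strongly $H_1$-free graphs. Here I would reuse the structural analysis behind Theorem~\ref{t-gpr12}~(i).1: a graph with no $H_1$ subgraph is bounded-degenerate, and --- using that a graph with no $S_{h,i,j}$ (resp.\ no $P_t$) subgraph is already ``almost path-like'', together with an Erd\H{o}s--P\'osa-type argument over the finitely many components of $H_1$ --- one obtains a constant $c=c(H_1)$ and a set $S$ of at most $c$ vertices such that every connected component of $G-S$ has treewidth at most~$c$; hence $G$ itself has treewidth at most~$2c$. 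Given bounded treewidth (and a width-$O(c)$ tree decomposition, computable in linear time for fixed $c$), {\sc List Colouring} is solved by the standard bottom-up dynamic programming over the decomposition, where the table at a node with bag~$X$ is indexed by the at most $\prod_{v\in X}|L(v)|\le n^{c+1}$ colourings of $X$ respecting the lists; this runs in $n^{O(c)}$ time. The only novelty over Theorem~\ref{t-gpr12} is that the dynamic programming must carry the lists, which is routine; note that Theorem~\ref{t-cliquewidth}~(ii) cannot be invoked directly because the number of colours is part of the input.

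\medskip\noindent\textbf{NP-completeness part.} Now assume no $H_i$ is such a forest, and put $N=\max_i|V(H_i)|$. Then each $H_i$ is of exactly one of three types: (I) $H_i$ contains a cycle; (II) $H_i$ is a forest with $\Delta(H_i)\ge4$; (III) $H_i$ is a forest with $\Delta(H_i)\le3$ having a connected component with at least two vertices of degree~$3$, two of which then lie at distance at most~$N$ in $H_i$. The key sub-claim is: for every constant $g$, {\sc List $3$-Colouring} is \NP-complete on the class $\mathcal{G}_g$ of graphs $G$ with girth at least~$g$, with $\Delta(G)\le3$, and in which every two vertices of degree~$3$ are at distance at least~$g$. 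Granting this, take $g=N+1$. Then every $G\in\mathcal{G}_g$ is strongly $(H_1,\dots,H_p)$-free: a subgraph copy of a type-(I) $H_i$ would create a cycle of length at most $N<g$; a type-(II) $H_i$ contains a vertex whose image would have degree at least~$4$; and a type-(III) $H_i$ would force two vertices of degree~$3$ of $G$ to lie at distance at most $N<g$ --- all impossible. Hence the sub-claim yields the theorem, and since its instances are strongly-free the statement also holds verbatim for the $(H_1,\dots,H_p)$-free superclass.

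\medskip\noindent To prove the sub-claim I would reduce from {\sc List $3$-Colouring} on complete bipartite graphs, which is \NP-complete by Theorem~\ref{t-cb}. Given $(K_{m,m},L)$, replace each vertex $x$ of $K_{m,m}$ by a long ``distribution path'' carrying the colour of $x$ to $m$ ports, spaced far apart, where list constraints on the internal vertices force all ports of $x$ to receive the same colour; replace each edge $xy$ by a long, large-girth, bounded-degree \emph{edge-gadget} joining a free port of $x$ to a free port of $y$ and enforcing that their colours differ; finally subdivide every path and gadget enough so that the girth is at least~$g$ and the only vertices of degree~$3$ --- the ports and the bounded number of internal branch vertices of each edge-gadget --- are pairwise at distance at least~$g$. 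Equivalence with the original instance is then a direct check. The main obstacle is the edge-gadget: it must \emph{simultaneously} have girth at least~$g$, maximum degree~$3$, realize exactly the inequality constraint between its two terminals, and contain no two degree-$3$ vertices close together. For this I would start from the high-girth edge-simulating gadgets used in the proof of Theorem~\ref{t-kl07} (due to Emden-Weinert, Hougardy and Kreuter, and to Kami\'nski and Lozin for the case of $3$-{\sc Colouring}), spread their branch vertices out by further even subdivisions equipped with suitable two-element lists that preserve the constraint, and absorb any residual slack into the lists at the ports; verifying that this retains the ``$\ne$'' behaviour exactly is the delicate, but ultimately routine, part of the argument.
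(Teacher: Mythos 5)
Your overall architecture matches the one used for this result in the literature (the survey itself does not prove Theorem~\ref{t-all}; it cites Golovach and Paulusma, but it does contain the ingredients for the polynomial side). Your three-type case analysis for the hardness side is correct, and the target class --- girth at least $g$, maximum degree $3$, degree-$3$ vertices pairwise at distance at least $g$ --- is the right one. For the polynomial side, the Erd\H{o}s--P\'osa-style detour you sketch is both unsubstantiated and unnecessary: for a forest whose components are paths and subdivided claws, containment as a subgraph coincides with containment as a minor, so Theorem~\ref{t-forest} gives pathwidth at most $|V(H_1)|-2$ directly, and Theorem~\ref{t-js97} already solves {\sc List Colouring} in time $O(nk^{t+1})$ on graphs of treewidth at most $t$; that part of your argument is fine in substance.

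The genuine gap is in the hardness reduction. First, Theorem~\ref{t-cb} asserts \NP-completeness of {\sc $3$-List Colouring} (lists of size at most $3$ over an unbounded palette), not of {\sc List $3$-Colouring}; the latter is trivially polynomial on complete bipartite graphs, since the two sides must use disjoint subsets of $\{1,2,3\}$ and there are only constantly many such splits to try. So your source problem is either tractable or, if you really mean {\sc $3$-List Colouring}, your reduction proves hardness of the wrong problem. Second, and more fundamentally, the ``distribution path'' cannot do what you ask of it. For a path $u=v_0,v_1,\ldots,v_k=v$ whose internal vertices have degree $2$ and lists of size at least $2$, the set of colours realizable at $v_i$ given $c(u)=c$ is always either the whole list $L_i$ or $L_i$ minus a single element; consequently the path forbids, for each colour $c$ of $u$, at most one colour of $v$. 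No such path can therefore force $c(u)=c(v)$ when the colour of $u$ ranges over three values, so you cannot faithfully copy the colour of a vertex with a three-element list to $m$ far-apart ports while keeping all intermediate degrees at $2$. Equality propagation works only for vertices carrying two-element lists, and in a complete bipartite instance both sides contain vertices of large degree with three-element lists. The standard repair --- and what the cited proof does --- is to reduce from a Boolean problem such as NAE $3$-SAT so that only two-valued information travels along paths, with constant-size clause gadgets whose few degree-$3$ vertices are then spread out; the gadget engineering you defer to ``the routine part'' is precisely where the content of the hardness proof lies.
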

Thus for strongly $H$-free graphs, we have the following:

\begin{open}
Complete the classification of the complexity of the problems {\sc Colouring} and {\sc Precolouring Extension} for strongly $H$-free graphs.
\end{open}

We also note that the 
classifications of the complexity of the problems {\sc $k$-Colouring} and {\sc $k$-Precolouring Extension} restricted to strongly
$H$-free graphs 
have yet to be finished. 
In particular, it would be interesting to find out whether there exists a graph $H$ such that for 
strongly 
$H$-free graphs 
3-{\sc Colouring} is polynomial-time solvable but {\sc Colouring} is \NP-complete.

We now consider graphs that are {\it $H$-minor-free}, that is, they do not contain some graph $H$ as a minor.
Robertson and Seymour showed that every class of $H$-minor-free graphs can be recognized in cubic time~\cite{RS95}.
We present some results that will allow us to determine the complexity of colouring problems on $H$-minor-free graphs.
The first is also by Robertson and Seymour~\cite{RS86}.

\begin{theorem}
\label{t-rs}
Let $H$ be any planar graph.
Then the class of $H$-minor free graphs has bounded treewidth.
\end{theorem}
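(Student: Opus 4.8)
The plan is to deduce the statement from two ingredients: the Excluded Grid Theorem of Robertson and Seymour, and the elementary fact that every planar graph is a minor of a sufficiently large square grid. Write $\Gamma_r$ for the $r\times r$ grid, that is, the graph with vertex set $\{1,\dots,r\}^2$ in which $(i,j)$ and $(i',j')$ are adjacent precisely when $|i-i'|+|j-j'|=1$. The Excluded Grid Theorem asserts that there is a function $f\colon \mathbb{N}\to\mathbb{N}$ such that every graph of treewidth at least $f(r)$ contains $\Gamma_r$ as a minor; equivalently, if a graph $G$ has no $\Gamma_r$-minor, then its treewidth is less than $f(r)$. In a survey one simply cites this; note that Theorem~\ref{t-rs} is in fact essentially equivalent to it.

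First I would establish that every planar graph $H$ on $h$ vertices is a minor of $\Gamma_{2h}$ (any fixed polynomial bound in $h$ would do equally well). The cleanest route is to use an orthogonal (rectilinear) grid drawing of $H$: a planar graph on $h$ vertices can be drawn in the plane with its vertices placed at grid points of a grid of side $O(h)$ and its edges realised as pairwise internally disjoint paths running along grid lines and avoiding the vertex points other than their own endpoints. Contracting each such edge-path to a single edge, contracting each small region around a vertex point to a single vertex, and deleting the unused vertices and edges of the grid then exhibits $H$ as a minor of $\Gamma_{ch}$ for an absolute constant $c$; after rescaling we may as well take the side length to be $2h$ (any concrete polynomial bound in $h$ is harmless, since only its dependence on $H$ matters below).

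Then I would combine the two facts. Suppose $G$ is $H$-minor-free. If $G$ contained $\Gamma_{2h}$ as a minor, then, since $\Gamma_{2h}$ contains $H$ as a minor and the minor relation is transitive, $G$ would contain $H$ as a minor, a contradiction. Hence $G$ has no $\Gamma_{2h}$-minor, and therefore the treewidth of $G$ is less than $f(2h)$ by the Excluded Grid Theorem. The bound $f(2h)$ depends only on $H$ (indeed only on $|V(H)|$), so the class of $H$-minor-free graphs has bounded treewidth, as claimed.

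The main obstacle is the Excluded Grid Theorem itself, which is the substantial content of Robertson and Seymour's work and is what makes the whole argument go through; here it is invoked as a black box. The only genuinely elementary work above is the planar-graph-to-grid-minor reduction, and there the one subtlety is making the orthogonal drawing precise — vertices as grid points, edges as pairwise internally disjoint monotone grid paths disjoint from all other vertex points — so that the sequence of contractions and deletions producing $H$ is legitimate.
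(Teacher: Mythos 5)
The paper gives no proof of this theorem; it simply cites Robertson and Seymour~\cite{RS86}, which is precisely the Excluded Grid Theorem you invoke as your black box. Your derivation --- every planar $H$ on $h$ vertices is a minor of a grid of side $O(h)$, so an $H$-minor-free graph excludes that grid as a minor and hence has treewidth below $f(2h)$, a bound depending only on $H$ --- is the standard and correct route and coincides with the content of the cited source, so there is nothing to fault.
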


The second result was proved by Jansen and Scheffler~\cite{JS97}.

\begin{theorem}\label{t-js97}
Let ${\cal G}$ be a graph class of treewidth at most $t$. 
Then {\sc List Colouring} can be solved in time $O(nk^{t+1})$ on 
a graph of ${\cal G}$ with $n$ vertices and  a $k$-list assignment.
\end{theorem}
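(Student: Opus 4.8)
The plan is the standard one: to solve {\sc List Colouring} on a graph $G\in{\cal G}$ by bottom-up dynamic programming over a tree decomposition of $G$. Since $G$ has treewidth at most the constant $t$, one can compute in linear time a tree decomposition of width at most $t$, and by the usual transformations it may be assumed to be a \emph{nice} rooted tree decomposition $T$ with $O(n)$ nodes in which the root bag is empty and every node is a leaf, an \emph{introduce} node (its bag is that of its unique child together with one extra vertex), a \emph{forget} node (its bag is that of its unique child minus one vertex), or a \emph{join} node (it has two children, both with the same bag as itself).

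For a node $X$ of $T$, write $G_X$ for the subgraph of $G$ induced by the union of all bags in the subtree of $T$ rooted at $X$. At $X$ I would store a table $D_X$ indexed by those mappings $c\colon X\to\{1,2,\dots\}$ with $c(v)\in L(v)$ for every $v\in X$ and $c(u)\neq c(v)$ for every edge $uv$ of $G[X]$; the value $D_X(c)$ is \textsc{true} precisely when $c$ extends to a colouring of $G_X$ that respects $L$. Because $|X|\leq t+1$ and $|L(v)|\leq k$ for every $v$, the table $D_X$ has at most $k^{t+1}$ entries, which is the source of the factor $k^{t+1}$ in the running time.

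The tables are filled from the leaves towards the root according to the node types. For a leaf (empty bag or a single vertex) every index $c$ is feasible, so $D_X(c)=\textsc{true}$. For an introduce node adding $v$ to the child bag $X'$, an index $c$ of $X=X'\cup\{v\}$ is feasible exactly when $D_{X'}$ records the restriction $c|_{X'}$ as feasible and $c$ is proper on the (at most $t$) edges of $G[X]$ incident with $v$. For a forget node removing $v$, an index $c$ of $X=X'\setminus\{v\}$ is feasible exactly when some colour $a\in L(v)$ makes the extension of $c$ by $c(v)=a$ a feasible index of $D_{X'}$. For a join node with children $X_1,X_2$ (all three bags equal), the index $c$ is feasible exactly when $D_{X_1}(c)$ and $D_{X_2}(c)$ are both \textsc{true}. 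Each rule runs, up to a factor depending only on the constant $t$, in time linear in the sizes $O(k^{t+1})$ of the tables involved, so the work over all $O(n)$ nodes is $O(nk^{t+1})$. Finally, $G$ admits a colouring respecting $L$ if and only if the unique entry of the root table — the value of the empty map, since the root bag is empty and $G_{\mathrm{root}}=G$ — equals \textsc{true}.

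Correctness follows by induction on $T$, using the edge-coverage and connectivity axioms of a tree decomposition. The connectivity axiom forces any vertex lying in bags of both children of a join node to lie in their common bag $X$, so the private vertex sets $V(G_{X_1})\setminus X$ and $V(G_{X_2})\setminus X$ are disjoint and (again by edge coverage) span no edge of $G$; hence two feasible colourings of $G_{X_1}$ and $G_{X_2}$ that agree on $X$ glue into a proper $L$-colouring of $G_X$. The edge-coverage axiom guarantees that for every edge of $G$ there is an introduce node at which properness along that edge is enforced, so the predicate computed at the root is exactly ``$c$ extends to a proper $L$-colouring of $G$''. I do not expect a genuine obstacle: the argument is routine. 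The only points needing a little care are to implement the introduce and forget transitions so that the accounting stays within $O(nk^{t+1})$ — for instance, at a forget node one iterates over the child's entries and propagates each upward, rather than looping over all $k$ colour choices for every parent index — and, if the decomposition is not taken as part of the input, to invoke a linear-time procedure producing one of width $t$ for the fixed $t$.
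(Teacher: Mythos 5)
Your proof is correct and follows the same standard route as the cited source: the paper itself gives no proof of this theorem but attributes it to Jansen and Scheffler~\cite{JS97}, whose argument is precisely this dynamic programme over a (nice) tree decomposition with tables of size at most $k^{t+1}$ per bag. The transitions, the gluing argument at join nodes via the connectivity axiom, and the $O(nk^{t+1})$ accounting (with constants depending only on the fixed $t$) are all as expected.
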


The third and final result we need is from Garey, Johnson, and Stockmeyer~\cite{GJS74}.

\begin{theorem}
\label{t-planar}
 {\sc $3$-Colouring}  is \NP-complete for planar graphs.
\end{theorem}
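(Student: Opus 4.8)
The plan is to prove the two halves of \NP-completeness separately. Membership in \NP\ is immediate: a proper $3$-colouring is a certificate of size $O(|V(G)|)$ that can be verified in polynomial time by checking every edge. For \NP-hardness I would give a polynomial-time reduction from {\sc 3-Colouring} on \emph{general} graphs, which is \NP-complete by Theorem~\ref{t-general}(i) (Lov\'asz). The reduction is the classical ``uncrossing'' argument: given an arbitrary graph $G$, draw it in the plane with edges as curves and then remove each crossing by splicing in a fixed planar gadget, in a way that preserves $3$-colourability.

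First I would fix a drawing of $G$ in the plane in which each edge is a polygonal curve, no vertex lies on another edge, no three edges share a point, and any two edges cross transversally and only finitely often; such a drawing with $O(|E(G)|^2)$ crossings is easy to produce. The key ingredient is a \emph{crossing gadget}: a planar graph $X$ with four distinguished vertices $p,q,r,s$ lying in this cyclic order on its outer face such that (a) in every proper $3$-colouring of $X$ one has $c(p)=c(r)$ and $c(q)=c(s)$, and (b) conversely, every assignment of colours to $\{p,q,r,s\}$ with $c(p)=c(r)$ and $c(q)=c(s)$ (the two common values being arbitrary, possibly equal) extends to a proper $3$-colouring of $X$. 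I would also use a trivial planar \emph{equality gadget} forcing two terminals to receive the same colour --- for instance $K_4$ minus an edge, with the two non-adjacent vertices as terminals, for which the analogues of (a) and (b) are obvious.

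The reduction then replaces every edge $uv$ of $G$ by a ``wire'' running along the curve that drew $uv$. Each wire is a chain containing exactly one genuine edge (to impose $c(u)\neq c(v)$) together with equality gadgets, and at every point where this wire crossed another wire in the drawing we insert a copy of $X$: the opposite terminals $p,r$ splice the two pieces of one wire and $q,s$ splice the two pieces of the other. By property (a) each copy of $X$ merely transmits the colour unchanged along each of the two wires without letting them interact, so a proper $3$-colouring of the resulting graph $G'$ restricts to a proper $3$-colouring of $G$; by property (b) and the equality gadgets, conversely any proper $3$-colouring of $G$ extends to one of $G'$. The graph $G'$ is planar by construction, and $|V(G')|$ is polynomial in $|V(G)|$ and $G'$ is computable in polynomial time, so {\sc 3-Colouring} is \NP-hard, hence \NP-complete, for planar graphs.

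The main obstacle is the construction and verification of the crossing gadget $X$: one must exhibit an explicit small planar graph and check, by a finite but somewhat delicate case analysis over the three colours, that it forces $c(p)=c(r)$ and $c(q)=c(s)$ in \emph{every} proper $3$-colouring while still admitting a proper $3$-colouring for \emph{every} equality-consistent boundary assignment (the ``every consistent assignment extends'' direction is the subtle one). Everything else --- the planar drawing of $G$, the splicing of wires, the planarity and polynomial-size bookkeeping, and the two directions of correctness --- is routine once a suitable $X$ is in hand.
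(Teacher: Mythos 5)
Your proposal is correct and is essentially the proof the paper relies on: Theorem~\ref{t-planar} is stated without proof and attributed to Garey, Johnson and Stockmeyer~\cite{GJS74}, whose argument is exactly this reduction from general {\sc 3-Colouring} via a planar crossing gadget with opposite terminals forced equal, spliced into a drawing of $G$ along wires built from equality gadgets with one genuine edge per wire. The only piece you leave unconstructed, the gadget $X$ itself, is the standard $13$-vertex planar crossover of~\cite{GJS74}, which satisfies both of your properties (a) and (b), so the plan goes through as written.
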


In the next theorem, we present a dichotomy  for $H$-minor-free graphs. The first statement 
follows from Theorems~\ref{t-rs} and~\ref{t-js97}, and the second  from Theorem~\ref{t-planar} (after observing that the class of planar graphs is closed under taking minors).

\begin{theorem}\label{t-minor}
Let $H$ be a fixed graph. Then {\sc List Colouring} is polynomial-time solvable for $H$-minor-free graphs if $H$ is planar. 
Even {\sc $3$-Colouring} is \NP-complete for $H$-minor-free graphs if $H$ is non-planar.
\end{theorem}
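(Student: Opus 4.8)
The plan is to prove the two halves of the dichotomy separately, each reducing quickly to results already established above; no new combinatorics is needed.

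For the tractable direction, suppose $H$ is planar. First I would invoke Theorem~\ref{t-rs}: since $H$ is planar, there is a constant $t=t(H)$, depending only on $H$, such that every $H$-minor-free graph has treewidth at most~$t$. Now take an instance $(G,L)$ of {\sc List Colouring} with $G$ being $H$-minor-free. Before applying the treewidth machinery, I would normalise the list assignment: only colours that occur in some $L(v)$ matter, and there are at most $\sum_{v\in V(G)}|L(v)|$ of these, a quantity bounded by the size of the input; after relabelling we may assume $L$ is a $k$-list assignment with $k$ polynomial in $|V(G)|$. Then Theorem~\ref{t-js97} applied to $(G,L)$ solves {\sc List Colouring} in time $O(nk^{t+1})$, which is polynomial in the input size because $t$ is a fixed constant and $k$ is polynomially bounded. (Producing a tree decomposition of width at most~$t$ is not a separate difficulty: this can be done in linear time for fixed~$t$, or is already subsumed in the algorithm of Theorem~\ref{t-js97}.)

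For the hardness direction, suppose $H$ is non-planar. The key observation---as noted in the text preceding the theorem---is that the class of planar graphs is closed under taking minors, so no planar graph can have the non-planar graph $H$ as a minor; equivalently, every planar graph is $H$-minor-free. Hence the class of $H$-minor-free graphs contains all planar graphs as a subclass. By Theorem~\ref{t-planar}, {\sc $3$-Colouring} is \NP-complete for planar graphs, and membership in \NP\ is immediate, so {\sc $3$-Colouring} is \NP-complete for $H$-minor-free graphs. Since {\sc $3$-Colouring} is the special case of {\sc List Colouring} in which $L(v)=\{1,2,3\}$ for every vertex~$v$, this a fortiori yields \NP-completeness of {\sc List Colouring} for $H$-minor-free graphs when $H$ is non-planar, so the two halves together genuinely form a dichotomy.

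I do not expect any genuinely hard step: the argument is essentially bookkeeping on top of Theorems~\ref{t-rs}, \ref{t-js97} and~\ref{t-planar}. The only point requiring a little care is the normalisation of an arbitrary list assignment to one over a palette of polynomial size, which is what keeps the $k^{t+1}$ factor of Theorem~\ref{t-js97} from blowing up; everything else is a direct citation.
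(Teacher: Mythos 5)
Your proof is correct and follows essentially the same route as the paper, which also derives the tractable case from Theorems~\ref{t-rs} and~\ref{t-js97} and the hardness case from Theorem~\ref{t-planar} together with the closure of planar graphs under taking minors. Your extra remark on normalising the palette so that $k$ is polynomially bounded is a sensible piece of bookkeeping that the paper leaves implicit.
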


Let $H$ be a graph. Then a graph is \emph{$H$-topological-minor-free} if it does not contain $H$ as a topological minor.
Grohe, Kawarabayashi, Marx and Wollan showed that every class of $H$-topological-minor-free graphs can be recognized in cubic time~\cite{GKMW11}.

By Theorem~\ref{t-planar}, and the fact that the class of planar graphs is also closed under taking topological minors, 
we see that {\sc $3$-Colouring} is \NP-complete for $H$-topological-minor-free
 graphs whenever $H$ is a non-planar graph. For every graph $H$, the class of $H$-topological-minor-free graphs is a subclass
 of the class of strongly $H$-free graphs. Hence the analogue of Theorem~\ref{t-gpr12}:(i) for $H$-topological-minor-free graphs is true.
 However, assuming \cP $\neq$ \NP, we cannot have a dichotomy equivalent to that of Theorem~\ref{t-minor}; that is, 
 the complexity of {\sc Colouring} for $H$-minor-free graphs and
 $H$-topological-minor-free graphs may be different.
 By Theorem~\ref{t-minor}, {\sc Colouring} is polynomial-time solvable for $K_{1,5}$-minor-free graphs. However, every
 graph of maximum degree at most~$4$ does not contain $K_{1,5}$ as a topological minor, and even {\sc 3-Colouring} is \NP-complete
 for graphs of maximum degree at most~$4$ according to Garey, Johnson, and Stockmeyer~\cite{GJS74}. 
 Similarly, the complexity of {\sc Colouring} for strongly $H$-free graphs and $H$-topological-minor-free graphs may be different as   Theorem~\ref{t-gpr12}~(ii):1 and the following example show.
 
 \begin{theorem}\label{t-topo}
 For all $r\geq 3$, {\sc Colouring}  is polynomial-time solvable on $C_r$-topological-minor-free graphs.
 \end{theorem}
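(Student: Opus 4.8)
The plan is to identify the class of $C_r$-topological-minor-free graphs with the class of $C_r$-minor-free graphs and then apply the dichotomy of Theorem~\ref{t-minor}. The point is that when the excluded pattern is a cycle the minor and topological-minor relations coincide.

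The first step is the structural equivalence. For $r\geq 3$, a subdivision of $C_r$ is exactly a cycle on at least $r$ vertices, so by definition a graph $G$ contains $C_r$ as a topological minor if and only if $G$ contains a cycle of length at least $r$. On the other hand, if $G$ contains $C_r$ as a minor, then $G$ has pairwise disjoint connected branch sets $B_1,\ldots,B_r$ together with, for each $i$, an edge joining $B_i$ to $B_{i+1}$ (indices taken modulo $r$); stringing these $r$ edges together with suitable paths inside the $B_i$ produces a cycle that meets every branch set, hence a cycle of length at least $r$. Conversely a cycle of length at least $r$ plainly witnesses $C_r$ as a subgraph, and hence as both a minor and a topological minor. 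Therefore the classes of $C_r$-topological-minor-free graphs, of $C_r$-minor-free graphs, and of graphs of circumference at most $r-1$ all coincide.

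The second step is immediate: since $C_r$ is planar, Theorem~\ref{t-minor} (equivalently, Theorems~\ref{t-rs} and~\ref{t-js97}) shows that {\sc List Colouring}, and therefore {\sc Colouring}, is polynomial-time solvable on $C_r$-minor-free graphs; by the equivalence above this is precisely the statement to be proved. In fact one obtains the stronger conclusion that even {\sc List Colouring} is polynomial-time solvable, with an explicit running time $O(nk^{t+1})$ where $t=t(r)$ bounds the treewidth. There is no real obstacle here: the only thing requiring care is the first step, and the result contrasts nicely with Theorem~\ref{t-gpr12}~(ii):1 --- forbidding $C_r$ merely as a subgraph keeps $3$-{\sc Colouring} \NP-complete, while forbidding it as a topological minor, i.e.\ forbidding all long cycles, makes the problem tractable.
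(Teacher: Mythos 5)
Your proof is correct, but it takes a genuinely different route from the paper's. The paper stays inside the ``strongly $H$-free'' framework of its section: it reduces to the $2$-connected case and argues, via Menger's theorem, that a $2$-connected graph containing a path on $r$ vertices contains a cycle on at least $r$ vertices, concluding that every such $C_r$-topological-minor-free graph is strongly $P_r$-free, whereupon Theorem~\ref{t-gpr12}~(i):1 applies. You instead identify the class outright: for cycles the minor and topological-minor relations coincide, both being equivalent to the presence of a cycle of length at least $r$, so the class equals the class of $C_r$-minor-free graphs; since $C_r$ is planar, Theorem~\ref{t-rs} gives bounded treewidth and Theorem~\ref{t-js97} (equivalently Theorem~\ref{t-minor}) finishes. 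Your branch-set argument for the minor direction is sound; the only blemish is the phrase that a cycle of length at least $r$ ``witnesses $C_r$ as a subgraph'' --- it does not (e.g.\ $C_5$ has no $C_3$ subgraph), but it is a subdivision of $C_r$ and hence does witness $C_r$ as a topological minor and as a minor, which is all you use. Your route buys a stronger conclusion (even {\sc List Colouring} in polynomial time, with an explicit running time) and is arguably more robust: the paper's Menger step is delicate, since a $2$-connected graph containing $P_r$ as a subgraph need not contain a cycle on $r$ vertices ($K_{2,3}$ is $2$-connected and contains $P_5$ but has circumference $4$), so its intermediate claim that the graph is strongly $P_r$-free requires repair (one can still bound the longest path in terms of the circumference and apply Theorem~\ref{t-gpr12}~(i):1 with a larger path). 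What the paper's route buys is elementarity: it avoids invoking the excluded-planar-minor/treewidth machinery of Theorem~\ref{t-rs}.
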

 
 \begin{proof}
 Let  $r\geq 3$, and let $G$ be a $C_r$-topological-minor-free graph. 
 We may assume, without loss of generality, that $G$ is 2-connected.
 Suppose that $G$ contains a path $P$ on $r$ vertices.
 Because $G$ is 2-connected, there exists another path $P'$  between the end-vertices of $P$ that is internally vertex-disjoint from $P$ by Menger's Theorem.
 Then the subgraph of $G$ induced by $V(P)\cup V(P')$ contains a cycle on at least $r$ vertices. 
 Consequently, $G$ contains $C_r$ as a topological minor, which is not possible.
 Thus $G$ is strongly $P_r$-free. We apply Theorem~\ref{t-gpr12}~(i):1.\qed
 \end{proof}

 \begin{open}
Complete the classification of the complexity of {\sc Colouring}, {\sc Precolouring Extension} and {\sc List Colouring} for 
$H$-topological-minor-free graphs.
\end{open}

It remains to consider  {\sc Choosability}  restricted to the graph classes considered in this section.
Because strongly $H$-free graph classes are not closed under adding dominating vertices, we cannot just combine
Theorem~\ref{t-gpr12}~(ii) with Theorem~\ref{t-dominating} but instead need some additional results.   The first follows from a result of Bienstock, Robertson, Seymour and Thomas~\cite{BRST91}. 

\begin{theorem}
\label{t-forest}
Let $H$ be a forest with $\Delta(H)\leq 3$, in which each connected component has at most one vertex of degree~$3$.
Then every $H$-minor-free graph has pathwidth at most $|V(H)|-2$.
\end{theorem}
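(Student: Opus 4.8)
The plan is to derive Theorem~\ref{t-forest} directly from the ``Quickly Excluding a Forest'' theorem of Bienstock, Robertson, Seymour and Thomas~\cite{BRST91}, which asserts that for \emph{every} forest $F$ (connected or not), every graph with no minor isomorphic to $F$ has pathwidth at most $|V(F)|-2$. Once this is quoted, the only thing left to verify is that the graph $H$ in our statement is indeed a forest, which holds by hypothesis; applying~\cite{BRST91} with $F=H$ then gives exactly the claimed bound $|V(H)|-2$ on the pathwidth of every $H$-minor-free graph.

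Concretely, the steps would be: (i) recall the definition of a path decomposition and of pathwidth from Section~\ref{s-known}, so that the conclusion is phrased in the same terms as~\cite{BRST91}; (ii) state the BRST theorem in the form above; (iii) note that the additional restrictions placed on $H$ in the statement --- namely $\Delta(H)\le 3$ and at most one vertex of degree $3$ in each connected component --- are not needed for the pathwidth bound itself, but are carried along only because this lemma is used to establish polynomial-time solvability of {\sc Colouring} on strongly $H$-free graphs in Theorem~\ref{t-gpr12}~(i):1, where that structural restriction is what the algorithm exploits; and (iv) specialize $F=H$ and conclude. No edge-contraction bookkeeping and no case analysis over the components of $H$ is required, since~\cite{BRST91} already treats disconnected forests.

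The only genuine ``obstacle'' is that the BRST bound is not elementary: a self-contained argument would have to reproduce their induction, which builds a path decomposition of small width greedily and shows that any obstruction to continuing forces an $F$-minor. The disconnected case is the subtle part there, because once one component of $F$ has been ruled out one must still argue in the ``leftover'' portion of the graph for the remaining components; this is precisely the content of~\cite{BRST91}. Since we are entitled to quote that result, this difficulty is absorbed into the citation and the derivation of Theorem~\ref{t-forest} becomes immediate.
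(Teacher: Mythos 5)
Your proposal is correct and takes essentially the same approach as the paper: Theorem~\ref{t-forest} is presented there as an immediate consequence of the result of Bienstock, Robertson, Seymour and Thomas~\cite{BRST91} that every graph with no minor isomorphic to a forest $F$ has pathwidth at most $|V(F)|-2$, with no further argument. Your remark that the degree restrictions on $H$ play no role in the pathwidth bound and are carried along only for the later step (where being a subgraph must coincide with being a minor) matches exactly how the paper uses the statement.
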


The next result is from Fellows et al.~\cite{FFLRSST11}.

\begin{theorem}
\label{t-chooso}
{\sc Choosability} can be solved in linear time for any graph class of bounded treewidth.
\end{theorem}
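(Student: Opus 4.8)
The plan is to combine two reductions with an application of Courcelle's theorem (linear-time model checking of monadic second-order logic on graphs of bounded treewidth). Fix a constant $t$ bounding the treewidth of the graphs in ${\cal G}$; a width-$\le t$ tree decomposition can be computed in linear time, and this is precisely the regime in which the logical machinery runs in linear time.

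The first reduction disposes of large $\ell$. A graph of treewidth at most $t$ is $t$-degenerate, hence has colouring number at most $t+1$, hence choice number at most $t+1$: given any lists of size $t+1$, a degeneracy ordering lets one colour greedily. So if $\ell\ge t+1$ the answer is ``yes'', and it remains to treat the finitely many values $\ell\le t$; from now on $\ell$ is a constant.

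The second reduction bounds the colour palette: there is a constant $p=p(t,\ell)$ such that $G$ is $\ell$-choosable if and only if $G$ is $\ell$-choosable with respect to every list assignment whose lists lie in $\{1,\dots,p\}$. The intuition is that, along a rooted width-$\le t$ tree decomposition, at any bag only the at most $(t+1)\ell$ colours occurring in the lists of the (at most $t+1$) bag vertices are ``active'': a forgotten vertex is non-adjacent to every later vertex, so the colour it ultimately receives cannot affect the properness of any colouring of the rest of $G$, and hence a colour may be recycled once the last bag vertex whose list contains it has been forgotten; one shows that any bad list assignment can be pushed, colour by colour along the decomposition, to one over $\{1,\dots,(t+1)\ell\}$. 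With $\ell$ and $p$ both constant, ``$G$ is $\ell$-choosable'' becomes a fixed MSO sentence: universally quantify over vertex sets $S_1,\dots,S_p$ (with $S_i$ recording which vertices have colour $i$ in their list), impose the first-order condition that every vertex lies in exactly $\ell$ of the $S_i$, and assert the existence of vertex sets $C_1,\dots,C_p$ partitioning $V(G)$ with $C_i\subseteq S_i$ and each $C_i$ independent. Courcelle's theorem then gives a linear-time algorithm on ${\cal G}$.

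The main obstacle is making the palette bound rigorous: a priori the universal quantifier ranges over list assignments with unboundedly many colours, and one must check the recycling transformation really preserves non-existence of a proper respecting colouring --- identifying two colours is not, in itself, a monotone operation. The clean way is to realise the argument as a dynamic-programming routine over the tree decomposition whose state at a bag records, up to colour renaming, the overlap pattern among the bag vertices' lists together with the set of colour patterns on the bag that extend to a proper respecting colouring of the already-processed part of $G$. This state is bounded because the bag size and $\ell$ are bounded, and soundness of the forget operation is exactly the observation about forgotten vertices above; running this DP (or, equivalently, noting that it only ever distinguishes the $\le(t+1)\ell$ locally active colours, which legitimises handing the bounded-palette sentence to Courcelle) completes the proof.
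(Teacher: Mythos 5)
The paper offers no proof of this theorem: it is quoted from Fellows et al.~\cite{FFLRSST11}, and your argument reconstructs essentially their route --- kill $\ell\geq t+1$ by degeneracy, bound the palette for each of the remaining constantly many values of $\ell$, and feed the resulting $\forall\exists$ sentence over a constant number of vertex-set variables to Courcelle's theorem. The argument is correct, and you have correctly isolated the one delicate step, the palette-reduction lemma; let me just confirm that it goes through (so the dynamic-programming fallback is not needed). A renaming of colours decomposes into \emph{identifications} and \emph{splits}. Identifying colours that never co-occur in a list always preserves the non-existence of a respecting proper colouring: from a proper colouring $f'$ respecting the identified lists, choose for each $v$ a preimage of $f'(v)$ lying in $L(v)$; a monochromatic edge in the lifted colouring would force one in $f'$. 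Splitting one colour into several preserves non-existence exactly when the vertex sets receiving distinct new names span no edge. Hence, given a bad $\ell$-list assignment $L$, first split each colour $c$ along the connected components of $G[\{v : c\in L(v)\}]$ (safe, and injective on every list), so that each resulting colour class induces a connected subgraph of $G$ and therefore a subtree of the tree decomposition; the intersection graph of these subtrees is chordal, and by the Helly property any clique of it consists of colours all appearing in lists of a single bag, so its clique number --- and hence its chromatic number --- is at most $(t+1)\ell$. A proper colouring of that intersection graph is precisely a list-size-preserving, badness-preserving renaming into $\{1,\dots,(t+1)\ell\}$, which is the bound $p=(t+1)\ell$ you claimed. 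With that lemma in place the MSO sentence you wrote is a faithful encoding of $\ell$-choosability, and the linear running time follows from Bodlaender's algorithm plus Courcelle's theorem as you say.
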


Theorems~\ref{t-forest} and~\ref{t-chooso} imply the first statement of the following theorem after observing that
a forest $H$ in which each connected component is either a path or a subdivided claw
is  a subgraph of a graph $G$ if and only if it is a minor of $G$.
The second statement follows from Theorems~\ref{t-gutner} and~\ref{t-gutnertarsi}.

\begin{theorem}\label{t-chst}
Let $H$ be a graph. Then the following two statements hold:
\begin{itemize}
\item [(i)]
{\sc Choosability} is linear-time solvable for strongly $H$-free graphs if $H$ is a forest with~$\Delta(H)\leq 3$, in which each connected component has at most one vertex of degree~$3$.
\item [(ii)] Even  $3$-{\sc Choosability} is $\Uppi_2^p$-hard for strongly $H$-free graphs if $H$ is non-planar or contains an odd cycle.
\end{itemize}
\end{theorem}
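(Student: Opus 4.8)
The plan is to handle the two statements independently, in each case reducing to results already recorded in the excerpt.

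For statement (i) I would use the observation (made in the paragraph preceding the theorem) that when every connected component of $H$ is a path or a subdivided claw --- equivalently, when $H$ is a forest with $\Delta(H)\leq 3$ having at most one vertex of degree $3$ per component --- the graph $H$ is a subgraph of a graph $G$ if and only if it is a minor of $G$. Only the implication ``$H$ a minor of $G$ $\Rightarrow$ $H$ a subgraph of $G$'' needs an argument: since $\Delta(H)\leq 3$, being a minor is equivalent to being a topological minor (a well-known fact; see~\cite{Di05}), so $G$ contains a subdivision $H'$ of $H$ as a subgraph, and $H'$ again has every component a path or a subdivided claw with the same centre (only with longer legs), so $H'$ contains $H$ as a subgraph because a path contains every shorter path that shares one of its end-vertices. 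Hence the class of strongly $H$-free graphs coincides with the class of $H$-minor-free graphs. By Theorem~\ref{t-forest} every such graph has pathwidth, and therefore treewidth, at most $|V(H)|-2$, so for this fixed $H$ we are dealing with a class of bounded treewidth; Theorem~\ref{t-chooso} then gives a linear-time algorithm for {\sc Choosability}, which proves statement (i).

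For statement (ii) I would exploit that the classes of bipartite graphs and of planar graphs are both closed under taking subgraphs. If $H$ contains an odd cycle, then no bipartite graph contains $H$ as a subgraph (a subgraph of a bipartite graph is bipartite, hence has no odd cycle), so every bipartite graph is strongly $H$-free; thus $3$-{\sc Choosability} restricted to strongly $H$-free graphs is at least as hard as $3$-{\sc Choosability} on bipartite graphs, which is $\Uppi_2^p$-complete by Theorem~\ref{t-gutnertarsi}. If instead $H$ is non-planar, then no planar graph contains $H$ as a subgraph, so every planar graph is strongly $H$-free, and $3$-{\sc Choosability} on strongly $H$-free graphs is at least as hard as $3$-{\sc Choosability} on planar graphs, which is $\Uppi_2^p$-complete by Theorem~\ref{t-gutner}. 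In either case we obtain the asserted $\Uppi_2^p$-hardness.

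The one place that requires care is the subgraph/minor equivalence underlying statement (i): it fails for general forests --- for instance a tree with two adjacent vertices of degree $3$ is a minor, but not a subgraph, of the graph obtained from it by subdividing the edge joining those two vertices --- and the restriction to components that are paths or subdivided claws is exactly what rescues it. Once that equivalence is granted, statement (i) is nothing more than a chaining of Theorems~\ref{t-forest} and~\ref{t-chooso}, and statement (ii) is an immediate consequence of the hardness results of Gutner and of Gutner and Tarsi together with the subgraph-closedness of the planar and bipartite graph classes; so I expect no real obstacle beyond getting that one observation stated cleanly.
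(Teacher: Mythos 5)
Your proof is correct and follows essentially the same route as the paper: for (i) it identifies strongly $H$-free graphs with $H$-minor-free graphs via the same subgraph/minor observation and then chains Theorems~\ref{t-forest} and~\ref{t-chooso}, and for (ii) it embeds the bipartite and planar hardness results of Theorems~\ref{t-gutnertarsi} and~\ref{t-gutner} into the strongly $H$-free classes. The only difference is that you spell out the justification of the subgraph/minor equivalence (via topological minors and the degree bound), which the paper leaves as an unelaborated observation.
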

We pose the following open problem.

\begin{open}
Complete the classification of the complexity of {\sc Choosability} for strongly $H$-free  graphs.
\end{open}

When we consider $H$-minor-free graphs we obtain a full dichotomy result by using Theorem~\ref{t-rs},  Theorem~\ref{t-chooso} and  Theorem~\ref{t-gutner} and recalling that the class of planar graphs is closed under taking minors.

\begin{theorem}\label{t-minorc}
Let $H$ be a fixed graph. Then {\sc Choosability} is linear-time solvable for $H$-minor-free graphs if $H$ is planar, 
whereas even {\sc $3$-Choosability} is $\Uppi_2^p$-hard for $H$-minor-free graphs if $H$ is non-planar.
\end{theorem}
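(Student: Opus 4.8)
The plan is to treat the two halves of the dichotomy separately, each time by stitching together results already available in the excerpt; no genuinely new combinatorics is needed.

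For the tractable side, suppose $H$ is planar. Then Theorem~\ref{t-rs} tells us that the class of $H$-minor-free graphs has bounded treewidth, say treewidth at most some constant $t=t(H)$. Theorem~\ref{t-chooso} states that {\sc Choosability} is solvable in linear time on any graph class of bounded treewidth. Composing the two immediately yields a linear-time algorithm for {\sc Choosability} on $H$-minor-free graphs. The only mild subtlety is that Theorem~\ref{t-rs} guarantees a bound on treewidth but not, a priori, an explicit tree decomposition; however, for a fixed $H$ the bound $t$ is a constant, and a width-$t$ tree decomposition of an $n$-vertex graph of treewidth at most $t$ can be computed in linear time. I would either cite this standard fact (Bodlaender's algorithm) explicitly, or simply observe that Theorem~\ref{t-chooso}, as stated, already absorbs the decomposition step, so the overall running time stays linear.

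For the hardness side, suppose $H$ is non-planar. The class of planar graphs is closed under taking minors, so no planar graph can contain the non-planar graph $H$ as a minor; equivalently, every planar graph is $H$-minor-free. Hence the class of $H$-minor-free graphs contains the class of planar graphs as a subclass. By Theorem~\ref{t-gutner}, {\sc $3$-Choosability} is $\Uppi_2^p$-complete for planar graphs; since hardness propagates to any superclass, {\sc $3$-Choosability} is $\Uppi_2^p$-hard for $H$-minor-free graphs, and a fortiori so is {\sc Choosability} (take $\ell=3$). This proves the second statement.

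I do not expect a real obstacle here: the theorem is essentially a packaging of Theorems~\ref{t-rs}, \ref{t-chooso} and~\ref{t-gutner} together with the elementary observation that planarity is minor-closed. If anything, the step requiring a moment's care is making the ``linear-time'' claim in the first half honest, i.e.\ checking that computing the tree decomposition is not the bottleneck, which is why I would phrase that point explicitly rather than leave it implicit.
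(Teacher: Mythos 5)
Your proposal is correct and follows exactly the paper's own route: the tractable case is Theorem~\ref{t-rs} composed with Theorem~\ref{t-chooso}, and the hardness case is Theorem~\ref{t-gutner} together with the observation that planar graphs are minor-closed, so every planar graph is $H$-minor-free when $H$ is non-planar. Your extra remark about obtaining the width-$t$ tree decomposition in linear time (Bodlaender's algorithm) is a sensible point of care that the paper leaves implicit inside Theorem~\ref{t-chooso}.
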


By Theorem~\ref{t-gutner} again, and the fact that the class of planar graphs is also closed under taking topological minors,
we have that {\sc $3$-Choosability} is $\Uppi_2^p$-hard for $H$-topological-minor-free graphs whenever $H$ is non-planar.
And as, for every graph $H$, the class of $H$-topological-minor-free graphs is a subclass
 of the class of strongly $H$-free graphs, the analogue of Theorem~\ref{t-chst}:(i) for $H$-topological-minor-free graphs holds.

\begin{open}
Complete the classification of the complexity of {\sc Choosability}  for $H$-topological-minor-free graphs.
\end{open}

From Theorems~\ref{t-chst} and~\ref{t-minorc}, we see that the complexity of  {\sc Choosability}  for strongly $H$-free graphs and
$H$-minor-free graphs may be 
different: for instance when $H$ is an odd cycle.
It would be interesting to determine whether there exists a graph $H$ for which the complexity of {\sc Choosability} is different for
strongly $H$-free graphs and $H$-topological-minor-free graphs, and 
whether there exists a graph $H^*$ for which the complexity of {\sc Choosability} is different for
$H^*$-minor-free graphs and $H^*$-topological-minor-free graphs.

\medskip
\noindent
{\it Acknowledgments.} We thank Konrad Dabrowski, Fran\c{c}ois Dross, Oliver Schaudt and two anonymous reviewers for helpful comments.

\section*{Appendix}
Three of the known results mentioned in our survey are not made explicit in the literature (as at the time the focus was more on $k$-{\sc colouring} and  $k$-{\sc Precolouring Extension} for $H$-free graphs than on {\sc List $k$-Colouring}).  For completeness, we give 
the proofs of these three results here.

\medskip
\noindent
The first two theorems translate statements from~\cite{BFGP13,BGPS12b} for {\sc $3$-Precolouring Extension} into statements for {\sc List 3-colouring}. 
As we show these results are implicit in~\cite{BFGP13,BGPS12b} or follow immediately from the proof methods 
used therein.

\begin{theorem}
Let $H$ be a graph. If {\sc List $3$-Colouring} is polynomial-time solvable for $H$-free graphs, then it is also polynomial-time solvable for $(P_1+H)$-free graphs.
\end{theorem}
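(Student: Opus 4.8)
The plan is to prove a dichotomy for the input graph $G$ (with $3$-list assignment $L$): either $G$ is already $H$-free, in which case we invoke the hypothesised polynomial-time algorithm; or $G$ contains an induced copy of $H$, in which case I will reduce the problem, in polynomial time and with only a constant number of branches, to instances of {\sc $2$-List Colouring}, which is polynomial-time solvable on general graphs by Theorem~\ref{t-general}~(i). Since $H$ is fixed, deciding which case we are in takes polynomial time: brute-force over all subsets of $V(G)$ of size $|V(H)|$ and test for an induced copy of $H$.

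So suppose $S\subseteq V(G)$ induces a copy of $H$. The structural point I would use is that, because $G$ is $(P_1+H)$-free, \emph{every} vertex of $V(G)\setminus S$ has a neighbour in $S$: a vertex $u\notin S$ with no neighbour in $S$ would make $G[\{u\}\cup S]$ an induced $P_1+H$, a contradiction. Now I would branch over all proper colourings $c_S\colon S\to\{1,2,3\}$ that respect $L$ restricted to $S$; there are at most $3^{|V(H)|}$ of them, a constant. For each such $c_S$, set $L'(u)=L(u)\setminus\{c_S(w): w\in N_G(u)\cap S\}$ for every $u\in V(G)\setminus S$; since $u$ has at least one neighbour in $S$, we get $|L'(u)|\le 2$. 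Then $G$ has a colouring respecting $L$ that extends $c_S$ if and only if $G-S$ has a colouring respecting $L'$, and the latter is an instance of {\sc $2$-List Colouring} and hence solvable in polynomial time. Answering ``yes'' exactly when some branch succeeds yields a polynomial-time algorithm.

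The only thing needing care is the correctness of the displayed equivalence. In one direction, any colouring of $G$ respecting $L$ restricts on $S$ to one of the enumerated $c_S$ (immediate) and on $V(G)\setminus S$ to a colouring respecting the corresponding $L'$ (also immediate: a neighbour of $u$ lying in $S$ and receiving a colour of $L(u)$ forbids that colour for $u$). In the other direction, any $c_S$ together with a colouring of $G-S$ respecting $L'$ glue to a proper colouring of $G$ respecting $L$, because no edge between $S$ and $V(G)\setminus S$ carries a constraint not already recorded in the sets $L'(u)$. I do not anticipate a genuine obstacle here: the argument is exactly the observation of Broersma et al.\ for {\sc $3$-Precolouring Extension}, with ``precolour the copy of $H$'' replaced by ``delete from each non-neighbour's list the colours used on its neighbours in $S$''. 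As usual, iterating the statement then gives polynomial-time solvability of {\sc List $3$-Colouring} on $(sP_1+H)$-free graphs for every $s\ge 0$.
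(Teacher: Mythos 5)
Your proposal is correct and follows essentially the same route as the paper's proof: find an induced copy of $H$, observe that $(P_1+H)$-freeness forces every remaining vertex to have a neighbour in that copy, branch over the constantly many proper list-respecting colourings of the copy, and reduce each branch to {\sc $2$-List Colouring} (the paper cites Theorem~1, intending the part stating that {\sc $2$-List Colouring} is polynomial-time solvable on general graphs). Your write-up is in fact more explicit than the paper's about the list-reduction step and the correctness of the gluing, but the argument is the same.
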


\begin{proof}
This result can be proven by using the same arguments as the ones that Broersma et al.~\cite{BGPS12b} used for proving that $3$-{\sc Precolouring Extension}  is polynomial-time solvable.
Let $G$ be an $(H+P_1)$-free graph with a 3-list assignment $L$. If $G$ is $H$-free, we are done.
Suppose $G$ contains an induced subgraph $H'$ that is isomorphic to $H$. Because $G$ is $(H+P_1)$-free, every vertex in $V(G)\backslash V(H')$ must be adjacent to a vertex in $H'$. We guess a colouring of $V(H')$ that respects the lists.
Afterwards we apply Theorem~\ref{t-general}~(ii). Since $H'$ has a fixed size, the number of
guesses is polynomially bounded.\qed 
\end{proof}

\begin{theorem}
For every integer $s\geq 1$, {\sc List $3$-Colouring} is polynomial-time solvable on $sP_3$-free graphs.
\end{theorem}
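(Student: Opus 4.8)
The plan is to prove the statement by induction on $s$, adapting the argument of Broersma et al.~\cite{BGPS12b} that {\sc $3$-Precolouring Extension} is polynomial-time solvable on $sP_3$-free graphs; the point will be that their argument uses only that each colour constraint has size at most~$3$, never that a ``precoloured'' vertex has a \emph{unique} admissible colour, so it applies verbatim to $3$-list assignments. Since {\sc List $3$-Colouring} distributes over connected components, and a clique component with at least four vertices forces a negative answer while a clique component with at most three vertices is settled by inspecting its $O(1)$ list-respecting colourings, it suffices to solve the problem on a connected non-clique graph $G$ with a $3$-list assignment $L$. For $s=1$ there is then nothing to do, as a connected $P_3$-free graph is a clique. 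For $s\geq 2$: if $G$ is $(s-1)P_3$-free we apply the induction hypothesis, so we may assume $G$ contains an induced $(s-1)P_3$, on a vertex set $D$ with $|D|=3(s-1)$.

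The reduction would proceed as follows. First, $G-N[D]$ is $P_3$-free, since an induced $P_3$ on vertices of $V(G)\setminus N[D]$ has no edge to $D$ and so, together with the induced $(s-1)P_3$ on $D$, would give an induced $sP_3$; hence $G-N[D]$ is a disjoint union of cliques, and if one of these cliques has at least four vertices we answer ``no'' (it would be a $K_4$ of $G$), so each may be assumed to have at most three vertices. Next, branch over the at most $3^{3(s-1)}=O(1)$ colourings of $D$ that respect $L$; in each branch, delete from the list of every vertex $v\in N(D)\setminus D$ the colours assigned to the neighbours of $v$ in $D$. As every such $v$ has a neighbour in $D$, it now has a list of size at most~$2$. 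Thus, in each branch, it remains to solve {\sc List $3$-Colouring} on $G-D$ with the updated lists, an instance in which the vertices of $N(D)\setminus D$ have lists of size $\leq 2$, the set $V(G)\setminus N[D]$ induces a disjoint union of cliques of size $\leq 3$, and, crucially, the graph $G-D$ is still $sP_3$-free.

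The rest of the argument, and in particular the resolution of this residual instance, would follow Broersma et al.~\cite{BGPS12b}, so the only real task is to check that their proof for {\sc $3$-Precolouring Extension} adapts to $3$-list assignments. The hard part of their proof — and the reason one cannot merely iterate ``extract an induced $(s-1)P_3$ and branch'' — is exactly this residual instance: branching on a bounded vertex set can be afforded only a bounded number of times, so after one round the residual must be dispatched \emph{without} further unbounded branching. They do this by branching on \emph{bounded} substructures, propagating forced colours into neighbours' lists, and structural case distinctions that exploit how $sP_3$-freeness restricts the way the disjoint-clique set $V(G)\setminus N[D]$ can interact with the small-list set $N(D)\setminus D$; at the bottom of the recursion all lists have size $\leq 2$, and one concludes by solving a $2$-{\sc List Colouring} instance in polynomial time (Theorem~\ref{t-general}~(i)). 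None of these steps ever uses that a constrained vertex has a single admissible colour rather than a list of size $\leq 3$, so the adaptation to {\sc List $3$-Colouring} is routine; composing the $O(1)$-fold branching over $D$ with the polynomial-time resolution of each residual instance and the induction hypothesis for $(s-1)P_3$-free graphs then gives the claimed algorithm.
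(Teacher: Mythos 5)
Your proposal is correct and takes essentially the same route as the paper: both reduce the statement to the observation that the Broersma et al.\ algorithm for $3$-{\sc Precolouring Extension} on $sP_3$-free graphs only ever exploits that certain vertices have a restricted (size at most $2$) set of admissible colours, never that this set is a singleton, so it transfers verbatim to $3$-list assignments. The paper's own proof is in fact terser than yours --- it simply declares $W$ to be the set of vertices whose list has at most two admissible colours and copies the proof of Theorem~6 of~\cite{BGPS12b} --- whereas you additionally sketch the branching structure of that underlying algorithm, but the essential content (deferring the hard residual case to the prior work and certifying that nothing there breaks under lists) is identical.
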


\begin{proof}
Theorem 6 of Broersma et al.~\cite{BGPS12b} states that $3$-{\sc Precolouring Extension} can be solved in polynomial time 
on $sP_3$-graphs for any fixed $s\geq 1$.  In the proof of this theorem a polynomial-time algorithm is presented that takes as input a graph $G=(V,E)$ 
and a set of precoloured vertices $W\subseteq V$. 
We can copy the proof when the input is a graph and a $3$-list assignment after defining 
$W$ to be the set of all vertices with a list of at most 2 admissible colours. 
\qed
\end{proof}
The last result we prove in this appendix is Theorem~\ref{t-dominating}, which has been shown implicitly by Golovach et al.~\cite{GHHP12}. The proof below is only a slight adjustment of their original proof.

\medskip
\noindent
{\bf Theorem~\ref{t-dominating}.}
{\it Let ${\cal G}$ be a graph class that is closed under adding dominating vertices.
If {\sc Colouring} is \NP-hard for ${\cal G}$, then {\sc Choosability} is \NP-hard for ${\cal G}$.}

\begin{proof}
Let ${\cal G}$ be a graph class that is closed under adding dominating vertices, for which {\sc Colouring} is
 \NP-complete. Consider an instance $(G,k)$ of {\sc Colouring} where $G$ belongs to ${\cal G}$ and  $k\geq 1$ is an integer.
 We may assume without loss of generality that $\deg_G(u)\geq k$ for all $u\in V(G)$, as otherwise we add dominating vertices to $G$ and increase $k$ accordingly, in order to  obtain a pair $(G',k')$ such that $G'$ is $k'$-colourable if and only if $G$ is $k$-colourable, and by the definition of ${\cal G}$, $G'$ would belong to ${\cal G}$ as well.

We now define $k^*=k+\sum_{u\in V(G)}(\deg_G(u)-k+1)$ and
construct a graph $G^*$ 
from $G$ by adding a set of $k^*-k$ vertices $T=\{t_1,\ldots,t_{k^*-k}\}$ that are adjacent to each other and to every vertex of $G$.
By the definition of ${\cal G}$, we derive that $G^*$ belongs to ${\cal G}$.
We prove that $G$ is $k$-colourable if and only if  $G^*$ is $k^*$-choosable.

First suppose that $G^*$ is $k^*$-choosable. Then $G^*$ has a colouring $c$ that respects the list assignment ${\cal L}^*=\{L^*(u)\mid u\in V(G^*)\}$ with $L^*(u)=\{1,\ldots,k^*\}$ for all $u\in V(G^*)$. Because the $k^*-k$ vertices in $T$ are mutually adjacent, 
they are all coloured differently by $c$.
Moreover, because every vertex of $T$ is adjacent to every vertex of $G$, no vertex in $G$ has the same colour as a vertex in $T$.
Hence, by taking the restriction of $c$ to $V(G)$,  we find that $G$ is $k$-colourable. 

Now suppose that $G$ is $k$-colourable. We prove that $G^*$ is $k^*$-choosable.
In order to do this, let ${\cal L}^*=\{L^*(u)\mid u\in V(G^*)\}$ be an arbitrary $k^*$-list assignment of $G^*$.
We will construct a colouring of $G^*$ that respects ${\cal L}^*$.
We start by colouring the vertices of $T$ and, if possible, reducing $G^*$ 
by applying the following procedure:

\begin{itemize}
\item [1.] As long as there is an uncoloured vertex 
$t_j\in T$ such that  $L^*(t_j)$ contains an unused colour $x$ and there is a vertex $u\in V(G)$ with $x\notin L^*(u)$, do as follows: give $t_j$ colour $x$ and delete all vertices $u\in V(G)$ for which at least $\deg_G(u)-k+1$ used colours are not in $L^*(u)$. 
\item [2.] Afterwards, 
consider the vertices of the remaining set $T'\subseteq T$ one by one and give them any unused colour from their list.
\end{itemize}

It is possible to colour all vertices of $T$ by this procedure, because $|L^*(t_j)|=k^*$ for $j=1,\ldots,k^*-k$ and $|T|=k^*-k\leq k^*$.  
We must show that the procedure is correct.
Let $u\in V(G)$. After colouring all vertices of $T$ we can partition $T$ into two sets $A_u$ and $B_u$, where $A_u$ consists of those vertices of $T$ that received a colour not in $L^*(u)$ and $B_u=T\setminus A_u$ consists of those vertices of $T$ that received a colour from $L^*(u)$. Then the number of available colours for $u$ is $k^*-|B_u|=k^*-(|T|-|A_u|)=k^*-(k^*-k-|A_u|)=k+|A_u|$, whereas $u$ still has $\deg_G(u)$ uncoloured neighbours in $G^*$. If $k+|A_u|\geq \deg_G(u)+1$, or equivalently, if $|A_u|\geq \deg_G(u)-k+1$, then we may delete $u$; after colouring all vertices of $V(G^*)\setminus \{u\}$, we are guaranteed that there exists at least one colour in $L^*(u)$ that is not used on the neighbourhood of $u$ in $G^*$, and we can give $u$ this colour. 

After colouring the vertices in $T$ as described above, we let $U$ denote the subset of vertices of $V(G)$ that were not deleted while colouring $T$. Recall the set $T'$ defined in the procedure. We distinguish two cases.

First suppose $T'=\emptyset$. Then every $t\in T$ received a colour that does not appear in the list $L^*(u)$ for at least one vertex $u\in V(G)$ that was not yet deleted from the graph at the moment $t$ was coloured. Consequently, the size of some set $A_u$ increases by~$1$ whenever a vertex of $T$ receives a colour. Recall that a vertex $u\in U$ is deleted from the graph as soon as the size of $A_u$ reaches $\deg_G(u)-k+1$. Since $|T|= k^*-k=\sum_{u\in V(G)}(\deg_G(u)-k+1)$, every vertex of $V(G)$ is deleted from the graph at some point during the procedure. Hence $U=\emptyset$, implying that $G^*$ is $k^*$-choosable due to the correctness of our procedure. 

Now suppose $T'\neq \emptyset$ and let $t'\in T'$. Because $|L^*(t')|=k^*$ and $|T|=k^*-k$, the list $L^*(t')$ contains a set $D$ of $k$ colours that are not used as a colour for any vertex in $T$ (including $t'$ itself). We will show that $D\subseteq L^*(u)$ for every $u\in U$. For contradiction, suppose there exists a colour $y\in D$ and a vertex $w\in U$ such that $y\notin L^*(w)$. By the definition of $T'$, vertex $t'$ received a colour $z$ that appears in the list $L^*(u)$ for every $u\in U$. But according to our procedure, we would not have coloured $t'$ with colour $z$ if colour $y$ was also available; note that $y$ is not used to colour any vertex in $T\setminus \{t'\}$ by the definition of $D$. This yields the desired contradiction, implying that $D\subseteq L^*(u)$ for every $u\in U$. By symmetry of the colours, we may assume that $D=\{1,\ldots,k\}$. We assumed that $G$ is $k$-colourable, so $G$ has a colouring $c:V(G)\to \{1,\ldots,k\}$, and we can safely assign colour $c(u)$ to each $u\in U$. Due to this and the correctness of our procedure, we conclude that $G^*$ is also $k$-choosable when $T'\neq \emptyset$.
\qed
\end{proof}

\end{document}